\numberwithin{equation}{section}
\newtheorem{theorem}{Theorem}[section]
\newtheorem{lemma}[theorem]{Lemma}
\newtheorem{proposition}[theorem]{Proposition}
\newtheorem{remark}[theorem]{Remark}
\newtheorem{definition}[theorem]{Definition}
\theoremstyle{definition}
\newcommand{\C}{\mathbb{C}}
\newcommand{\R}{\mathbb{R}}
\newcommand{\E}{\mathbb{E}}
\newcommand{\D}{\mathbb{D}}
\newcommand{\hf}{\frac{_1}{^2}}
\newcommand{\thf}{\tfrac{1}{2}}
\def\eps{\epsilon}
\newcommand{\bx}{\textbf{x}}
\newcommand{\by}{\textbf{y}}
\newcommand{\bl}{\textbf{l}}
\newcommand{\caB}{{\mathcal B}}
\newcommand{\caC}{{\mathcal C}}
\newcommand{\caD}{{\mathcal D}}
\newcommand{\caF}{{\mathcal F}}
\newcommand{\caH}{{\mathcal H}}
\newcommand{\caM}{{\mathcal M}}
\newcommand{\caO}{{\mathcal O}}
\newcommand{\caT}{{\mathcal T}}
\newcommand{\caV}{{\mathcal V}}
\newcommand{\opn}{\operatorname}
\newcommand{\rs}{{\hat \C}}
\newcommand{\de}{\tfrac{d}{d\eps} \big|_{\eps=0}}
\begin{document}


\title[Stress-Energy in Liouville Conformal Field Theory]{Stress-Energy in Liouville Conformal Field Theory}

\author[Antti Kupiainen]{Antti Kupiainen$^{1}$}
\address{University of Helsinki, Department of Mathematics and Statistics,
         P.O. Box 68 , FIN-00014 University of Helsinki, Finland}
\email{antti.kupiainen@helsinki.fi}

\author[Joona Oikarinen]{Joona Oikarinen$^{1}$}
\address{University of Helsinki, Department of Mathematics and Statistics,
         P.O. Box 68 , FIN-00014 University of Helsinki, Finland}
\email{ joona.oikarinen@helsinki.fi}

\footnotetext[1]{Supported by the Academy of Finland and ERC Advanced Grant 741487}
\begin{abstract}
We construct the stress-energy tensor correlation functions
 in probabilistic Liouville Conformal Field Theory (LCFT) on the two-dimensional sphere $\mathbb{S}^2$ by studying the variation of the LCFT correlation functions with respect to a smooth  Riemannian metric on  $\mathbb{S}^2$. In particular we derive conformal Ward identities for these correlation functions. This forms the basis for the   construction of a  representation of the Virasoro algebra on the canonical Hilbert space of the LCFT.
  In \cite{ward} the conformal Ward identities were derived for one and two stress-energy tensor insertions using a different definition of the stress-energy tensor and Gaussian integration by parts. By defining the stress-energy correlation functions as functional derivatives of the LCFT correlation functions and using the smoothness of the LCFT correlation functions proven in \cite{Oik} allows us to control an arbitrary number of  stress-energy tensor insertions needed for representation theory.
\end{abstract}

\maketitle

\tableofcontents

\section{Introduction and Main result}


\subsection{Local conformal symmetry} Two-dimensional conformal field theory (CFT) is characterized by local conformal symmetry, an infinite dimensional symmetry that strongly constrains the theory. 
A  formulation of this symmetry can be summarised as follows \cite{Gaw}. The basic data of CFT are {\it correlation functions} 
\begin{align}
\langle \prod_{i=1}^NV_{\alpha_{i}}(x_{i})\rangle_{\Sigma,g}
\label{correlat}
\end{align}
of {\it primary fields} $V_{\alpha}(z)$ defined on  a compact  two-dimensional surface $\Sigma$ equipped with a smooth Riemannian metric $g$. In a probabilistic formulation of CFT the angular bracket $\langle\cdot\rangle_{\Sigma,g}$ is an { expectation} in a suitable positive (not necessarily probability) measure and the primary fields often are (distribution valued) random fields. 

The  local conformal symmetry arises from the transformation properties of the correlation functions under the action of the groups of smooth diffeomorphisms and local Weyl transformations of the metric. The former acts by pullback on the metric  $g \mapsto \psi^\ast g$ and  the latter acts by a local scale transformation $g \mapsto e^\varphi g$ with $\varphi\in C^\infty(\Sigma,\R)$. In axiomatic CFT one postulates\footnote{for scalar fields $V_{\alpha}$}
\begin{align}
\langle\prod_{i=1}^N V_{\alpha_{i}}(x_{i})\rangle_{\Sigma,\psi^{\ast}g} &= \langle \prod_{i=1}^NV_{\alpha_{i}}(\psi(x_{i}))\rangle_{\Sigma,g}\label{diffeo}\\
\langle \prod_{i=1}^N V_{\alpha_{i}}(x_{i})\rangle_{\Sigma,e^{\varphi}g}&=e^{cA(\varphi,g)}\prod_{i=1}^N e^{-\Delta_{\alpha_{i}}\varphi(x_{i})}\langle \prod_{i=1}^N V_{\alpha_{i}}(x_{i})\rangle_{\Sigma,g}\label{weyl}
\end{align}
where the {\it conformal anomaly} is given by
\begin{align}
A(\varphi,g)&=\tfrac{1}{96\pi}\int( |\nabla_{g}\varphi|^{2}+2R_{g}\varphi)dv_{g}
\label{anomaly}
\end{align}
and the constant $c$ is the central charge of the CFT, which in our case will belong to the interval $(25,\infty)$. The number $\Delta_{\alpha}$ is called the conformal weight of the field $V_{\alpha}$. We denoted by $v_g$ the Riemannian volume measure and by $R_{g}$ the curvature scalar (see Appendix).

The {\it stress-energy tensor} field $T_{\mu\nu}(z)$ is a symmetric 2 by 2 complex matrix valued field defined indirectly through the formal variation of the (inverse of the) metric at a point $z \in \Sigma$ in the correlation function \eqref{correlat} (the precise definition can be found in Section \ref{se_section}): 
\begin{align}\label{Tcorre}
\langle \prod_{i=1}^n T_{\mu_i \nu_i}(z_i) \prod_{j=1}^N V_{\alpha_{j}}(x_{j}) \rangle_{\Sigma,g} &:= (4 \pi)^n \prod_{i=1}^n \frac{\delta }{\delta g^{\mu_i \nu_i}(z_i)}  \langle \prod_{j=1}^N V_{\alpha_{j}}(x_{j}) \rangle_{\Sigma,g}  \,,
\end{align}
where $g^{\mu_i \nu_i}$ denotes a component of the inverse of the metric $g$. The functions on the right-hand side of \eqref{Tcorre} turn out to be analytic or anti-analytic in the variables $z_i$ in the complex coordinates \eqref{ccoordinates} as long as $z_i \neq z_j$, $x_i \neq x_j$ for $i\neq j$ and $z_i \neq x_j$ for all $i$ and $j$. These functions diverge when two variables merge but for certain choices of the indices $\mu_i$ and $\nu_i$ the functions turn out to be meromorphic with poles described by the conformal Ward identities. Let us specialize to the case of the sphere, $\Sigma={\mathbb S}^2$. Then every smooth metric $g$ can be obtained from a given one $\hat g$ by the action of diffeomorphisms and Weyl transformations:
\begin{align*}
g=e^\varphi\psi^\ast \hat g.
\end{align*}
This fact together with the symmetries \eqref{diffeo} and \eqref{weyl} yields the tools for defining and computing the functional derivatives on the right-hand side of \eqref{Tcorre}. 
The result is a recursive formula, 
the {\it conformal Ward identity}, that allows to express \eqref{Tcorre} in terms of derivatives (in the $x_i$'s) of \eqref{correlat}. 

Ward identities take an especially simple form in complex coordinates. Recall that a Riemannian metric determines a {\it complex structure} on $\Sigma$: a system of local coordinates where the metric takes the form
\begin{align*}
\hat g=\thf e^\sigma(dz\otimes d\bar z+d\bar  z\otimes dz).
\end{align*}
In such coordinates consider the $zz$-component of the stress-energy, 
\begin{align*}
T_{zz}(z)=\tfrac{1}{4}(T_{11}-T_{22}-2iT_{12})
\end{align*}
where $\{T_{ij}\}_{i,j=1}^2$ are the components of $T$ in the Euclidean coordinates of the plane. We define
\begin{align}
T(z)=T_{zz}(z)+\frac{c}{12}t(z)
\label{newT}
\end{align}
where 
\begin{align}\label{tcorre}
t(z) &=  \partial_z^2 \sigma(z)-\hf (\partial_z \sigma(z))^2.
\end{align}
Then for distinct points $\{z_i,x_j\}$ the Ward identity in the case $\Sigma = \mathbb{S}^2$ reads 
\begin{align}\nonumber
\langle \prod_{k=1}^n T(z_k) \prod_{i=1}^N V_{\alpha_i}(x_i) &\rangle_{\mathbb{S}^2,g} =\hf 
\sum_{j=2}^n  \frac{c}{(z_1-z_j)^4} 
 \langle \prod_{k\neq 1,j}^n T(z_k) \prod_{i=1}^N V_{\alpha_i}(x_i) \rangle_{\mathbb{S}^2,g} \\
& \quad + \sum_{j=2}^n \left( \frac{2}{(z_1-z_j)^2} + \frac{\partial_{z_j}}{z_1-z_j} \right) \langle \prod_{k=2}^n T(z_k) \prod_{i=1}^N V_{\alpha_i}(x_i) \rangle_{\mathbb{S}^2,g} \nonumber \\
& \quad + \sum_{j=1}^N \left(  \frac{\Delta_{\alpha_j}}{(z_1-x_j)^2}+ \frac{\Delta_{\alpha_j}\partial_z\sigma(x_j)}{z_1-x_j}+\frac{1}{z_1-x_j}\partial_{x_j}\right) \langle \prod_{k=2}^n T(z_k) \prod_{i=1}^N V_{\alpha_i}(x_i) \rangle_{\mathbb{S}^2,g}.\label{wardid}
\end{align}
The definition \eqref{newT} was already introduced in \cite{EgOo}, and it is natural in the sense that it makes $T(z)$ meromorphic also in the regions where there are curvature, which is essential for the Virasoro algebra discussed in Section \ref{prospects}. Iterating this identity, the left-hand side of \eqref{Tcorre} in the case $(\mu_i,\nu_i)=(z,z)$ will be expressed in terms of the functions $\eqref{correlat}$ and their derivatives in the $x_i$'s. 
A corollary of this identity is that the functions  \eqref{Tcorre}  are holomorphic in the variables $\{z_i\}$ in the region $\{z_i \neq z_j : i \neq j\} \cap \{z_i \neq x_j : \forall \, i\,, j\}$.

The conformal Ward identities have been studied before with theoretical physics level of rigour. For a flat background metric the conformal Ward identities were initially derived in \cite{BPZ}. 
For a general metric and surface  the identities were derived in \cite{EgOo}, where also a term dealing with variation of the moduli of the surface appears. This term originates from the fact that compact Riemann surfaces with positive genus have non-trivial moduli spaces, so variation of the metric can also vary the conformal class of the surface.

\subsection{Path Integrals and Liouville Conformal Field Theory }

In Constructive Quantum Field Theory one attempts to construct the expectation 
as a path integral
\begin{align}\label{path_integral}
\langle F \rangle_{\Sigma,g} &:=  \int F(\phi) e^{-S(\phi,g)} \, D \phi\,,
\end{align}
over some space of fields $\phi:\Sigma\to \R$ (in the scalar case). The symmetries \eqref{diffeo} and \eqref{weyl} should then arise from the corresponding symmetries of the action functional $S$ with the anomaly \eqref{anomaly} arising from the singular nature of the integral in \eqref{path_integral}. A case where this program can be carried out is the
Liouville Conformal Field Theory (LCFT hereafter) which was introduced in 1981 by Polyakov \cite{Pol} in the context of developing a path integral theory for two-dimensional Riemannian metrics. 

Liouville field theory is described by the \emph{Liouville action functional}, which for $\gamma \in (0,2)$ and $\mu>0$ is given by
\begin{align}\label{action}
S_L(\phi,g) &= \tfrac{1}{\pi} \int_\Sigma \left( |\nabla_g \phi(z)|^2 + \tfrac{Q}{4} R_g(z) \phi(z) + \pi \mu e^{\gamma \phi(z)}  \right)  dv_g(z)\,.
\end{align}
The term $Q$ is  given by
$$Q = \tfrac{2}{\gamma} + \tfrac{\gamma}{2}.$$ 
The primary fields for LCFT are the \emph{vertex operators}
\begin{align*}
V_\alpha(z)=e^{\alpha\phi(z)}
\end{align*}
where $\alpha\in\C$. Their conformal weights are given by 
$$\Delta_\alpha = \tfrac{\alpha}{2}(Q-\tfrac{\alpha}{2}).$$

A rigorous construction of the path integral, and in particular the correlation functions of the vertex operators, was given 
 in \cite{DKRV} and will be recalled in Section \ref{covariant_lcft_section} in the present setup. 
 In  \cite{ward}  the conformal Ward identities \eqref{wardid} were derived in the case of one or two $T$-insertions ($n=1,2$). 
   Instead of deriving the conformal Ward identities by varying the background metric, the authors of \cite{ward} defined (the $zz$-component of) the stress-energy tensor directly as the field
\begin{align}\label{fieldem}
T_{zz}(x) &= Q \partial_x^2 \phi(x) - (\partial_x \phi(x))^2
\end{align}
and computed the correlation functions  \eqref{Tcorre} for $n=1,2$
for a specific metric by Gaussian integration by parts. Generalizing this approach to arbitrary $n$ was obstructed by a lack of proof of smoothness of the correlation functions  \eqref{correlat} (which was later proven \cite{Oik}) and the difficulty of simplifying the expressions coming from the integration by parts. It is however necessary to have \eqref{wardid} for arbitrary $n$ in order to construct the representation of the Virasoro algebra for LCFT. This is the motivation and the objective of the present paper.  Its main technical input is the recent proof of smoothness of the LCFT correlation functions by the second author \cite{Oik}. 

\subsection{Main result} Our main  result is a proof of the conformal Ward identities for arbitrarily many $T$-insertions for arbitrary metrics on the sphere by varying the background metric. The precise result is formulated in Propositions \ref{weyl_anomaly}, \ref{smoothness} and \ref{ward_prop}. 

\begin{theorem}\label{main_results}
Let $(x_1,\hdots,x_N) \in (\mathbb{S}^2)^N$, with $N \geq 3$, be a tuple of non-coinciding points on the two-dimensional sphere and assume that the real numbers $\alpha_1,\dots,\alpha_N$ satisfy the Seiberg bounds. The LCFT correlation functions \eqref{correlat} are smooth functions with respect to the Riemannian metric $g$, and they satisfy the diffeomorphism and Weyl symmetries \eqref{diffeo}, \eqref{weyl}. The derivatives \eqref{Tcorre} exist and are smooth in $z_i,x_j$ in the region of non-coinciding points. The correlations for the
field $T$  defined by \eqref{newT} satisfy the Ward identities \eqref{wardid}.

\end{theorem}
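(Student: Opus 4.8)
The plan is to combine the probabilistic construction of the LCFT correlation functions recalled in Section \ref{covariant_lcft_section} with the two exact symmetries \eqref{diffeo}, \eqref{weyl} and the smoothness theorem of \cite{Oik}. The decisive structural fact is that $\mathbb{S}^2$ carries no moduli: after fixing a reference metric $\hat g$ in the given conformal class, every smooth metric is $g=e^\varphi\psi^\ast\hat g$, and at the infinitesimal level every symmetric $2$-tensor $h=\delta g$ decomposes as
\[
h_{\mu\nu}=f\,g_{\mu\nu}+(\mathcal{L}_v g)_{\mu\nu},
\]
with $f$ the conformal part of the variation and $v$ a vector field; the trace-free part always lies in the image of the conformal Killing operator because there is no nonzero holomorphic quadratic differential on the sphere (this is exactly why no ``moduli term'' of the kind appearing in \cite{EgOo} is present here). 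Consequently the functional derivative $\tfrac{\delta}{\delta g^{\mu\nu}(z)}$ in \eqref{Tcorre} can be evaluated by the chain rule from the $\varphi$- and $\psi$-dependence of $\langle\prod_i V_{\alpha_i}(x_i)\rangle_{\mathbb{S}^2,\,e^\varphi\psi^\ast\hat g}$, which \eqref{diffeo} and \eqref{weyl} make explicit in terms of the anomaly $A(\varphi,g)$, the weights $\Delta_{\alpha_i}$, and the correlation at $\hat g$ with points displaced by $\psi$.

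First I would establish the Weyl covariance \eqref{weyl} (Proposition \ref{weyl_anomaly}) directly from the construction: a Weyl change $g\mapsto e^\varphi g$ shifts the Gaussian field, rescales the Gaussian multiplicative chaos measure, and changes the zeta-regularised determinant, and these three effects combine into exactly $e^{cA(\varphi,g)}$ and the factors $e^{-\Delta_{\alpha_i}\varphi(x_i)}$. Diffeomorphism covariance \eqref{diffeo} is immediate from the coordinate-free nature of the construction. Smoothness of \eqref{correlat} in $g$ (Proposition \ref{smoothness}) then follows by writing $g=e^\varphi\psi^\ast\hat g$ and combining the smooth dependence of $A(\varphi,g)$, of the conformal factors, and of $\psi$ on the data with the smoothness of the correlation at $\hat g$ in the insertion points, which is the content of \cite{Oik} (and here the Seiberg bounds guarantee the underlying chaos moment is finite and smooth). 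The base case $n=1$ of the Ward identity is then obtained by applying the decomposition above with $h$ a point variation at $z_1$: the $f g_{\mu\nu}$ direction produces, via \eqref{weyl}, the $\Delta_{\alpha_j}$-poles together with the central-charge contribution, while the $\mathcal{L}_v g$ direction produces, via \eqref{diffeo}, the $\tfrac{1}{z_1-x_j}\partial_{x_j}$ terms; the non-meromorphic remainder coming from the $\varphi$-derivative of the anomaly and the conformal factors is precisely cancelled by the shift $\tfrac{c}{12}t(z)$ in \eqref{newT}, using \eqref{tcorre}, reproducing the one-insertion identity of \cite{ward}.

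For $n\geq 2$ I would argue by induction on the number of $T$-insertions. The inductive hypothesis is that the $(n-1)$-insertion correlation $\langle\prod_{k=2}^n T(z_k)\prod_i V_{\alpha_i}(x_i)\rangle_{\mathbb{S}^2,g}$ — and, more usefully, the whole family generated by iterating \eqref{Tcorre} — is a finite sum of terms of the form (explicit function of $g$, of $\sigma$ and its $z$-derivatives at the $z_k$, and of the points $z_k,x_j$) times an LCFT correlation of vertex operators at the $x_j$, each such term being smooth in all its arguments. Applying $4\pi\,\delta/\delta g^{zz}(z_1)$ to every such term and again decomposing the metric variation into its Weyl and Lie parts: differentiation of the correlation factor reproduces the $V$-insertion poles exactly as in the base case, while differentiation of the explicit prefactor — the anomaly contributions and especially the $t(z_k)$'s through their $\sigma$-dependence, equivalently the anomalous (Schwarzian) action of the diffeomorphism $v$ on the remaining $T(z_k)$ — produces the terms $\tfrac{c}{2}(z_1-z_j)^{-4}$ and $\big(2(z_1-z_j)^{-2}+(z_1-z_j)^{-1}\partial_{z_j}\big)$. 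The smoothness from \cite{Oik} is what legitimises differentiating under all regularising limits, interchanging the successive functional derivatives, and hence concluding that \eqref{Tcorre} exists and is smooth. Collecting the terms and checking that they assemble into the right-hand side of \eqref{wardid} closes the induction (Proposition \ref{ward_prop}), and holomorphy is read off the resulting meromorphic expression.

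The main obstacle is the bookkeeping in the inductive step: one must track how the metric-dependent prefactors accumulated at earlier stages — the anomaly pieces and, above all, the conformal factor $\sigma$ entering through the $t(z_k)$'s — transform under the next variation of $g$, and verify that the many resulting terms regroup with precisely the coefficients in \eqref{wardid}. In particular the central term $(z_1-z_j)^{-4}$ is produced only by the anomalous part of the variation acting on the other $T$-insertions — exactly the piece that the shift \eqref{newT} is designed to render covariant (Schwarzian transformation law) — and not by $T_{zz}$ alone, so the organisation of these terms is delicate. Equally essential, though more routine once \cite{Oik} is invoked, is the rigorous justification that the higher functional derivatives in \eqref{Tcorre} genuinely exist, are symmetric in the insertions, and may be computed by the chain-rule procedure above; none of this is available without the smoothness input, which is why that result is the technical heart of the argument.
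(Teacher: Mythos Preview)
Your proposal is correct and follows essentially the same strategy as the paper: exploit the uniformisation $g=e^\varphi\psi^\ast\hat g$ on $\mathbb{S}^2$, use the Weyl/diffeomorphism covariance proven from the probabilistic construction, and feed in the smoothness result of \cite{Oik} to justify the functional derivatives. The organisational difference is this: you frame the Ward identity as an induction on $n$ in which the inductive hypothesis records the explicit form of the $(n{-}1)$-insertion correlation, whereas the paper first isolates as a separate lemma (Proposition~\ref{tinsert}) the transformation laws
\[
\langle\textstyle\prod T_{\mu_i\nu_i}(z_i)\prod V_{\alpha_j}\rangle_{\psi^\ast g}=\langle\textstyle\prod \tilde T_{\mu_i\nu_i}(z_i)\prod V_{\alpha_j}\circ\psi\rangle_{g},\qquad
\langle\textstyle\prod T_{\mu_i\nu_i}\prod V_{\alpha_j}\rangle_{e^\varphi g}=e^{cA}\textstyle\prod e^{-\Delta\varphi}\langle\textstyle\prod(T_{\mu_i\nu_i}+a_{\mu_i\nu_i})\prod V_{\alpha_j}\rangle_{g},
\]
with $\tilde T=(D\psi)^T(T\circ\psi)(D\psi)$ and $a_{\mu\nu}$ a local expression in $\varphi,\sigma$. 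This lemma is obtained directly from the \emph{definition} of the $T$-correlations as metric derivatives together with the already-established covariance of the $V$-correlations, and it encodes in one shot exactly what you call the ``Schwarzian action'' on the other insertions. With it in hand, the $n$-point Ward identity is a single computation of $\partial_\epsilon|_0$ applied to the $(n{-}1)$-point $T$-correlation, rather than an induction.

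The advantage of the paper's organisation is that the inductive hypothesis you need is stronger than ``the $(n{-}1)$-Ward identity holds at $\hat g$'': to differentiate $F^{g}_{zz\ldots zz}(z_2,\ldots,z_n)$ in $g^{zz}(z_1)$ you must know its dependence on \emph{non-conformal} perturbations of $\hat g$, not only on $\sigma$, and this is precisely what Proposition~\ref{tinsert} supplies. Your sketch does recognise this (``anomalous (Schwarzian) action of the diffeomorphism $v$ on the remaining $T(z_k)$''), so there is no real gap, but in writing it out you would effectively be reproving that proposition inside the induction. A second, more technical, point you glide over is the construction of $\psi,\varphi$ with controlled smooth dependence on the perturbation: the paper does this via an explicit Neumann-series solution of the Beltrami equation (Proposition~\ref{beltrami_solu}), and this analysis is what actually yields the smoothness in $\epsilon$ and the explicit integral kernels $(z_1-z_k)^{-1}$, $(z_1-z_k)^{-2}$ from the Cauchy and Beurling transforms.
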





The content of the article is as follows. In Section \ref{covariant_lcft_section} we recall the definition of the correlation functions \eqref{correlat} and formulate and prove the diffeomorphism and Weyl covariance of LCFT on a compact surface $\Sigma$. In Section \ref{ward_section} we prove the Ward identities \eqref{wardid} for $\Sigma = \mathbb{S}^2$ and in Section 
 \ref{prospects} we discuss future work 
 on the construction of the Virasoro representation of LCFT. The appendix collects the elementary definitions and notations from Riemannian geometry used in the paper.
 
\vskip 3mm

\noindent{\it Acknowledgements.} We thank Remi Rhodes, Vincent Vargas and Yichao Huang for helpful discussions. The authors would like to thank the Isaac Newton Institute for Mathematical Sciences, Cambridge, for support and hospitality during the programme Scaling limits, rough paths, quantum field theory where work on this paper was undertaken. We also thank the anonymous referees for careful reading of this work and for providing numerous valuable comments on it. The work is supported by  the Academy of Finland and ERC Advanced Grant 741487. The work of J.O. is also supported by DOMAST.

\section{Covariant formulation of LCFT}\label{covariant_lcft_section}

In this section we recall the construction of LCFT correlation functions given in \cite{DKRV} and extend it to include the diffeomorphism covariance \eqref{diffeo}. Similar discussion can be found in \cite{GRV16}, Sections 3 and 4, where the authors work on compact Riemann surfaces with genus $2$ or higher, but the cases of the sphere and the torus work almost the same way. The main mathematical objects appearing in the construction are the \emph{Gaussian Free Field} and \emph{Gaussian Multiplicative Chaos} which we need to define in a covariant way. The appendix collects the elementary definitions and notations from Riemannian geometry used in this section.

\subsection{Gaussian Free Field}

Let $(\Sigma,g)$ be a two-dimensional smooth compact Riemannian manifold and $\Delta_g$ be the Laplace--Beltrami operator. It is well-known that $\Delta_g$ is a positive self-adjoint operator on $L^2(\Sigma,dv_g)$. The set of of orthonormal eigenfunctions $e_{g,n}$, $n=0,1,\dots$,
$$-\Delta_g e_{g,n} = \lambda_{g,n} e_{g,n},$$
is complete in the sense that the $L^2$-closure of $\opn{span}(e_{g,n})_{n =0}^\infty$ is the whole space $L^2(\Sigma,dv_g)$. It holds that $\lambda_{g,n} >0$ for $n>0$  and $\lambda_{g,0} =0$ with $e_{g,0}$ the constant function.

The Gaussian Free Field (GFF) $X_g$ on the Riemannian surface $(\Sigma,g)$ is defined as the random generalised function
\begin{align*}
X_g &= \sqrt{2 \pi } \sum_{n=1}^\infty a_n \frac{e_{g,n}}{\sqrt{\lambda_{g,n}}}\,,
\end{align*}
where $a_n$ are independent and identically distributed standard Gaussians. The series converges in the negative order Sobolev space $H^{-s}(\Sigma,dv_g)$ for any $s > 0$ (see e.g. Section 4.2 of \cite{Dub}). 
The covariance  of $X_g$ has an integral kernel 
\begin{align*}
\E [ (X_g,f)_g (X_g,h)_g ] &= \int G_g(x,y) f(x) h(y) \, dv_g(x) \, dv_g(y)\,,
\end{align*}
where $v_g$ is the volume measure of $g$,
\begin{align*}
G_g(x,y) &= \sum_{n=1}^\infty \frac{e_{g,n}(x) e_{g,n}(y)}{\lambda_{g,n}},
\end{align*}
and $(X_g,f)_g$ denotes the dual bracket, so formally $(X_g,f)_g=\int_\Sigma X_g f \, dv_g$. This justifies the notation
$$\E X_g(x) X_g(y) := G_g(x,y),$$
even though $X_g$ is almost surely not a function. An application of the Plancherel theorem gives
\begin{align}\label{deltag}
-\tfrac{1}{2\pi}\Delta_g G_g(x,y)
=\tfrac{1}{\sqrt{\det g(x)}}\delta(x-y)-\tfrac{1}{v_{g}(\Sigma)},
\end{align}
that is, $G_g(x,y)$ is the Green function of $-\Delta_g$ having zero average on $\Sigma$:
\begin{align*}
\int_\Sigma G_g(x,y)dv_{g}(x)=0=\int_\Sigma G_g(x,y)dv_{g}(y).
\end{align*}
Define the average in the metric $g$ as
\begin{align*}
m_{g}(f):=
\tfrac{1}{v_{g}(\Sigma)}\int_\Sigma f(z) dv_{g}(z).
\end{align*}
Then the GFF satisfies the following covariance under diffeomorphisms and Weyl transformations:
\begin{proposition}\label{gffinva}
(a) Let $\psi\in{\rm Diff}(\Sigma)$. Then
\begin{align}
X_g\circ\psi\overset{law}{=}X_{\psi^{\ast}g}.
\label{gff1a}
\end{align}
(b) Let $g$ and $g'$ be conformally equivalent, that is, $g' = e^\varphi g$ with $\varphi \in C^\infty(\Sigma,\R)$. Then
\begin{align}\label{change11}
X_{g'}\stackrel{law}{=}X_{g}-m_{g'}(X_g).
\end{align}
\end{proposition}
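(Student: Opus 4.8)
The plan is to reduce both identities to a computation with covariance kernels. Since $X_g\circ\psi$ (the pullback of the distribution $X_g$, which is a well-defined $H^{-s}(\Sigma)$-valued random variable because precomposition with the smooth diffeomorphism $\psi$ is continuous on $H^{-s}(\Sigma)$), $X_{\psi^\ast g}$, $X_g-m_{g'}(X_g)$ and $X_{g'}$ are all centered Gaussian generalised functions, in each case it suffices to check that the covariance kernels agree. I will use throughout that $G_g$ is the \emph{unique} distribution on $\Sigma\times\Sigma$ satisfying \eqref{deltag} in one variable and having vanishing $g$-average in that variable: two such kernels differ by an $x$-harmonic function, hence by a constant since $\Sigma$ is compact, which the zero-average condition forces to be $0$.

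For part (a) I would work directly with the series defining the GFF. From the naturality $\Delta_{\psi^\ast g}(F\circ\psi)=(\Delta_gF)\circ\psi$ and the change of variables $\int_\Sigma(F\circ\psi)\,dv_{\psi^\ast g}=\int_\Sigma F\,dv_g$, the functions $(e_{g,n}\circ\psi)_{n\ge1}$ form a complete orthonormal system of eigenfunctions of $-\Delta_{\psi^\ast g}$ on $L^2(\Sigma,dv_{\psi^\ast g})$ with the same eigenvalues $\lambda_{g,n}$. As the law of the GFF does not depend on the choice of orthonormal eigenbasis, I may take $e_{\psi^\ast g,n}:=e_{g,n}\circ\psi$; the defining series of $X_{\psi^\ast g}$ is then literally $\big(\sqrt{2\pi}\sum_{n\ge1}a_n\lambda_{g,n}^{-1/2}e_{g,n}\big)\circ\psi$, which has the same law as $X_g\circ\psi$. (Equivalently, at the covariance level one checks $G_{\psi^\ast g}(x,y)=G_g(\psi(x),\psi(y))$ by noting that the invariant delta $\tfrac{1}{\sqrt{\det g(x)}}\delta(x-y)$ and the total volume $v_g(\Sigma)$ are diffeomorphism invariant, so the right-hand side solves \eqref{deltag} for $\psi^\ast g$ with zero $\psi^\ast g$-average, and applying uniqueness.)

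For part (b) I would set $Y:=X_g-m_{g'}(X_g)$, a centered Gaussian field, and show $\E[Y(x)Y(y)]=G_{g'}(x,y)$. By the definition of $m_{g'}(\cdot)$ one has $\int_\Sigma Y\,dv_{g'}=0$ almost surely, so the covariance kernel of $Y$ has zero $g'$-average. Expanding the covariance,
\[
\E[Y(x)Y(y)]=G_g(x,y)-\rho(x)-\rho(y)+\kappa,\qquad\rho(x):=\tfrac{1}{v_{g'}(\Sigma)}\int_\Sigma G_g(x,z)e^{\varphi(z)}\,dv_g(z),
\]
with $\kappa$ a constant. I then use the two-dimensional conformal covariance $\Delta_{g'}=e^{-\varphi}\Delta_g$, together with $\sqrt{\det g'}=e^{\varphi}\sqrt{\det g}$ and hence $v_{g'}(\Sigma)=\int_\Sigma e^{\varphi}\,dv_g$, and apply $-\tfrac{1}{2\pi}\Delta_{g'}$ in the variable $x$: the $G_g$-term contributes $\tfrac{1}{\sqrt{\det g'(x)}}\delta(x-y)-\tfrac{e^{-\varphi(x)}}{v_g(\Sigma)}$, the $-\rho(x)$-term contributes $-\tfrac{1}{v_{g'}(\Sigma)}+\tfrac{e^{-\varphi(x)}}{v_g(\Sigma)}$, and the $-\rho(y)$- and $\kappa$-terms drop out. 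Adding, the $e^{-\varphi(x)}$ pieces cancel and what remains is $\tfrac{1}{\sqrt{\det g'(x)}}\delta(x-y)-\tfrac{1}{v_{g'}(\Sigma)}$, i.e.\ \eqref{deltag} for $g'$. Uniqueness then gives $\E[Y(x)Y(y)]=G_{g'}(x,y)$, so $Y$ and $X_{g'}$ are centered Gaussians with the same covariance and \eqref{change11} follows.

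The individual computations are routine; the points that need care are the distributional change-of-variables identities in part (a) and, in part (b), keeping track of which volume measure ($v_g$ or $v_{g'}$) each integral and each $\delta$-normalisation refers to. The one genuinely structural phenomenon---and the step I would flag as the heart of the argument---is the cancellation in part (b): the naive conformal change of $G_g$ generates the ``wrong'' constant $-e^{-\varphi(x)}/v_g(\Sigma)$, and subtracting the $g'$-mean $m_{g'}(X_g)$ is exactly the correction that turns it into the $-1/v_{g'}(\Sigma)$ demanded by \eqref{deltag}; this is also why the statement features $m_{g'}(X_g)$ rather than $m_g(X_g)$. (It is also worth noting at the outset that $m_{g'}(X_g)=(X_g,e^{\varphi})_g/v_{g'}(\Sigma)$ is a genuine Gaussian random variable since $e^{\varphi}\in C^\infty(\Sigma)$.)
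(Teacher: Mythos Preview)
Your proposal is correct and follows essentially the same approach as the paper: part (a) via the naturality of the Laplacian and its eigenfunctions under pullback, and part (b) by computing the covariance of $X_g-m_{g'}(X_g)$, applying $-\tfrac{1}{2\pi}\Delta_{g'}$ using $\Delta_{g'}=e^{-\varphi}\Delta_g$, and identifying the result with the defining equation for $G_{g'}$. Your explicit statement of the uniqueness argument (harmonic on compact $\Sigma$ plus zero average) is a welcome addition that the paper leaves implicit.
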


\proof
(a) Follows from covariance of the  Laplacian:
\begin{align*}
\psi^\ast\Delta_g(\psi^{-1})^\ast=\Delta_{\psi^\ast g}
\end{align*}
where $\psi^\ast$ acts on functions by $\psi^\ast f=f\circ\psi$. Hence
$\psi^\ast e_{n,g}=e_{n,\psi^\ast g}
$ 
from which \eqref{gff1a} follows.

(b) We have $g'=e^{\varphi}g$ for some $\varphi \in C^\infty(\Sigma,\R)$. Since $m_{g'}(X_g)=\frac{1}{v_{g'}(\Sigma)}(X_g,e^{\varphi})_{g}$ the field $X=X_{g}-m_{g'}(X_g)$ has covariance
\begin{align*}
\E X(z)X(z')&=G_g(z,z')-\tfrac{1}{v_{g'}(\Sigma)} \Big(\int G_g(z,x)dv_{g'}(x)+\int G_g(x,z')dv_{g'}(x) \Big)\\& \quad +\tfrac{1}{v_{g'}(\Sigma)^{2}}\int G_g(x,y)dv_{g'}(x)dv_{g'}(y).
\end{align*}
Since $\Delta_{e^{\varphi}g}=e^{-\varphi}\Delta_g $ we get from \eqref{deltag}
\begin{align*}
-\tfrac{1}{2\pi}\Delta_{g'} G_g(z,z')
=\tfrac{1}{\sqrt{\det g'(z)}}\delta(z-z')-\tfrac{1}{v_{g}(\Sigma)}e^{-\varphi}
\end{align*}
and thus
\begin{align*}
-\tfrac{1}{2\pi}\Delta_{g'} \E X(z)X(z')=\tfrac{1}{\sqrt{\det g'(z)}}\delta(z-z')-\tfrac{1}{v_{g'}(\Sigma)}.
\end{align*}
This implies that $\E X(z)X(z') = G_{g'}(z,z')$, meaning that  $X$ has the same covariance as $X_{g'}$. Since the fields are Gaussian, the equality in distribution follows. \qed

\vskip 2mm

Choose now a local conformal coordinate $z$ on $U\subset\Sigma$ so that the metric is
\begin{align}\label{conformalcoord}
g(z)=\tfrac{1}{2}e^{\sigma(z)}(dz\otimes d\bar z+d\bar z\otimes dz).
\end{align}
Then the equation \eqref{deltag} becomes 
\begin{align}\label{deltag2}
-\tfrac{1}{2\pi}\Delta G_g(x,y)
=\delta(x-y)-\tfrac{1}{v_{g}(\Sigma)}e^\sigma
\end{align}
where $\Delta$ is the standard Laplacian. Hence,
for $z,z'\in U$ we have
\begin{align}\label{gff_correlation}
G_g(z,z') &= \ln \frac{1}{|z-z'|} +h_g(z,z')
\end{align}
where $h_g$ is a smooth function ensuring that $G_g$ has zero $v_g$-mean over $\Sigma$ both in $z$ and $z'$. For later purpose we note that if $\psi$ is conformal on $U$ then
\begin{align*}
G_g(\psi(z),\psi(z'))=G_{\psi^*g}(z,z')=\ln \frac{1}{|z-z'|} +h_{\psi^*g}(z,z')
\end{align*}
so that 
\begin{align}\label{h_conformal}
h_g(\psi(x),\psi(x)) &= h_{\psi^*g}(x,x) + \ln |\psi'(x)|\,.
\end{align}
In particular for $\Sigma=\mathbb{S}^2$ we take $U=\C$ and have
\begin{align}\label{covg1}
h_{g}(z,z') 
=\tfrac{1}{v_{g}(\C)^2}\int_{\C}\int_{\C} \ln\frac{|z-u||z'-u|}{|u-v|}d v_g(u)dv_g(v).
\end{align}

\subsection{Gaussian Multiplicative Chaos}

Next, we want to define the measure $e^{\gamma X_g} dv_g$.  
We regularise the GFF by setting
\begin{align}\label{circle_average}
X_{g,N}(z) &= \sqrt{2 \pi} \sum_{n=1}^N \frac{a_n}{\sqrt{\lambda_n}} e_{g,n}(z)\,.
\end{align}
Then
 \begin{align*}
\E e^{\gamma X_{g,N}(x)}=e^{\frac{\gamma^{2}}{2}\E X_{{ g},N}(x)^{2}}\to\infty
\end{align*}
 as $N\to\infty$. Hence it is natural to consider
 the measure 
\begin{align*}
dm_{\gamma,g,N}(x)=e^{\gamma X_{g,N}(x)-\frac{\gamma^{2}}{2}\E X_{{ g},N}(x)^{2}}dv_g(x).
\end{align*}
If $\psi\in {\rm Diff(\Sigma)}$  then the properties $\psi^* e_{g,n} = e_{ \psi^*g,n}$ and $\lambda_{\psi^* g,n} = \lambda_{g,n}$ imply 
$$\psi^* X_{g,N} = X_{\psi^* g,N}.$$
Hence
\begin{align}
\psi^{\ast} m_{\gamma,g,N}=m_{\gamma, \psi^{\ast}g,N}.
\label{actio00000}
\end{align}
Let $\Sigma_{N}$ be the sigma-algebra generated by $a_{1},\dots,a_{N}$. Then, for $M<N$ and any continuous $f:\Sigma\to\R$
\begin{align*}
\E(\int_\Sigma f d m_{\gamma,g,N}|\Sigma_{M})=\int_\Sigma f d m_{\gamma,g,M},
\end{align*}
that is, the integrals against continuous functions are martingales. This leads to the  almost sure existence of the weak limit 
\begin{align*}
m_{\gamma,g}=\lim_{N\to\infty}m_{\gamma, g,N}
\end{align*}
and the limiting measure is non-trivial for $\gamma < 2$ (see e.g. \cite{Ber}) which is the origin of the parameter range $\gamma \in (0,2)$ that we mentioned in the introduction. The critical value $\gamma=2$ also leads to a non-trivial measure (see e.g. \cite{DRSV}), but requires a different renormalisation so we choose not to include it. The limiting measure is an instance of the {\it Gaussian Multiplicative Chaos} and it inherits the property \eqref{actio00000}
\begin{align}
\psi^{\ast} m_{\gamma,g}=m_{\gamma, \psi^{\ast}g}.
\label{actio000000}
\end{align}

\subsection{Weyl invariance}

To have the Weyl transformation law for LCFT we need to modify the chaos measure a bit. 
Fix conformal coordinates so that \eqref{conformalcoord} holds and consider the {\it circle average regularization} of $X_{g}$ given by
\begin{equation}\label{circfreg}X_{g,\epsilon}(z):=\tfrac{1}{2\pi}\int_0^{2\pi}X_g(z+\epsilon e^{i\theta})\,d\theta,
\end{equation}
see e.g. Lemma 3.2. in \cite{GRV16} for the precise definition of the circle average.
From \eqref{covg1} 
 and 
$$\int_0^{2\pi}\int_0^{2\pi}\ln |e^{i\theta}-e^{i\theta'}|\,d\theta d\theta'=0$$
 we deduce  
 \begin{equation}\label{circlegreen}
\E X_{g,\epsilon} (z)^2=\ln \epsilon^{-1}+h_g(z,z)+o(1),
\end{equation}
where $o(1)$ stands for terms that vanish as $\epsilon \to 0$.
It is known that the limit
\begin{align*}
\lim_{\epsilon\to 0}\epsilon^{\frac{_{\gamma^2}}{^2}}  e^{\gamma X_{{{ g}},\epsilon}(z)}
d^{2}z
\end{align*}
exists in the sense of weak convergence in probability (see e.g. \cite{Ber}). By uniqueness of the Gaussian Multiplicative Chaos measure (see \cite{Ber}), we have the equality
\begin{align*}
m_{\gamma,g}\overset{law}{=}e^{\sigma(z)}e^{-\frac{\gamma^2}{2}h_{{g}}(z,z)}\lim_{\epsilon\to 0}\epsilon^{\frac{_{\gamma^2}}{^2}}  e^{\gamma X_{{{ g}},\epsilon}(z)}
d^{2}z,
\end{align*}
where $d^2z$ denotes the two-dimensional Lebesgue measure. Let
\begin{align}\label{rhogamma}
 \rho_{\gamma,g}(z):=e^{\frac{_{ \gamma^2}}{^4}\sigma(z)+\frac{\gamma^2}{2}h_{g}(z,z)}%
\end{align}
and define the measure
\begin{align}
M_{\gamma,g}:=
\rho_{\gamma,g}m_{\gamma,g}.
\label{Mdefi}
\end{align}
This definition ensures that we get the transformation law under $g \mapsto e^\varphi g$ in Proposition \ref{diffeandweyl}, which will later turn out to be the correct one for the Liouville expectation in Section \ref{liouville_expectation_section} in the sense that it leads to the property \eqref{weyl}. We have
\begin{align*}
M_{\gamma,g}=e^{\frac{_{ \gamma Q}}{^2}\sigma(z)}\lim_{\epsilon\to 0}\epsilon^{\frac{_{\gamma^2}}{^2}}  e^{\gamma X_{{{g}},\epsilon}(z)}
d^{2}z.
\end{align*}
From this formula and the fact that $X_{{{e^{\varphi} g}}}\overset{law}{=}X_{{{g}}}-m_{e^{\varphi}g }(X_{{{g}}})$ we infer the Weyl transformation law
\begin{align*}
M_{\gamma,e^{\varphi} g}\overset{law}{=}e^{ \frac{_{ \gamma Q}}{^2}\varphi-\gamma m_{e^{\varphi}g }(X_g)}M_{\gamma, g}.
\end{align*}
Note that our definition of $\rho_{\gamma,g}$ so far depends on the choice of conformal coordinates. Let $\psi$ be a diffeomorphism. We want to define $\rho$ in different coordinates by
 \begin{align}\label{rhopsi}
 \rho_{\gamma, \psi^\ast g}:=\rho_{\gamma,  g}\circ\psi .
\end{align}
We have to check that this is well-defined, meaning that the above formula is consistent with \eqref{rhogamma} in the case when $\psi^* g$ is also a conformal metric. Hence suppose we have a metric $g$ and two different conformal coordinates are given by the diffeomorphisms $\psi_1$ and $\psi_2$. Thus we have $g = \psi_1^* g_1$ and $g=\psi_2^* g_2$ with
$$g_i = \tfrac{1}{2} e^{\sigma_i} (  dz \otimes d \bar z + d \bar z \otimes dz ).$$
From $\psi_{1}^{\ast}g_{1}=\psi_{2}^{\ast}g_{2}$ we get $g_1=(\psi_1^{-1})^* \psi_2^* g_1 = (\psi_2 \circ \psi_1^{-1})^* g_2$ which implies $\psi:= \psi_2 \circ \psi_1^{-1}$ is a conformal map and 
\begin{align*}
\sigma_1=\sigma_2\circ\psi+2\ln|\psi'|\,,
\end{align*}
where $\psi'$ denotes the complex derivative. Using the above formula and \eqref{h_conformal}  we conclude
\begin{align*}
(\rho_{\gamma,  g_{1}}\circ\psi_{1})(z)&= e^{\frac{\gamma^2}{4} (\sigma_1 \circ \psi_1)(z) + \frac{\gamma^2}{2} h_{g_1}(\psi_1(z),\psi_1(z))} \\
&= e^{\frac{\gamma^2}{4} ((\sigma_2\circ\psi_2)(z)+2\ln|\psi'(\psi_1(z))|) + \frac{\gamma^2}{2}(h_g(z,z) + \ln |\psi_1'(z)|)} \\
&=e^{\frac{_{ \gamma^2}}{^4}((\sigma_2\circ\psi_2)(z)+2(\ln|\psi_2'(z)| - \ln |\psi_1'(z)|)+\frac{_{ \gamma^2}}{^2}(h_{g_2}(\psi_2(z),\psi_2(z)) - \ln |\psi_2'(z)| + \ln |\psi_1'(z)|)} \\
&= e^{\frac{\gamma^2}{4}(\sigma_2 \circ \psi_2)(z) + \frac{\gamma^2}{2} h_{g_2}(\psi_2(z),\psi_2(z))} \\
&=(\rho_{\gamma,  g_{2}}\circ\psi_{2})(z)
\end{align*}
and this implies that \eqref{rhogamma} does not depend on the choice of conformal coordinates and \eqref{rhopsi} is consistent with it.

Now from Proposition \ref{gffinva} we infer
\begin{align}
\psi^{\ast}M_{\gamma,g}=
(\rho_{\gamma, g}\circ\psi)  \, m_{\gamma,\psi^{\ast}g}=M_{\gamma,\psi^\ast g}.
\label{Mdefi}
\end{align}

We can summarize these considerations to

\begin{proposition}\label{diffeandweyl} Let $\psi\in {\rm Diff}(\Sigma)$ and $\varphi\in C^{\infty}(\Sigma)$. Then we have the following equality of joint probability distributions 
\begin{align*}
(X_{g}\circ\psi, \psi^{\ast}M_{\gamma,g})\overset{law}{=}&
(X_{ \psi^{\ast}g}, M_{\gamma, \psi^{\ast}g})\\(X_{ e^{\varphi}g}, M_{\gamma, e^{\varphi}g})\overset{law}{=}&(X_{g}-c_{g}(\varphi)
, e^{ \frac{_{ \gamma Q}}{^2}\varphi-\gamma c_{g}(\varphi)
}M_{\gamma,g}),
\end{align*}
where the random variable $c_{g}(\varphi)$ is given by
\begin{align*}
c_{g}(\varphi)=
\tfrac{1}{ v_{e^{\varphi}g}(\Sigma)}\int X_{g}d v_{e^{\varphi}g }.
\end{align*}
\end{proposition}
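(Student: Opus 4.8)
The statement is a joint–distribution upgrade of the marginal identities of Proposition \ref{gffinva} together with the transformation rules for $M_{\gamma,g}$ already derived above, so the plan is to present every object as a fixed measurable functional of a single Gaussian Free Field and then transport the distributional identities through that functional. In conformal coordinates \eqref{conformalcoord} one has
\[ M_{\gamma,g}=e^{\frac{\gamma Q}{2}\sigma(z)}\lim_{\epsilon\to 0}\epsilon^{\gamma^2/2}e^{\gamma X_{g,\epsilon}(z)}\,d^2z, \]
where the circle average $X_{g,\epsilon}$ and the (probability-)limit defining the chaos are Borel maps, so $M_{\gamma,g}=\Phi_\sigma(X_g)$ for a deterministic Borel map $\Phi_\sigma$ into the space of Radon measures. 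The elementary observation I would exploit is that $\Phi_\sigma$ intertwines nicely with the two operations occurring in Proposition \ref{gffinva}: conformal pullback, and subtraction of an $X_g$-measurable constant.

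For the diffeomorphism identity I would first record the pointwise equality $X_{g,N}\circ\psi=X_{\psi^\ast g,N}$ (from $e_{g,n}\circ\psi=e_{\psi^\ast g,n}$ and $\lambda_{\psi^\ast g,n}=\lambda_{g,n}$), which gives $\psi^\ast m_{\gamma,g,N}=m_{\gamma,\psi^\ast g,N}$ and hence, after the $N\to\infty$ limit, $\psi^\ast m_{\gamma,g}=m_{\gamma,\psi^\ast g}$. Multiplying by the density and invoking the consistency $\rho_{\gamma,\psi^\ast g}=\rho_{\gamma,g}\circ\psi$ already checked for \eqref{rhopsi} yields $\psi^\ast M_{\gamma,g}=M_{\gamma,\psi^\ast g}$; since by Proposition \ref{gffinva}(a) we may couple $X_{\psi^\ast g}$ to $X_g$ via $X_g\circ\psi$, this identity is realized on the same probability space, so the joint law of $(X_g\circ\psi,\psi^\ast M_{\gamma,g})$ coincides with that of $(X_{\psi^\ast g},M_{\gamma,\psi^\ast g})$.

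For the Weyl identity, $e^\varphi g$ has conformal factor $\sigma+\varphi$; set $c:=m_{e^\varphi g}(X_g)=\frac{1}{v_{e^\varphi g}(\Sigma)}\int_\Sigma X_g\,dv_{e^\varphi g}$, an $X_g$-measurable real random variable. Proposition \ref{gffinva}(b) provides the realization $X_{e^\varphi g}\overset{law}{=}X_g-c$; because the circle average of a constant is that constant, $(X_g-c)_\epsilon=X_{g,\epsilon}-c$, and substituting into the formula above with conformal factor $\sigma+\varphi$ gives
\[ M_{\gamma,e^\varphi g}\overset{law}{=}e^{\frac{\gamma Q}{2}(\sigma+\varphi)}\lim_{\epsilon\to0}\epsilon^{\gamma^2/2}e^{\gamma(X_{g,\epsilon}-c)}\,d^2z=e^{\frac{\gamma Q}{2}\varphi-\gamma c}\,M_{\gamma,g} \]
jointly with $X_{e^\varphi g}\overset{law}{=}X_g-c$, which is exactly the asserted pair with $c_g(\varphi)=c$. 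The passage from the marginal statement of Proposition \ref{gffinva}(b) to this joint statement is the abstract fact that a fixed Borel map sends variables equal in law to variables equal in law, here applied to $Y\mapsto(Y,\Phi_{\sigma+\varphi}(Y))$.

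The parts I would treat as routine are the existence and Borel measurability of the circle-average regularization and of the GMC limit as maps on $H^{-s}(\Sigma,dv_g)$ (standard, cf. \cite{Ber,GRV16}) and the bookkeeping that $\sigma$, $\rho_{\gamma,g}$ and the volume measures transform as stated. The only point needing genuine care — more a matter of setup than a real obstacle — is to make sure all the limits defining $M_{\gamma,g}$ and $M_{\gamma,\psi^\ast g}$ (resp. $M_{\gamma,e^\varphi g}$) are taken along the \emph{same} realization of the underlying field, so that the identities hold jointly and not merely marginally; once the ``measurable functional of the GFF'' viewpoint of the first step is in place, this is automatic.
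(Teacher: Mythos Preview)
Your proposal is correct and follows essentially the same route as the paper: the proposition is stated there as a summary of the preceding discussion, which establishes $\psi^\ast m_{\gamma,g,N}=m_{\gamma,\psi^\ast g,N}$ via the eigenfunction regularization, passes to the limit, multiplies by $\rho_{\gamma,g}$, and for the Weyl part uses the circle-average formula for $M_{\gamma,g}$ together with Proposition~\ref{gffinva}(b). Your explicit framing in terms of a Borel functional $\Phi_\sigma$ and the observation that equal-in-law inputs yield equal-in-law outputs make the passage from marginal to joint equality more transparent than the paper's somewhat terse ``we infer'', but the underlying argument is the same.
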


\subsection{Liouville expectation}\label{liouville_expectation_section}

The GFF $X_g$ is almost surely an  element of  $H^{-s}(\Sigma, dv_g)$ for any $s>0$.
Let $F:H^{-s}(\Sigma,dv_g)\to\C$. 
The Liouville expectation of $F$, initially defined in Section 3 of \cite{DKRV}, is given by 
\begin{align}\label{lexpe}
\langle F \rangle_{\Sigma,g} &:=Z( \Sigma,g)\int_\R  \E \left[ F(c + X_g) e^{- \tfrac{Q}{4\pi} \int R_g(z) (c+X_g(z)) \, dv_g(z)  - \mu e^{\gamma c} M_{g,\gamma}(1) } \right] \, dc
\end{align}
where we use the notation $ M_{g,\gamma}(f)=\int_\Sigma fd M_{g,\gamma}$ so that $M_{g,\gamma}(1)$ denotes the total mass of the measure $M_{\gamma,g}$. The factor $Z( \Sigma,g)$ is the  ``partition function of the GFF'', explicitly 
\begin{align*}
Z( \Sigma,g)=e^{\hf \zeta_{\Sigma,g}'(0)}
v_g(\Sigma)^\hf 
\end{align*}
 where the zeta function is defined as
\begin{align*}
 \zeta_{\Sigma,g}(s)=\sum_{n=1}^\infty\lambda_{g,n}^s
\end{align*}
for real part of $s$ small enough and $\zeta_{\Sigma,g}'(0)$  is defined by analytic continuation, see Section 1 of \cite{Sarnak} for details. We include $Z(\Sigma,g)$ in the definition \eqref{lexpe} to match physics literature conventions. Especially this has the effect of shifting the central charge of the theory from $6Q^2$ to $1+6Q^2$, see Proposition \ref{weyl_anomaly}. For a diffeomorphism $\psi \in \rm{Diff}(\Sigma)$ the property $\lambda_{\psi^* g,n} = \lambda_{g,n}$ implies
\begin{align*}
Z( \Sigma,\psi^\ast g)=Z( \Sigma,g)
\end{align*}
and furthermore, equation (1.13) in \cite{Sarnak} gives
\begin{align*}
Z( \Sigma,e^\varphi g) &= e^{A(\varphi,g)} Z( \Sigma,g)
\end{align*}
where
\begin{align}\label{anomalyint}
A(\varphi,g) &= \tfrac{ 1}{96\pi} \int_{\Sigma} (|\nabla_g \varphi(z)|^2 + 2 R_g(z) \varphi(z) )\, dv_g(z).
\end{align}

\begin{proposition}\label{weyl_anomaly}
Suppose $F: H^{-s}(\Sigma,dv_g) \to \R$ is such that $\langle |F| \rangle_{\Sigma,g}<\infty$. Then we have the diffeomorphism covariance
\begin{align*}
\langle F \rangle_{\Sigma,\psi^* g} &=
\langle \psi_* F\rangle_{\Sigma, g}\,,
\end{align*}
where $(\psi_*F)(X) := F(X\circ\psi)$ 
and  the Weyl covariance
\begin{align}\label{weylcov}
\langle F \rangle_{\Sigma,e^\varphi g} &= e^{cA(\varphi,g)} \langle F( \cdot - \tfrac{Q}{2} \varphi) \rangle_{\Sigma,g}\,,
\end{align}
where $\varphi \in C^\infty(\Sigma,\R)$ and $c=1+6Q^2$.
\end{proposition}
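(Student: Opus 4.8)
The plan is to derive both covariances directly from the definition \eqref{lexpe} together with the transformation laws already collected in Proposition \ref{gffinva}, Proposition \ref{diffeandweyl} and the cited facts about $Z(\Sigma,g)$. The diffeomorphism statement is the easier one: one substitutes the metric $\psi^*g$ into \eqref{lexpe}, and then uses $X_{\psi^*g} \overset{law}{=} X_g\circ\psi$ from \eqref{gff1a}, $\psi^*M_{\gamma,g}=M_{\gamma,\psi^*g}$ from \eqref{Mdefi}, $R_{\psi^*g}=R_g\circ\psi$, the change-of-variables formula $dv_{\psi^*g}=\psi^*(dv_g)$ for the Riemannian volume, and $Z(\Sigma,\psi^*g)=Z(\Sigma,g)$. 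The point is simply that every ingredient of the integrand is built covariantly, so pulling everything back through $\psi$ turns $F(c+X_{\psi^*g})$ into $F(c+X_g\circ\psi)=(\psi_*F)(c+X_g)$ while leaving the Gaussian expectation and the $dc$ integral untouched; collecting terms gives $\langle F\rangle_{\Sigma,\psi^*g}=\langle\psi_*F\rangle_{\Sigma,g}$.

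For the Weyl covariance I would again start from \eqref{lexpe} with $g$ replaced by $e^\varphi g$ and feed in the three joint laws of Proposition \ref{diffeandweyl}: $X_{e^\varphi g}\overset{law}{=}X_g-c_g(\varphi)$ and $M_{\gamma,e^\varphi g}\overset{law}{=}e^{\frac{\gamma Q}{2}\varphi-\gamma c_g(\varphi)}M_{\gamma,g}$, together with $Z(\Sigma,e^\varphi g)=e^{A(\varphi,g)}Z(\Sigma,g)$. The key bookkeeping step is to absorb the random shift by the substitution $c\mapsto c+c_g(\varphi)$ in the $dc$ integral (Lebesgue measure is translation invariant), so that $c+X_{e^\varphi g}$ becomes $c+X_g$ inside $F$ — except that $F$ also picks up the deterministic Weyl shift, which must come out as $F(\,\cdot-\tfrac Q2\varphi)$. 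One then has to check that all the remaining scalar factors combine correctly: the curvature term $-\tfrac{Q}{4\pi}\int R_{e^\varphi g}(c+X_{e^\varphi g})\,dv_{e^\varphi g}$ must be expanded using $R_{e^\varphi g}=e^{-\varphi}(R_g-\Delta_g\varphi)$ and $dv_{e^\varphi g}=e^\varphi dv_g$, the GMC mass transforms as $M_{\gamma,e^\varphi g}(1)=e^{\frac{\gamma Q}{2}\varphi-\gamma c_g(\varphi)}M_{\gamma,g}$ so that after the shift $e^{\gamma c}M_{\gamma,e^\varphi g}(1)$ has the right form, and the purely $\varphi$-dependent Gaussian/deterministic pieces must assemble into exactly $e^{cA(\varphi,g)}$ with $c=1+6Q^2$. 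The reader can check that the ``$1$'' in $c=1+6Q^2$ is precisely the contribution of the $Z(\Sigma,e^\varphi g)/Z(\Sigma,g)=e^{A(\varphi,g)}$ factor, while the $6Q^2$ comes from the curvature coupling and the Gaussian integration-by-parts identity $\int|\nabla_g\varphi|^2\,dv_g = -\int\varphi\Delta_g\varphi\,dv_g$ applied to the cross terms.

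The main obstacle — and the step that deserves the most care in the write-up — is the algebraic identity that the $\varphi$-dependent prefactor, after the shift $c\mapsto c+c_g(\varphi)$ and after substituting the Weyl laws for $R$, $dv$ and $M_{\gamma,g}$, equals exactly $e^{(1+6Q^2)A(\varphi,g)}$ and nothing more. This is where one must be scrupulous about the linear-in-$X_g$ terms generated by the curvature coupling: the term $-\tfrac{Q}{4\pi}\int(R_g-\Delta_g\varphi)(c_g(\varphi)+\text{shift contributions})\,dv_g$ produces pieces linear in $c_g(\varphi)$ that are themselves linear in $X_g$, and these must cancel against the linear terms coming from expanding $\mu e^{\gamma(c+c_g(\varphi))}M_{\gamma,g}(1)$ and from the definition of $c_g(\varphi)=\tfrac{1}{v_{e^\varphi g}(\Sigma)}\int X_g\,dv_{e^\varphi g}$ together with the fact that $\int R_g\,dv_g=8\pi$ (Gauss--Bonnet on $\mathbb{S}^2$) and that $X_g$ has zero $v_g$-average. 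Keeping track of these cancellations, and of the normalization of $c_g(\varphi)$ versus $m_{e^\varphi g}(X_g)$, is the delicate computation; everything else is a direct substitution. I would organize the proof by first stating the three substitution rules as displayed equations, then performing the change of variables in $c$, and finally isolating and simplifying the deterministic prefactor in one displayed computation, invoking \eqref{anomalyint} for the definition of $A(\varphi,g)$.
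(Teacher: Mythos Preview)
Your treatment of the diffeomorphism covariance is correct and matches the paper's argument. The Weyl part, however, has a genuine gap.

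After you substitute the joint law from Proposition~\ref{diffeandweyl} and perform the shift $c\mapsto c+c_g(\varphi)$, the argument of $F$ becomes $c+X_g$, \emph{not} $c+X_g-\tfrac{Q}{2}\varphi$. Nothing in that shift produces the deterministic $-\tfrac{Q}{2}\varphi$. At the same time, expanding the curvature coupling via $R_{e^\varphi g}\,dv_{e^\varphi g}=(R_g-\Delta_g\varphi)\,dv_g$ leaves, after the $c$-shift, a factor
\[
\exp\Big(\tfrac{Q}{4\pi}(X_g,\Delta_g\varphi)_g\Big)
\]
inside the expectation, and the chaos term is $M_{\gamma,g}(e^{\frac{\gamma Q}{2}\varphi})$ rather than $M_{\gamma,g}(1)$. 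These are random quantities coupled to $X_g$; they do not ``cancel'' against anything algebraically, contrary to what your obstacle paragraph suggests.

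The missing step is a Girsanov (Cameron--Martin) shift: one absorbs the exponential tilt $e^{\frac{Q}{4\pi}(X_g,\Delta_g\varphi)_g}$ into the Gaussian law of $X_g$, which shifts $X_g\mapsto X_g+h$ with
\[
h=\tfrac{Q}{4\pi}G_g\Delta_g\varphi=-\tfrac{Q}{2}\varphi+\text{const}.
\]
This single application of Girsanov simultaneously (i) puts the $-\tfrac{Q}{2}\varphi$ into the argument of $F$, (ii) turns $M_{\gamma,g}(e^{\frac{\gamma Q}{2}\varphi})$ back into $M_{\gamma,g}(1)$ (up to a constant absorbed by a second $c$-shift), and (iii) produces the variance term $\tfrac{Q^2}{32\pi^2}(\Delta_g\varphi,G_g\Delta_g\varphi)_g=\tfrac{Q^2}{16\pi}\int|\nabla_g\varphi|^2\,dv_g$, which is exactly the gradient part of $6Q^2A(\varphi,g)$. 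Without this step your outline cannot account for the shift in $F$, and the linear-in-$X_g$ residue you are trying to cancel is in fact the Girsanov tilt itself.
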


\begin{proof}
The first claim follows directly from the identities
\begin{align*}
(X_{\psi^* g}, M_{\psi^*g,\gamma}(1) )& \overset{law}{=} (X_g \circ \psi,M_{g,\gamma}(1)) \,, \\
R_{\psi^* g} &= R_g \circ \psi
\end{align*}
which follow from Proposition \ref{diffeandweyl}, and the fact that $R_g$ is a scalar function (a $0$-form).
The second claim follows along the same lines as in \cite{DKRV}. For completeness we give the main steps.
Let $g'=e^{\varphi}g$. By  Proposition \ref{diffeandweyl}  and a shift $c'=c-c_g(\varphi)$ in the $c$ integral we have
\begin{align*}
Y_{g'}:=&\int_\R  \E \left[ F(c + X_{g'}) e^{- \tfrac{Q}{4\pi} \int_\Sigma R_{g'} (c+X_{g'}) \, dv_{g'}  - \mu e^{\gamma c} M_{g',\gamma}(1) } \right] \, dc\\=&\int_{\R} \E \big(F(c'+X_{g}) e^{- \frac{Q}{4\pi}\int_{\Sigma}R_{g'} (c'+X_{g})dv_{g'} - \mu e^{\gamma c' }M_{ g,\gamma}(e^{\frac{\gamma Q}{2}\varphi})}\big)dc'.
\end{align*}
Using $ R_{g'} v_{g'}=(R_{g} -\Delta_{g} \varphi)v_{g}$ and dropping the prime from $c$ this becomes
\begin{align*}
Y_{g'}=\int_{\R} \E \big(e^{ \frac{Q}{4\pi}
(X_g,\Delta_g\varphi)_g} F(c+X_{g}) e^{- \frac{Q}{4\pi}\int_{\Sigma}R_{g}(c+ X_{g})dv_{g} - \mu e^{\gamma c }M_{ g,\gamma}(e^{\frac{\gamma Q}{2}\varphi})}\big)dc.
\end{align*}
Next we apply the Girsanov theorem  to the factor $e^{ \frac{Q}{4\pi}
(X_g,\Delta_g\varphi)_g}$. Denoting the rest of the integrand by $H(X_{g},M_{g,\gamma})$ we have
\begin{align*}
\E (H(X_{g},M_{g,\gamma})
e^{ \frac{Q}{4\pi}
(X_g,\Delta_g\varphi)_g})=e^{\frac{Q^2}{32\pi^2}
(\Delta_g\varphi,G_{g}\Delta_g\varphi)_{g}}\E (H(X_{g}+h, e^{\gamma h}M_{g,\gamma})),
\end{align*}
with
\begin{align*}
h= \tfrac{Q}{4\pi}G_{g}\Delta_g\varphi.
\end{align*}
From \eqref{deltag} we obtain
$G_g\Delta_g\varphi=-2\pi\varphi  +\tilde c
$ with  $\tilde c=\tfrac{Q}{2}\frac{\int\varphi dv_g}{\int dv_g}$ so that 
$h=-\tfrac{Q}{2}\varphi +\tilde c$ and thus
\begin{align*}
Y_{g'}=e^{\frac{Q^2}{32\pi^2}
(\Delta_g\varphi,G_{g}\Delta_g\varphi)_{g}}\int_{\R} \E \big(F(c+\tilde c+X_{g}-\tfrac{Q}{2}\varphi) e^{- \frac{Q}{4\pi}\int_{\Sigma}R_{g} (c+\tilde c +X_{g}-\tfrac{Q}{2}\varphi)dv_{g} - \mu e^{\gamma (c+\tilde c) }M_{ g,\gamma}(1)}\big).
\end{align*}
After a shift in the $c$-integral we obtain 
\begin{align*}
Y_{g'}=e^{B(\varphi,g)}\int_{\R} \E \big(F(c+X_{g}-\tfrac{Q}{2}\varphi) e^{- \frac{Q}{4\pi}\int_{\Sigma}R_{g} (c +X_{g})dv_{g} - \mu e^{\gamma c}M_{ g,\gamma}(1)}\big),
\end{align*}
with
$$
B(\varphi,g)= \tfrac{Q^2}{32\pi^2}(\Delta_g\varphi,G_{g}\Delta_g\varphi)_{g}+ \tfrac{Q^2}{8\pi}\int_\Sigma R_{g}(z) \varphi(z) dv_g(z).
$$
The claim follows since 
$$(\Delta_g\varphi,G_{g}\Delta_g\varphi)_{g}=(\varphi,\Delta_gG_{g}\Delta_g\varphi)_{g}=-2\pi(\varphi,\Delta_g \varphi)_g. 
$$
\end{proof}

\subsection{Liouville correlation functions}

Choose a local conformal coordinate. We define
\begin{align}\label{rhoalpha}
 \rho_{\alpha,g}(z):=e^{\frac{_{ \alpha^2}}{^4}\sigma(z)+\frac{\alpha^2}{2}h_{g}(z,z)}%
\end{align}
and we define the regularized vertex operators by
\begin{align*}
V_{\alpha,g,N}(z) &= e^{\alpha(c+ X_{g,N}(z)) - \frac{\alpha^2}{2} \E X_{g,N}(z)^2} \rho_{\alpha, g}(z).
\end{align*}
Again we set
 \begin{align*}
 \rho_{\gamma, \psi^\ast g}=\rho_{\gamma,  g}\circ\psi,
\end{align*}
which is well defined by the same argument as with $\alpha=\gamma$ earlier. Hence
\begin{align}\label{Vdiffeo}
V_{\alpha,g,N}\circ\psi &= V_{\alpha,\psi^\ast g,N}.
\end{align}

\begin{proposition}\label{correlationf} The correlation functions
\begin{align*}
\langle \prod_{i=1}^n V_{\alpha_i,g}(z_i) \rangle_{\Sigma,g} &:= \lim_{N \to \infty} \langle \prod_{i=1}^n V_{\alpha_i,g,N}(z_i) \rangle_{\Sigma,g}\,,
\end{align*}
exist and are non-zero if and only if the $\alpha_i$'s satisfy the Seiberg bounds
\begin{align}\label{seibe}
\sum_{i=1}^n \alpha_i > Q\chi(\Sigma)\,, \quad \alpha_i < Q \quad \forall i \,,
\end{align}
where $\chi(\Sigma) = 2 - 2 \rm{genus}(\Sigma)$ is the Euler characteristic. Furthermore they satisfy the diffeomorphism and Weyl transformation laws \eqref{diffeo} and \eqref{weyl} with $\Delta_\alpha = \frac{\alpha}{2}(Q-\frac{\alpha}{2}).$

\end{proposition}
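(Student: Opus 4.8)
The plan is to follow the strategy of \cite{DKRV}: reduce the regularized correlation $\langle\prod_i V_{\alpha_i,g,N}(z_i)\rangle_{\Sigma,g}$ to an explicit positive prefactor times a negative moment of a Gaussian Multiplicative Chaos integral, pass to the limit $N\to\infty$, and then read off existence, non-vanishing and the two covariance laws. First one integrates out the zero mode $c$ in \eqref{lexpe}: by the Gauss--Bonnet identity $\tfrac{1}{4\pi}\int_\Sigma R_g\,dv_g=\chi(\Sigma)$ the $c$-dependence is $e^{sc}e^{-\mu e^{\gamma c}M_{\gamma,g}(1)}$ with $s:=\sum_i\alpha_i-Q\chi(\Sigma)$, whose integral over $\R$ equals $\gamma^{-1}\Gamma(s/\gamma)\bigl(\mu M_{\gamma,g}(1)\bigr)^{-s/\gamma}$ precisely when $s>0$ and diverges otherwise; this already forces the first Seiberg bound $\sum_i\alpha_i>Q\chi(\Sigma)$. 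Next one applies the Cameron--Martin/Girsanov theorem to the Gaussian exponentials (the $\prod_i e^{\alpha_i X_{g,N}(z_i)-\frac{\alpha_i^2}{2}\E X_{g,N}(z_i)^2}$ together with the $R_g$-term), which shifts $X_g\mapsto X_g+\sum_i\alpha_i G_{g,N}(\cdot,z_i)+(\text{smooth deterministic})$, where $G_{g,N}$ is the covariance kernel of $X_{g,N}$, produces the factor $\prod_{i<j}e^{\alpha_i\alpha_j G_{g,N}(z_i,z_j)}$, and leaves
\[
\langle\prod_i V_{\alpha_i,g,N}(z_i)\rangle_{\Sigma,g}=C_{g,N}\,\E\Big[\Big(\int_\Sigma e^{\gamma\sum_i\alpha_i G_{g,N}(x,z_i)}\,W_g(x)\,M_{\gamma,g}(dx)\Big)^{-s/\gamma}\Big],
\]
with $C_{g,N}=Z(\Sigma,g)\,\mu^{-s/\gamma}\gamma^{-1}\Gamma(s/\gamma)\prod_i\rho_{\alpha_i,g}(z_i)\prod_{i<j}e^{\alpha_i\alpha_j G_{g,N}(z_i,z_j)}$ and $W_g=e^{-\frac{\gamma Q}{4\pi}G_g R_g}$ a smooth positive weight. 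Letting $N\to\infty$ one uses $G_{g,N}(z_i,z_j)\to G_g(z_i,z_j)=\ln|z_i-z_j|^{-1}+h_g(z_i,z_j)$ for $i\ne j$ (so $C_{g,N}\to C_g$, finite and nonzero for $s>0$), $G_{g,N}(\cdot,z_i)\to G_g(\cdot,z_i)$ inside the chaos integral, and uniform integrability of the negative moments. This last limit exchange, resting on uniform upper bounds for negative moments of the GMC integral, is the main technical point.

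Non-vanishing and the second Seiberg bound are then read off from $\langle\prod_i V_{\alpha_i,g}(z_i)\rangle_{\Sigma,g}=C_g\,\E[Z^{-s/\gamma}]$, where $Z:=\int_\Sigma e^{\gamma\sum_i\alpha_i G_g(x,z_i)}W_g(x)M_{\gamma,g}(dx)$ is a.s.\ positive, so $\E[Z^{-s/\gamma}]>0$ and is finite iff $Z<\infty$ a.s. Since near $z_i$ one has $e^{\gamma\alpha_i G_g(x,z_i)}\asymp|x-z_i|^{-\gamma\alpha_i}$, combining this with the (known) multifractal behaviour $M_{\gamma,g}(B(z_i,r))=r^{\gamma Q+o(1)}$ of GMC gives $Z<\infty$ a.s.\ iff $\alpha_i<Q$ for every $i$, whereas if $\alpha_i\ge Q$ for some $i$ then $Z=\infty$ a.s.\ and the correlation vanishes; finiteness of \emph{all} negative moments of $Z$ (in particular $\E[Z^{-s/\gamma}]<\infty$) follows from $Z\ge c\,M_{\gamma,g}(K)$ for a compact $K$ disjoint from $\{z_i\}$ together with the standard negative-moment bounds for GMC. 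Combined with the first bound this gives the claimed equivalence, and when both bounds hold $C_g\in(0,\infty)$ and $\E[Z^{-s/\gamma}]\in(0,\infty)$, so the correlation exists and is nonzero.

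The diffeomorphism law \eqref{diffeo} is immediate from what is already established: $V_{\alpha_i,\psi^\ast g,N}=V_{\alpha_i,g,N}\circ\psi$ by \eqref{Vdiffeo}, $(X_{\psi^\ast g},M_{\gamma,\psi^\ast g})\overset{law}{=}(X_g\circ\psi,\psi^\ast M_{\gamma,g})$ by Proposition \ref{diffeandweyl}, $R_{\psi^\ast g}=R_g\circ\psi$, and $Z(\Sigma,\psi^\ast g)=Z(\Sigma,g)$; a change of variables $x\mapsto\psi(x)$ in the chaos integral turns $\langle\prod_i V_{\alpha_i}(z_i)\rangle_{\Sigma,\psi^\ast g}$ into $\langle\prod_i V_{\alpha_i}(\psi(z_i))\rangle_{\Sigma,g}$.

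Finally, the Weyl law \eqref{weyl} is obtained by repeating the argument of Proposition \ref{weyl_anomaly} (working, as in \cite{DKRV}, with the circle-average regularization, which yields the same limit by uniqueness of GMC), now tracking the explicit $g$-dependence of the vertex operators through $\rho_{\alpha_i,g}$ and the Wick renormalization. Writing $g'=e^\varphi g$ and inserting the joint law of $(X_{g'},M_{\gamma,g'})$ from Proposition \ref{diffeandweyl}, the relations $\sigma'=\sigma+\varphi$ and the transformation of $h_{g'}(z_i,z_i)$ and of the renormalization constant relative to their $g$-counterparts, together with $R_{g'}dv_{g'}=(R_g-\Delta_g\varphi)dv_g$ and the same Girsanov step and $c$-shift as in the proof of Proposition \ref{weyl_anomaly}, recombine into the $g$-correlation; the surviving pointwise factor at $z_i$ is $e^{(\frac{\alpha_i^2}{4}-\frac{\alpha_i Q}{2})\varphi(z_i)}=e^{-\Delta_{\alpha_i}\varphi(z_i)}$ with $\Delta_{\alpha_i}=\frac{\alpha_i}{2}(Q-\frac{\alpha_i}{2})$, and the overall scalar is $e^{(1+6Q^2)A(\varphi,g)}$, which is \eqref{weyl}. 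The only genuinely non-routine steps are the uniform-integrability estimate underlying the $N\to\infty$ limit and the sharp $\alpha_i<Q$ threshold for a.s.\ finiteness of the GMC integral; the rest is bookkeeping built on Propositions \ref{diffeandweyl} and \ref{weyl_anomaly}.
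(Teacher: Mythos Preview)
Your proposal is correct and follows essentially the same route as the paper: integrate out the zero mode via Gauss--Bonnet to obtain the first Seiberg bound, apply Girsanov to reduce to a negative GMC moment, invoke the DKRV integrability criterion for the second bound, and derive the two covariance laws from \eqref{Vdiffeo}, Proposition~\ref{diffeandweyl}, and Proposition~\ref{weyl_anomaly} (the paper likewise switches to the circle-average regularization for Weyl). The only cosmetic difference is that the paper first treats a generic exponential observable $F(X)=e^{(X,f)_g}$ and then specializes, whereas you work directly with the vertex insertions.
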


\begin{proof} The strategy for the proof of convergence is the following. We first switch the integration order and argue that the $c$-integral converges. Then we evaluate the $c$-integral which then yields a different representation of the correlation function in terms of an expectation of a moment of a GMC integral with no $c$-integral remaining. Consider \eqref{lexpe} with $F(X)=e^{(X,f)_g}$ with $f \in C^\infty(\Sigma)$. For the scalar curvature the Gauss--Bonnet theorem takes the form $\int R_gdv_g= 4 \pi \chi(\Sigma)$,\footnote{Note that we are using the scalar curvature $R_g$, which is twice the \emph{Gaussian curvature} $K_g$.} so we get
\begin{align}\nonumber
\langle F\rangle_{\Sigma,g} &=Z( \Sigma,g)\int_\R e^{c((f,1)_g-Q\chi(\Sigma))} \E \left[  e^{((f- \tfrac{Q}{4\pi} R_g),X_g)_g}e^{ - \mu e^{\gamma c} M_{g,\gamma}(1) } \right] \, dc\\&=Z( \Sigma,g) \E \left[  e^{(h,X_g)_g}\int_\R e^{c((f,1)_g-Q\chi(\Sigma))}e^{ - \mu e^{\gamma c} M_{g,\gamma}(1) }\, dc \right] \label{lexpe1}
\end{align}
where we used Fubini's theorem and defined
$$
h=f- \tfrac{Q}{4\pi} R_g.
$$
Since $M_{g,\gamma}(1)>0$ almost surely, the $c$-integral converges provided
\begin{align}\label{lexpe2}
(f,1)_g>Q\chi(\Sigma),
\end{align}
and after evaluating the $c$-integral we get 
\begin{align*}
\langle F\rangle_{\Sigma,g}&=
\gamma^{-1}{\mu^{-s_f}}\Gamma(s_f) Z( \Sigma,g) \E\,\big[ e^{(h,X_g)_g  } M_{ \gamma,g}(1)^{-s_f}\big],
\end{align*} 
where
\begin{align*}
s_f=\tfrac{(f,1)_g-Q\chi(\Sigma)}{\gamma}.
\end{align*} 
Finally a shift in the  Gaussian integral (Girsanov theorem) gives
\begin{align*}
\langle F\rangle_{\Sigma,g}&=
\gamma^{-1}{\mu^{-s_f}}\Gamma(s_f) Z( \Sigma,g) e^{\hf (h,G_g h)_g}\E\,\big[  M_{ \gamma,g}(e^{\gamma G_gh})^{-s_f}\big].
\end{align*} 
For the correlation functions we take
\begin{align*}
F(X) &= e^{(X_g,f)_g - \sum_{i=1}^n ( \frac{\alpha_i^2}{2} \E X_{g,N}(z_i)^2 + \ln \rho_{g,\alpha_i}(z_i))} 
\end{align*}
with $f=\sum_{i=1}^n  \alpha_i\sum_{n=0}^Ne_{g,n}e_{g,n}(z_i)$, because then $(X_g,f)_g = \sum_{i=1}^n  \alpha_i X_{g,N}(z_i)$. Then, the condition \eqref{lexpe2} becomes the first of the conditions \eqref{seibe}.  As $N\to\infty$ in a neighborhood of $z_i$
$$
e^{\gamma G_gh(z)}=|z-z_i|^{-\gamma\alpha_i}+\caO(1)
$$
and the condition $\alpha_i<Q$  is needed for the $M_{ \gamma,g}$ integrability of this singularity, see Lemma 3.3 in \cite{DKRV}.

Diffeomorphism covariance follows from \eqref{Vdiffeo} in the limit. 

For the Weyl covariance let us again choose conformal coordinates around the insertion points $z_i$ and use the circle average regularization
\begin{align*}
\langle \prod_{i=1}^n V_{\alpha_i,g}(z_i) \rangle_{\Sigma,g} &:= \lim_{\epsilon \to 0} \langle \prod_{i=1}^n V^\epsilon_{\alpha_i,g}(z_i) \rangle_{\Sigma,g}\,,
\end{align*}
where
\begin{align}\label{Veps}
V^\epsilon_{\alpha,g}(z)=e^{ \frac{_{ \alpha^2}}{^4}\sigma}\epsilon^{\frac{\alpha^2}{2}}e^{\alpha(c+ X_{g,\epsilon}(z))}.
\end{align}
By \eqref{weylcov} we have then
\begin{align*}
 \langle \prod_{i=1}^n V^\epsilon_{\alpha_i,e^\varphi  g}(z_i) \rangle_{\Sigma,e^\varphi  g}=e^{cA(\varphi,g)}\prod_{i=1}^n e^{ \frac{_{ \alpha_i^2}}{^4}\varphi(z_i)-\hf Q\alpha_i\varphi_\epsilon(z_i)} \langle \prod_{i=1}^n V^\epsilon_{\alpha_i,g}(z_i) \rangle_{\Sigma,g}
\end{align*}
which implies the claim since $\varphi_\epsilon(z_i)\to \varphi(z_i)$ as $\epsilon\to 0$. \end{proof}

\section{Conformal Ward Identities}\label{ward_section}

We will now specialize to the case $\Sigma=\mathbb{S}^2=\hat\C$ and consider the metric dependence of the vertex correlation functions:
\begin{align*}
g\mapsto F(g,\bx)= \langle\prod_{i=1}^{N}V_{\alpha_{i},g}(x_{i})\rangle_{g}
\end{align*}  
where from now on we drop the $\Sigma$ from the notation. Our objective is to construct the derivatives \eqref{Tcorre} and prove the identities \eqref{wardid}. The identities \eqref{diffeo} and \eqref{weyl} make the dependence on the metric quite explicit since the sphere has only one conformal class (see below for the definition), a fact we will recall next. 

\subsection{Beltrami equation}\label{beltrami_section}
Let $\caM$ be the set of smooth metrics in $\hat \C$.  We may work on the coordinate chart $\C$ and identify $g\in \caM$ with a smooth function $g(z)=\{g_{\alpha\beta}(z)\}$ taking values in positive matrices such that $D\zeta^T (g\circ\zeta) D\zeta$ is smooth as well where $\zeta(z)=z^{-1}$ (this means that $g$ is smooth at infinity).  Similarly 
$\psi\in {\rm Diff}(\hat \C)$ can be identified with a diffeomorphism $\psi\in {\rm Diff}( \C)$ satisfying the additional condition $\psi\circ\zeta\in {\rm Diff}(\C)$.

The sphere has only one conformal class of metrics, which means that for a fixed metric $\hat g \in \mathcal{M}$, any other metric $g \in \mathcal{M}$ can be written as $g=e^\varphi \psi^* \hat g$, where $\varphi \in C^\infty(\rs)$ and $\psi \in \opn{Diff}(\rs)$. We aim to prove the conformal Ward identities by varying the metric $\hat g$ and seeing how this affects the correlation function $\langle \prod_{i=1}^N V_{\alpha_i,\hat g}(z_i) \rangle_{\hat g}$. To this end, we want to compute how $\varphi$ and $\psi$ depend on the perturbed metric $g$, or more concretely, if $g$ is roughly of the form $g=\hat g + \epsilon f$, what is the $f$ dependency of $\varphi$ and $\psi$ in the first order in $\epsilon$. The purpose of this section is to find the relation between the perturbation $f$ and the functions $\psi$ and $\varphi$, and we will see that the equation $g=e^{\varphi} \psi^* \hat g$ will lead us to the Beltrami equation.

Let $g\in\caM$ and set
\begin{align}
\hat g_{\alpha\beta}(z)=
e^{\sigma(z)} \delta_{\alpha\beta} ,
\label{hatgee}
\end{align}  
where $\sigma \in C^\infty(\rs)$ and $\delta$ denotes the Euclidean metric. Let us look for a function  $\varphi$ and  a diffeomorphism $\psi$ such that
\begin{align}\label{pertmet1}
g &= e^{\varphi} \psi^* \hat g=  
e^{\varphi + \sigma \circ \psi} (D \psi)^T D \psi\,,
\end{align}
where the second equality comes from \eqref{pullback}. By taking determinants we get
\begin{align}\label{detg}
\det g = (e^{\varphi + \sigma \circ \psi} \det D \psi)^2\,,
\end{align}
and by plugging this back into the equation we get
\begin{align*}
(D \psi)^T D \psi =  \gamma \det D \psi
\end{align*}
where
\begin{align*}
\gamma := \tfrac{g}{\sqrt{\det g}}.
\end{align*}
Hence $D \psi^T = \gamma (D \psi)^{-1} \det D \psi$ which in complex coordinates becomes the Beltrami equation (see Theorem 10.1.1. in \cite{astala})
\begin{align}\label{beltreq}
\partial_{\bar z} \psi = \mu \partial_z \psi\,,
\end{align}
where 
\begin{align}\label{belcoef}
 \mu := \tfrac{\gamma_{\bar z \bar z} }{\hf + \gamma_{z \bar z}}\,.
\end{align}
It is readily checked that
\begin{align*}
\| \mu \|_\infty < 1.
\end{align*}
Indeed, we have
\begin{align*}
|\mu|^2 &= \frac{\gamma_{\bar z \bar z} \overline{\gamma_{\bar z \bar z}}}{(\tfrac{1}{2} + \gamma_{z \bar z}) ( \tfrac{1}{2} + \overline{\gamma_{z \bar z}})} = \frac{\gamma_{z \bar z}^2 - \tfrac{1}{4} }{\gamma_{z \bar z}^2 +\gamma_{z \bar z} + \tfrac{1}{4} }\,.
\end{align*}
Above we used  \eqref{det} (note that $\overline{\gamma_{\bar z \bar z}} = \gamma_{zz}$ always in the complex coordinates), $\det \gamma =1$, and the fact that $\gamma_{z \bar z}$ is always real (to simplify the denominator) and positive, which follows from the facts that $\gamma$ is positive definite and $4\gamma_{z \bar z} =  \opn{tr} \gamma$. Now $\|\mu\|_\infty < 1$ follows.

The Beltrami equation is solved by writing 
\begin{align}\label{psiu}
\psi(z) = z +  u(z)
\end{align}
whereby \eqref{beltreq} becomes
\begin{align}\label{beltreq1}
\partial_{\bar z} u - \mu \partial_z u &= \mu  \,.
\end{align}
To solve this recall the Cauchy transform $\mathcal{C}:C_0^\infty(\C)\to C^\infty(\C)$
\begin{align*}
(\mathcal{C} f)(z) &:= \tfrac{1}{\pi} \int_\C \frac{f(z')}{z-z'} \, d^2z'\,,
\end{align*}
and the Beltrami transform  $\mathcal{B}:C_0^\infty(\C)\to C^\infty(\C)$  given by $\mathcal{B} := \partial_z \mathcal{C} = \mathcal{C} \partial_z$ (see Chapter 4 of \cite{astala} for basic properties of these integral transforms). We have $\partial_{\bar z}\caC f = \mathcal{C} \partial_{\bar z} f=f$  so that \eqref{beltreq1} can be written for $u\in C^\infty(\C)$ as
\begin{align}
(1-\mathcal{C} (\mu \partial_z) )u &= \mathcal{C} \mu
\end{align}
and then as a Neumann series
\begin{align}\label{neumann}
 u &= (1-\mathcal{C} \mu \partial_z)^{-1} \mathcal{C} \mu  = \sum_{n=0}^\infty (\mathcal{C} \mu \partial_z)^n \mathcal{C} \mu  = \sum_{n=0}^\infty  \mathcal{C} v_n 
\end{align}
where
$$
v_n:=(\mu \mathcal{B})^n \mu.
$$
We will also denote
\begin{align}\label{u_n}
u_n &:= \mathcal{C}v_n\,.
\end{align}
The convergence of this series is classical and for what follows we state it in a slightly more general setup for a smooth family of Beltrami coefficients $\mu(z,\epsilon)$. Our proof is a slight variation of the proof \cite{astala}, Sections 5.1 and 5.2, so we will be brief. For a multi-index $l=(l_1,l_2,l_3)$ let $D^l=\partial_\eps^{l_1}\partial_z^{l_2}\partial_{\bar z}^{l_3}$.

\begin{proposition}\label{beltrami_solu}
Let $\mu\in C_0^\infty(\C\times\R)$ with $k:=\|\mu\|_{L^\infty(\C \times \R)} < 1$. Then the series \eqref{neumann} converges uniformly together with all its derivatives and setting $u_n=\caC v_n$ we have
\begin{subequations}
\label{Du}
\begin{align}
 |D^lu(z,\eps)| & \leq C_l (1+|z|)^{-1}\,,  \\
 |D^lu_n(z,\eps)| & \leq C_l (1+|z|)^{-1}. 
\end{align}
\end{subequations}
Furthermore, for each $\epsilon$ the function $\psi(z, \epsilon)=z+u(z, \epsilon)$ defines a smooth diffeomorphism of $\hat\C$.
\end{proposition}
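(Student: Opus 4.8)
The plan is to run the classical existence-and-regularity argument for the Beltrami equation, essentially as in \cite{astala}, Sections~5.1--5.2, while additionally tracking the smooth dependence on the parameter $\epsilon$ and the pointwise decay in $z$.

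First I would establish $L^{p}$-convergence of the Neumann series. Recall that $\mathcal{B}$ is the Fourier multiplier with symbol $\bar\xi/\xi$, hence an isometry of $L^{2}(\C)$, and that $p\mapsto\|\mathcal{B}\|_{L^{p}\to L^{p}}$ is continuous; since $k<1$, the set $\{p:k\|\mathcal{B}\|_{L^{p}}<1\}$ is open and contains $2$, so I fix $p_{0}>2$ with $q:=k\|\mathcal{B}\|_{L^{p_{0}}}<1$. Multiplication by $\mu(\cdot,\epsilon)$ has $L^{p_{0}}$-operator norm $\le k$ uniformly in $\epsilon$, and from the recursion $v_{n}=\mu\cdot\mathcal{B}(v_{n-1})$ (with $v_{0}=\mu$) one gets $\|v_{n}(\cdot,\epsilon)\|_{L^{p_{0}}}\le q^{n}\|\mu(\cdot,\epsilon)\|_{L^{p_{0}}}\le Cq^{n}$ uniformly in $\epsilon$ (using that $\mu$ has compact support in $\C\times\R$). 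Hence $v:=\sum_{n}v_{n}$ converges in $L^{p_{0}}$, and since each $v_{n}$ has the form $\mu\cdot(\cdots)$ it is supported in $\operatorname{supp}\mu(\cdot,\epsilon)\subset\{|z|\le R\}$ for an $\epsilon$-independent $R$, and so is $v$; summing the recursion gives $v=\mu+\mu\mathcal{B}v$, so $u:=\mathcal{C}v$ satisfies $\partial_{\bar z}u=v$, $\partial_{z}u=\mathcal{B}v$, and therefore $\partial_{\bar z}u-\mu\partial_{z}u=\mu$, i.e.\ \eqref{beltreq1}. Because $\mathcal{C}$ maps $L^{p_{0}}$-functions supported in a fixed ball into $L^{\infty}(\C)$ (as $p_{0}'<2$), the series $u=\sum_{n}u_{n}$ converges uniformly.

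Next I would bootstrap smoothness and the uniform bounds. From $v=\mu(1+\mathcal{B}v)$ and the fact that $\mathcal{B}$ commutes with $\partial_{\epsilon},\partial_{z},\partial_{\bar z}$, differentiating gives $(1-\mu\mathcal{B})\partial v=(\partial\mu)(1+\mathcal{B}v)\in L^{p_{0}}$ for any first-order derivative $\partial$, hence $\partial v\in L^{p_{0}}$; iterating, $D^{l}v\in L^{p_{0}}$ for every multi-index $l$, with norms bounded uniformly in $\epsilon$ since all source terms are built from $\mu$ and its derivatives, which have compact support in $\C\times\R$. As $v(\cdot,\epsilon)$ has compact support, Sobolev embedding gives $\sup_{z,\epsilon}|D^{l}v(z,\epsilon)|\le C_{l}$. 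Distributing $D^{l}$ over the $n+1$ factors of $\mu$ in $v_{n}=(\mu\mathcal{B})^{n}\mu$ and passing the derivatives through the $\mathcal{B}$'s yields $\|D^{l}v_{n}(\cdot,\epsilon)\|_{L^{p_{0}}}\le C_{l}(n+1)^{|l|}q^{n}$, which is summable; hence $\sum_{n}v_{n}$ converges in every $W^{m,p_{0}}$, so (compact support plus Sobolev) in every $C^{m}$, i.e.\ uniformly with all its derivatives, and the same transfers to $u=\sum_{n}u_{n}$ via $\partial_{\bar z}\mathcal{C}=\mathrm{id}$ and $\partial_{z}\mathcal{C}=\mathcal{C}\partial_{z}=\mathcal{B}$. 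The decay then follows from the explicit kernel: since $v(\cdot,\epsilon)$ is supported in $\{|z|\le R\}$ with $\|v(\cdot,\epsilon)\|_{L^{1}}\le C$, for $|z|\ge 2R$ one has $|u(z,\epsilon)|=\tfrac{1}{\pi}\bigl|\int\tfrac{v}{z-z'}\,d^{2}z'\bigr|\le \tfrac{2C}{\pi|z|}$, while on $\{|z|\le 2R\}$ the smooth function $u(\cdot,\epsilon)$ is bounded; for a derivative, either $D^{l}u$ equals a derivative of $v$ (when $l$ involves $\partial_{\bar z}$), hence is compactly supported, or $D^{l}u=\mathcal{C}(\partial_{z}^{l_{2}}\partial_{\epsilon}^{l_{1}}v)$, for which the same estimate gives $|D^{l}u(z,\epsilon)|\le C_{l}(1+|z|)^{-1}$. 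Running the identical argument with $v_{n}$ in place of $v$ (whose $L^{1}$- and $C^{m}$-norms are bounded uniformly in $n$) gives $|D^{l}u_{n}(z,\epsilon)|\le C_{l}(1+|z|)^{-1}$.

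Finally, for the diffeomorphism assertion: $\psi=z+u$ is smooth on $\C$ and holomorphic on $\{|z|>R\}$ (there $\mu=0$, hence $v=0$), with Laurent expansion $\psi(z)=z+\sum_{k\ge1}a_{k}z^{-k}$; therefore in the chart $w=1/z$ the map $w\mapsto1/\psi(1/w)$ is holomorphic near $0$ with derivative $1$ there, so $\psi$ extends smoothly to $\hat\C$ fixing $\infty$. By the measurable Riemann mapping theorem (\cite{astala}) $\psi$ is the principal solution and a homeomorphism of $\hat\C$; its Jacobian $J_{\psi}=(1-|\mu|^{2})|\partial_{z}\psi|^{2}$ is nonnegative and cannot vanish, since by the Stoilow-type local normal form for quasiregular maps a zero of $\partial_{z}\psi$ would make $\psi$ locally $m$-to-$1$ with $m\ge2$, contradicting injectivity; hence $\psi$ is a smooth homeomorphism of $\hat\C$ with nonvanishing Jacobian, i.e.\ a diffeomorphism. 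I expect the main obstacle to be the bootstrap in the previous paragraph --- justifying term-by-term differentiation of the series and, above all, obtaining every estimate uniformly in $\epsilon$ --- while the only genuinely non-elementary external input is the identification of $\psi$ as a quasiconformal homeomorphism and its upgrade to a diffeomorphism, which I would simply quote from \cite{astala}.
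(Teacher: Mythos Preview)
Your argument is correct and follows essentially the same route as the paper: both control the Neumann series via the $L^p$-boundedness of $\mathcal{B}$, distribute derivatives over the factors $(\mu\mathcal{B})^n\mu$ to get summable Sobolev bounds on $v_n$, read off the $(1+|z|)^{-1}$ decay from the explicit Cauchy kernel using the fixed compact support of $v$, and defer the diffeomorphism statement to \cite{astala}. The only cosmetic difference is that you work at some $p_0>2$ (as in \cite{astala}) whereas the paper fixes $p=2$, and you additionally streamline the smoothness of $v$ via the fixed-point identity $(1-\mu\mathcal{B})\partial v=(\partial\mu)(1+\mathcal{B}v)$ before doing the term-by-term estimate; neither changes the substance.
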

\begin{proof}
First, Theorem 4.5.3 in \cite{astala} implies $S_p:=\|\caB\|_{L^p(\C)\to L^p(\C)}\to1$ as $p\to 2$ with $S_2=1$. From this we obtain
$$\|\mu\caB\|_{L^p(\C)\to L^p(\C)}\leq k S_p<1$$    for $p$ close enough to $2$. For the rest of the proof we fix any such $p$, say $p=2$.\footnote{Note the slight difference between our proof and the proofs in Sections 5.1 and 5.1 of \cite{astala}. In \cite{astala} $p>2$ is assumed to establish integrability properties of $u$ (see e.g. Lemma 5.2.1). For us the estimate \eqref{Du} is enough (and implies $u(\cdot,\eps) \in W^{k,p}(\C)$ for $p>2$) so we can fix $p=2$.} It follows that the series $\sum_{n=0}^\infty (\mu\mathcal{B})^n\mu$ converges in $ L^2(\C)$ uniformly in $\epsilon$. 
Next we show that the limit belongs to the Sobolev space $W^{k,2}(\C)$ for all $k$ and $\epsilon$. By applying the product rule of the derivative and $\partial_z \mathcal{B} \mu = \mathcal{B} \partial_z \mu$, $\partial_{\bar z} \mathcal{B} \mu = \mathcal{B} \partial_{\bar z} \mu$, we get for $v_n=(\mu \mathcal{B})^n \mu$
\begin{align*}
\|D^lv_n\|_2=\Big\|\sum_{\sum_{i=1}^{n+1} k_i=l} D^{k_1}\mu\caB D^{k_2}\mu\caB\dots D^{k_{n+1}}\mu\Big\|_2\,.
\end{align*}
By using $\|D^{k_i} \mu  \mathcal{B} D^{k_j} \mu \|_2 \leq \|D^{k_i} \mu\|_\infty S_2 \| D^{k_j} \mu\|_2$, we get the upper bound
\begin{align*}
\Big\|\sum_{\sum_{i=1}^{n+1} k_i=l} D^{k_1}\mu\caB D^{k_2}\mu\caB\dots D^{k_{n+1}}\mu\Big\|_2\leq C(l,n)(kS_2)^{n-\|l\|}
\end{align*}
where the constant $C(l,n)$ depends on $\|D^k\mu\|_\infty$ for $|k|\leq |l|$. This shows $z \mapsto v(z,\epsilon):=\sum_{n=0}^\infty v_n(z,\epsilon)$ is in the Sobolev space $W^{k,2}(\C)$ for all $k$ and $\epsilon$, and since $\mu(\cdot,\eps) \in C_0^\infty(\C)$ we conclude $v_n(\cdot,\eps) \in C^\infty_0(\C )$ and  $v(\cdot,\eps) \in C_0^\infty(\C )$. For the bounds \eqref{Du} recall that $u_n = \mathcal{C} v_n$ and $u = \mathcal{C} v$. Since $\mathcal{C}$ maps $C^\infty_0(\C)$ into $C^\infty(\C)$, the functions $z \mapsto u_n(z,\epsilon)$ and $z \mapsto u(z,\epsilon)$ belong to $C^\infty(\C)$ and are bounded on compact sets. Now we get the bound
\begin{align*}
| D^l u(z,\epsilon)| & \leq \frac{1}{\pi} \int_{\opn{supp}(v(\cdot,\epsilon))} \frac{|D^l v(y,\eps)|}{|z-y|} \, d^2y \leq C \frac{1}{1+|z|}\,,
\end{align*}
where $\opn{supp} v$ denotes the support of $v$. Same argument yields the same bound for $u_n$.
Now we have shown that $\psi(z,\eps)=z+u(z,\eps)$ belongs to $C^\infty(\C)$ for each $\eps$ and $\psi(z,\eps) = z + \mathcal{O}(1/z)$ as $z \to \infty$. Theorems 5.2.3 and 5.2.4 in \cite{astala} then imply that $\psi(z,\eps) = z+u(z,\eps)$ is a smooth diffeomorphism of $\rs$ for each $\epsilon$.
\end{proof}

Next we want to argue that the function $\varphi$ in \eqref{pertmet1} is smooth. Equation \eqref{detg} implies that $\varphi$ is given by
\begin{align*}
e^{\varphi}=e^{-\sigma\circ\psi}\frac{\sqrt{\det g}}{\det D\psi}=e^{-\sigma\circ\psi}
\frac{\sqrt{\det g}}{|1+\partial_{z}u|^{2}-|\partial_{\bar z}u|^{2}}
\end{align*}
where we used 
$\det D\psi=|\partial_{z}\psi|^{2}-|\partial_{\bar z}\psi|^{2}$ and $\psi(z)=z+u(z)$.
Since 
 $g$ is a metric on the Riemann sphere $\hat\C$, the volume density $\sqrt{\det g}$ must be smooth at infinity. This means that $\sqrt{\det \zeta^* g}$ is a smooth (and positive) function at the origin, where $\zeta(z) = \tfrac{1}{z}$. Thus we can write $\sqrt{\det \zeta^* g} = e^{\rho(z,\epsilon)}$ where $\rho \in C^\infty(\C \times \R)$. We have
 \begin{align*}
 \sqrt{\det \zeta^* g} &= \sqrt{\det ( D \zeta (g \circ \zeta) D \zeta^T)} = \det D \zeta \sqrt{\det g \circ \zeta} = \tfrac{1}{|z|^4} \sqrt{\det g \circ \zeta}\,.
\end{align*}
Thus we get
\begin{align*}
\ln \sqrt{\det g(z,\epsilon)}=-4\ln|z|+\rho(1/z,\epsilon).
\end{align*}
On compact sets this function is bounded, and as $z \to \infty$ the absolute value is dominated by $C|z|^{-1}$. Thus the bound \eqref{Du} holds when we replace $u$ by $\ln \sqrt{\det g}$ or $\sigma$ (by the same argument). We conclude then 
\begin{align}\label{Duvarp}
 |D^l\varphi(z)|\leq C_l  (1+|z|)^{-1}.
\end{align}
Thus $\varphi(\cdot,\epsilon) \in W^{k,p}(\C )$ for all $k$ and $p>2$ and we conclude $\varphi \in C^\infty(\C \times \R)$.

\subsection{The stress-energy tensor}\label{se_section}

In this section we give the precise definition of the derivatives \eqref{Tcorre}.
\begin{definition}\label{functional_derivative}
Let $S: C^\infty(\C) \to \C$ be a functional (not necessarily linear) such that for all $h \in C^\infty(\C)$ the function $\eps \mapsto S(h+\eps f)$ is differentiable at $0$ for all $f \in C^\infty_0(\C)$. If it also holds that the derivative is linear and continuous in $f$, we denote by $\frac{\delta S}{\delta h} \in \mathcal{D}'(\C)$ the distribution
\begin{align*}
(\tfrac{\delta S}{\delta h}, f) &:= \de S(h+\eps f)\,.
\end{align*}
These derivatives are also known as the Gateaux differentials. If $\tfrac{\delta S}{\delta h}$ can be realized as a continuous function, that is, if we have
\begin{align*}
(\tfrac{\delta S}{\delta h}, f) &= \int_\C s(z) f(z) \, d^2z
\end{align*}
for some $s \in C(\C)$, then we define $\frac{\delta S}{\delta h(z)} := s(z)$.
\end{definition}

The Liouville correlation functions \eqref{correlat} are functions of the metric $g$. As was explained in the beginning of Section \ref{ward_section}, we identify $g$ with a positive matrix valued smooth function $g(z)= \{g_{\mu \nu}(z)\}$ on $\C$ such that also $D \zeta^T (g \circ \zeta) D \zeta$ is smooth, where $\zeta(z) = z^{-1}$. We denote the inverse matrices by $\{g^{\mu \nu}(z)\}$. In this picture the correlation functions are functions of the smooth functions $g_{\mu \nu }$ and by perturbing these functions we may compute functional derivatives of the form 
\begin{align*}
\frac{\delta}{\delta g^{\mu \nu}} \langle \prod_{i=1}^N V_{\alpha_i,g}(x_i) \rangle_g\,,
\end{align*}
where this derivative is to be understood in the way that was explained above. Note that we have to show that the derivatives
\begin{align*}
\de \langle \prod_{i=1}^N V_{\alpha_i,g_\eps}(x_i) \rangle_{g_\eps}
\end{align*}
exist, where $g_\eps^{\mu \nu} = g^{\mu \nu} + \eps f^{\mu \nu}$, $f^{\mu \nu} \in C^\infty_0(\C)$. Once the existence of these derivatives is established, they are shown to be given by distributions evaluated at the perturbation functions $f̂^{\mu \nu}$. Then we prove that these distributions can be represented by functions, which will be denoted by $\langle T_{\mu \nu}(z) \prod_{i=1}^N V_{\alpha_i,g}(z_i) \rangle_g$. 

Let
$$ \C_\bx := \C \setminus\{x_1,\hdots,x_N\}$$
and let
$f\in C_0^\infty(\C_\bx, M_2)$ be a smooth function with compact support in $\C_\bx$ and taking values in symmetric  $2\times 2$ matrices. 
Consider the perturbed  metric $g(z,\epsilon)$ with the inverse given by
\begin{align}\label{metricvar}
g^{\mu \nu} &= \hat g^{\mu \nu}+\epsilon f^{\mu \nu}
\end{align}
where $\hat g$ is given by \eqref{hatgee} so that $ \hat g^{\mu \nu}=e^{-\sigma}\delta^{\mu \nu}$. It follows that
$g$ defines a metric (i.e. is positive definite) if $\epsilon$ is small enough. We will also use the notation
\begin{align}
F(\hat g, \eps) &:= \langle \prod_{i=1}^N V_{\alpha_i, g}(x_i) \rangle_g\,.
\end{align}
Then we have


\begin{proposition}\label{smoothness}
The function $F(\hat g,\cdot): \epsilon\mapsto  \langle\prod_{i=1}^{N}V_{\alpha_{i},g}(x_{i})\rangle_{g}$ is smooth in a neighbourhood of the origin.\footnote{The function $F$ also depends on the points $x_i$, but we omit this from the notation.} Furthermore, for any positive integer $n$ we have  
 \begin{align}
 \partial_\epsilon^{n}F(\hat g, \eps)|_{\epsilon=0}=\caT_{n}(f,\dots,f)
\label{actiooo}
\end{align}
 where the $n$-linear function $\caT_{n}$ defines a distribution $\caT_{n}\in (C_0^\infty(\C_\bx, M_2)^{\otimes n})^*$.\footnote{That is, $\caT_n$ is a complex valued continuous linear map, taking as arguments functions from the $n$-fold tensor product $\bigotimes_{k=1}^n C^\infty_0(\C_x,M_2)$. We use the notation $\mathcal{T}_n(f_1,\hdots,f_n) := \mathcal{T}_n (f_1\otimes\hdots\otimes f_n)$.} 

Furthermore, let $f_i\in C_0^\infty(\C_\bx, M_2)$, $i=1,\dots, n$ have disjoint compact supports 
and set
$$g^{\mu\nu}=\hat g^{\mu\nu}+\sum_{i=1}^n\epsilon_{i}f_{i}^{\mu\nu}.$$
Then for the function $F(\hat g, \cdot): (\epsilon_1,\hdots,\epsilon_n) \mapsto \langle\prod_{i=1}^{N}V_{\alpha_{i},g}(x_{i})\rangle_{g}$ we have
\begin{align}
\prod_{i=1}^n \partial_{\epsilon_i}  F(\hat g, \eps_1,\hdots,\eps_n) |_{(\epsilon_1,\hdots,\epsilon_n)=0} =  \caT_{n}(f_{1},\dots,f_{n})
&=\int  F^{\hat g}_{\mu_{1} \nu_1\dots \mu_n \nu_{n}}(z_{1},\dots,z_{n})\prod_{i=1}^{n}  f_i^{\mu_{i}\nu_{i}}(z_{i})dv_{\hat g}(z_{i})\,,
\label{tandf}
\end{align}
where we sum over repeated indices and $F^{\hat g}_{\mu_{1} \nu_{1}\dots \mu_{n}\nu_{n}}(z_{1},\dots,z_{n})$ are smooth functions  in the region $z_{i}\in\C_\bx$ with $z_{i}\neq z_{j}$ when $i\neq j$.
\end{proposition}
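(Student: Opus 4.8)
\noindent\textit{Proof strategy.} The plan is to uniformise the perturbed metric and then differentiate a closed-form expression for $F$, using three inputs already available: the covariance laws \eqref{diffeo}, \eqref{weyl} on the compact surface $\hat\C$; the solution theory of the Beltrami equation (Proposition \ref{beltrami_solu}) together with the decay estimates \eqref{Du}, \eqref{Duvarp}; and the smoothness of the insertion-point correlation functions proven in \cite{Oik}. First I would check that for $\epsilon$ in a small interval $I\ni 0$ the matrix $g=(g^{\mu\nu})^{-1}=(e^{-\sigma}\delta+\epsilon f)^{-1}$ is a genuine smooth metric on $\hat\C$ (it equals $\hat g$ outside $\opn{supp} f\subset\C_\bx$, hence is smooth at $\infty$), jointly smooth in $(z,\epsilon)$, and in fact real-analytic in $\epsilon$. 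Its Beltrami coefficient $\mu$ from \eqref{belcoef} is then jointly smooth, supported in $\opn{supp} f$, satisfies $\|\mu(\cdot,\epsilon)\|_\infty<1$ after shrinking $I$, and vanishes at $\epsilon=0$; after multiplication by a cutoff in $\epsilon$ it meets the hypotheses of Proposition \ref{beltrami_solu}, which produces $\psi(z,\epsilon)=z+u(z,\epsilon)$ — a smooth diffeomorphism of $\hat\C$ for each $\epsilon\in I$ — with $u$ and $\varphi$ smooth in $(z,\epsilon)$ and obeying \eqref{Du}, \eqref{Duvarp}; as noted in Section \ref{beltrami_section} these decay bounds are precisely the statement that $u,\varphi$ descend to smooth objects on the compact sphere depending smoothly on $\epsilon$, and both vanish at $\epsilon=0$.

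Applying the Weyl law \eqref{weyl} and then the diffeomorphism law \eqref{diffeo} to $g=e^{\varphi}\psi^{*}\hat g$ gives
\[F(\hat g,\epsilon)=e^{cA(\varphi(\cdot,\epsilon),\,\psi(\cdot,\epsilon)^{*}\hat g)}\;\prod_{i=1}^{N}e^{-\Delta_{\alpha_i}\varphi(x_i,\epsilon)}\;G\big(\psi(x_1,\epsilon),\dots,\psi(x_N,\epsilon)\big),\]
where $G(y_1,\dots,y_N):=\langle\prod_i V_{\alpha_i,\hat g}(y_i)\rangle_{\hat g}$ is smooth on the set of non-coinciding points by \cite{Oik}. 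Since $\psi(x_i,0)=x_i$ are distinct they remain distinct for $\epsilon\in I$, so by the chain rule $\epsilon\mapsto G(\psi(x_1,\epsilon),\dots,\psi(x_N,\epsilon))$ is smooth; $e^{-\Delta_{\alpha_i}\varphi(x_i,\epsilon)}$ is smooth since $\varphi(x_i,\cdot)$ is; and $\epsilon\mapsto A(\varphi(\cdot,\epsilon),\psi(\cdot,\epsilon)^{*}\hat g)$ is smooth because $A$ in \eqref{anomalyint} is a polynomial-type functional of $\varphi$, its derivatives, and the metric, integrated over the \emph{compact} manifold $\hat\C$, and all of these depend smoothly on $\epsilon$. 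Hence $F(\hat g,\cdot)$ is smooth on $I$, which is the first assertion; we then set $\caT_n(f,\dots,f):=\partial_\epsilon^n F(\hat g,\epsilon)|_{\epsilon=0}$.

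To promote these numbers to a distribution I would expand in $\epsilon$. The inverse metric enters only through $\mu$, and $\mu=\sum_{k\geq 1}\epsilon^k\mu_k$ with $\mu_k(z)$ a polynomial, homogeneous of degree $k$ in the matrix entries $f^{\mu\nu}(z)$, whose coefficients are built only from $\sigma(z)$ — because $g=e^{\sigma}\sum_{m\geq 0}(-\epsilon e^{\sigma}f)^m$. Feeding this into the Neumann series \eqref{neumann}, each $v_n=(\mu\mathcal{B})^n\mu$ and $u=\mathcal{C}v$ are analytic in $\epsilon$, and $\partial_\epsilon^k u(x_j,\cdot)|_{\epsilon=0}$ is a finite sum of iterated applications of the ($C_0^\infty$-continuous) operators $\mathcal{C}$ and $\mathcal{B}$ to products of the $\mu_{k_i}$, evaluated at $x_j$; the same holds for $\partial_\epsilon^k\varphi(x_j,\cdot)|_{\epsilon=0}$ and for $\partial_\epsilon^k A|_{\epsilon=0}$, and composing with the smooth functions $e^{-\Delta_{\alpha_i}(\cdot)}$ and $G$ (repeated chain and Leibniz rules) preserves this structure. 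Running the same argument with the multi-parameter family $g^{\mu\nu}=\hat g^{\mu\nu}+\sum_i\epsilon_i f_i^{\mu\nu}$ — which is smooth in $(\epsilon_1,\dots,\epsilon_n)$ by the argument above — the mixed derivative $\prod_i\partial_{\epsilon_i}F|_{\epsilon=0}$ extracts the coefficient of $\epsilon_1\cdots\epsilon_n$, which is genuinely $n$-linear in $(f_1,\dots,f_n)$ and continuous in the $C_0^\infty(\C_\bx,M_2)$-topology, being assembled from the continuous operations listed above together with substitution into the smooth $G$ and its derivatives. This defines $\caT_n\in(C_0^\infty(\C_\bx,M_2)^{\otimes n})^{*}$, and specialising $f_i=f$, $\epsilon_i=\epsilon$ recovers \eqref{actiooo} by the chain rule.

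Finally, for the kernel representation \eqref{tandf} I would write $\mathcal{C}$ and $\mathcal{B}$ as integral operators (with kernels proportional to $(z-z')^{-1}$ and its $z$-derivative), integrate by parts to transfer all remaining derivatives off the $f_i$, and collect terms: $\prod_i\partial_{\epsilon_i}F|_{\epsilon=0}$ then takes the form $\int F^{\hat g}_{\mu_1\nu_1\cdots\mu_n\nu_n}(z_1,\dots,z_n)\prod_i f_i^{\mu_i\nu_i}(z_i)\,dv_{\hat g}(z_i)$, where each $F^{\hat g}_{\cdots}$ is a finite sum of products of Cauchy-type kernels $(z-z')^{-m}$ in the variables $\{z_i\}\cup\{x_j\}$, of derivatives of $G$ at the fixed distinct points $x_j$, and of smooth functions of $\sigma$; such a sum is smooth on $\{z_i\in\C_\bx\}\cap\{z_i\neq z_j:i\neq j\}$, and the assumed disjointness of the supports of the $f_i$ ensures the pairing only tests $F^{\hat g}_{\cdots}$ on that region. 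I expect the main obstacle to be precisely this last organisational step: turning the chain-and-Leibniz expansion of the composite formula for $F$ — with the Neumann series nested inside $G$, inside $A$, and inside the factors $e^{-\Delta_{\alpha_i}\varphi(x_i)}$ — into the clean multilinear kernel form while keeping track of which Cauchy factors could collide, and verifying that none do on the stated region. The smoothness inputs (Proposition \ref{beltrami_solu}, \eqref{Duvarp}, and the smoothness of $G$ from \cite{Oik}) make every individual term manifestly regular, so the difficulty is combinatorial rather than analytic.
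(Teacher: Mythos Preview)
Your proposal is correct and follows essentially the same approach as the paper: uniformise via the Beltrami equation, apply the diffeomorphism and Weyl covariance to reduce to the fixed-metric correlation function $G(\psi(x_1),\dots,\psi(x_N))$, invoke \cite{Oik} for smoothness of $G$, and then expand $\mu$, $u$, $\varphi$ as polynomials in $\epsilon$ through the Neumann series to exhibit the multilinear/distributional structure and the explicit Cauchy-kernel form of $F^{\hat g}_{\mu_1\nu_1\cdots\mu_n\nu_n}$. The one place where the paper is slightly more explicit than your sketch is in using the disjoint-support hypothesis not merely to localise the pairing but to kill the cross terms $\partial_{\epsilon_i}\partial_{\epsilon_j}\mu|_0$ (forcing each $f_i$ to appear exactly once, undifferentiated, in a separate slot of the $\mathcal{C}/\mathcal{B}$ chain), which is precisely the mechanism that produces a clean integral kernel rather than a general distribution.
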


\begin{proof}
By Proposition \ref{beltrami_solu} for $\epsilon$   small enough we have
\begin{align*}
g &= e^{\varphi} \psi^* \hat g\,.
\end{align*}
Writing 
\begin{align}\label{geepsi}
g=e^\sigma(\delta+\zeta)
\end{align}
the
Beltrami coefficient \eqref{belcoef} is
\begin{align}\label{muepsi}
\mu 
=
\frac{\zeta_{\bar z\bar z}}{\sqrt{(1+4\zeta_{z\bar z})^{2}-4\zeta_{z z}\zeta_{\bar z\bar z}}+\zeta_{ z\bar z}}
\end{align}
and the function $\varphi$ is given by
\begin{align}\label{varphi}
\varphi &=  \sigma- \sigma \circ \psi +\hf\ln((1+4\zeta_{z\bar z})^{2}-4\zeta_{z z}\zeta_{\bar z\bar z})-\ln(|1+\partial_{z}u|^{2}-|\partial_{\bar z}u|^{2}).
\end{align}
Now by the diffeomorphism and Weyl transformation laws in Proposition \ref{correlationf} 
\begin{align}\label{fundaid}
F(\hat g, \eps)= \langle \prod_{i=1}^N V_{\alpha_i,g}(x_i) \rangle_{g}
&= \langle \prod_{i=1}^N V_{\alpha_i, e^{\varphi} \psi^* \hat g}(x_i) \rangle _{e^{\varphi} \psi^* \hat g} = e^{cA(\varphi, \psi^* \hat g)} e^{- \sum_{i=1}^N \Delta_{\alpha_i}\varphi(x_i)} \langle \prod_{i=1}^N V_{\alpha_i, \hat g}(\psi(x_i)) \rangle_{\hat g} \,.
\end{align}
We will now argue that smoothness of $F^{\hat g}$ in  $\epsilon$ will follow from smoothness of $\varphi$ and $u$ in $\epsilon$ (recall that $\psi(z)=z+u(z)$). First, to prove smoothness of the expectation on the right-hand side of \eqref{fundaid} we use the result  of the second author  \cite{Oik}  that the correlation function $(x_1,\hdots,x_N) \mapsto \langle \prod_{i=1}^N V_{\alpha_i, \hat g}(x_i) \rangle_{\hat g}$ is smooth in the region of non-coinciding points. Since $\psi$ is a diffeomorphism the points $\psi(x_i)$ are non-coinciding  as well and smoothness of the expectation in $\epsilon$ follows. 
Smoothness of the anomaly term follows from the bounds \eqref{Du} and \eqref{Duvarp}  which guarantee convergence of the integrals over $\C$.

We will now deduce from equations \eqref{psiu}, \eqref{varphi} and \eqref{fundaid} that to compute the derivative \eqref{actiooo} it is fundamental to compute $\partial_\epsilon^mu|_{\epsilon=0}$ and $\partial_\epsilon^m \zeta_{\mu \nu}|_{\epsilon=0}$ for $m\leq n$. Before starting the computations it is useful to remark that $\zeta$ can be written in terms of the functions $f^{\alpha \beta}$ by combining \eqref{metricvar} and \eqref{geepsi}. This yields
\begin{align}\label{zeta}
g&=\hat g + e^\sigma \zeta=(\hat{g}^{-1}+\epsilon f^{-1})^{-1} = \hat g  \sum_{k=0}^\infty (-\epsilon \hat g f^{-1})^k 
\end{align}
where $f^{-1}$ is the matrix with entries $\{f^{\alpha \beta}\}$ and $\hat g f^{-1}$ is a matrix product.
It follows that $\partial^m \zeta_{\mu \nu}|_{\epsilon=0}$ is a homogeneous polynomial of degree $m$ in the variables $\{f^{\alpha \beta}(z)\}$ and
\begin{align*}
\mu &=   \zeta_{\bar z \bar z} + \mathcal{O}(\epsilon^2) = - \tfrac{\epsilon}{4} e^{\sigma}  f^{zz} + \mathcal{O}(\epsilon^2)\,.
\end{align*}
Hence  $\mu=\mathcal{O}(\epsilon)$ and from \eqref{u_n} we get $u_{k}=\mathcal{C}(\mu \mathcal{B})^k \mu =\caO(\epsilon^{k+1})$. Now $u=\sum_k u_k$ implies
\begin{align*}
\partial_{\epsilon}^mu|_{\epsilon=0}=
\sum_{k=0}^{m-1}\partial_\epsilon^mu_{k}|_{\epsilon=0}\,.
\end{align*}
From \eqref{muepsi} and \eqref{zeta} we have
\begin{align*}
\partial_\epsilon^l\mu(z)|_{\epsilon=0}=p_l(z)
\end{align*}
where $p_l(z)$ is a homogeneous polynomial of degree $l$ in the variables $\{f^{\alpha\beta}(z)\}$. In particular $p_l\in C_0^\infty(\C_\bx)$. Hence 
\begin{align}\label{u_k_derivative}
\partial_\epsilon^m u_{k}|_{\epsilon=0}=\sum_{\sum_{i=1}^{k+1} l_i=m}\caC p_{l_1}\caB p_{l_2}\dots\caB p_{l_{k+1}}.
\end{align}
Now we know how the basic terms $\partial_\epsilon^m u|_{\epsilon=0}$ and $\partial_\epsilon^m \zeta_{\mu \nu}|_{\epsilon=0}$ look like.

Let us  now look at the various contributions to the derivative \eqref{actiooo}. From \eqref{fundaid} we see that we get derivatives of the form
\begin{align}\label{derivatives}
&\partial_\epsilon^k A(\varphi,\psi^* \hat g)|_{\epsilon=0} \,, \quad  \partial_\epsilon^k \varphi(x_i) |_{\epsilon=0}\,, \quad \partial_\epsilon^k \langle \prod_{i=1}^N V_{\alpha_i,\hat g}( \psi(x_i)) \rangle_{\hat g} |_{\epsilon=0},
\end{align}
with $k \leq n$. The anomaly term equals
\begin{align*}
A(\varphi,\psi^* \hat g) &= \tfrac{1}{96\pi} \int_\C (\psi^* \hat g)^{\alpha \beta} \partial_\alpha \varphi \partial_\beta \varphi dv_{\psi^* g} + \tfrac{1}{48 \pi}\int_\C R_{\psi^* \hat g} \varphi dv_{\psi^* g}\,.
\end{align*}
By recalling that $(\psi^* \hat g)^{\alpha \beta} = (D \psi (\hat g \circ \psi) D\psi^T)^{\alpha \beta}$, $dv_{\psi^* \hat g} = |\det D \psi| dv_{\hat g} $ and $R_{\psi^* \hat g} = R_{\hat g} \circ \psi$ we see that to compute $\partial_\epsilon A(\varphi, \psi^* \hat g)$ it suffices to compute $\partial_\epsilon \varphi$ and $\partial_\epsilon \psi = \partial_\epsilon u$. For the derivatives $\partial_\epsilon^k \varphi(x_i)$ it suffices to know $\partial_\epsilon^k u$ because of $\eqref{varphi}$ and $\eqref{zeta}$. Finally, for the last term in \eqref{derivatives} we note that
\begin{align*}
\partial_{\epsilon} \langle \prod_{i=1}^N V_{\alpha_i,\hat g}(\psi(x_i)) \rangle_{\hat g}&= \sum_{j=1}^N \partial_\epsilon \psi (x_j) \partial_{x_j} \langle \prod_{i=1}^N V_{\alpha_i,\hat g}(x_i) \rangle_{\hat g} = \sum_{j=1}^N \partial_\epsilon u (x_j) \partial_{x_j} \langle \prod_{i=1}^N V_{\alpha_i,\hat g}(x_i) \rangle_{\hat g}\,.
\end{align*}
Thus, we see that all the derivatives in \eqref{derivatives} reduce to computing derivatives of $u$ and derivatives of $\zeta_{\mu\nu}$. The derivatives of $u$ we already computed above and the derivatives of $\zeta$ are easily read off of \eqref{zeta}.

Next we argue that the left-hand side of \eqref{actiooo} can really be expressed in terms of a distribution. Let $l$ be a positive integer and fix a set of positive integers $(l_i)_{i=1}^k$ such that $\sum_i l_i = l$. Let $m^i_{l_i}(f)$ denote a monomial of degree $l_i$ in  the variables $\{D^Kf^{\alpha\beta}(x)\}$ where $D=\partial_x,\partial_{\bar x}$ and  $K\leq 2$. We will now explain that the above and \eqref{varphi} imply that the derivative \eqref{actiooo} consists of products and sums of terms that are $l$-linear functionals of $\{f^{\alpha\beta}(x)\}$ of the form
\begin{subequations}\label{Tpl}
\begin{align}
\int_\C \caT_{p,\bl}(z,f,\dots,f) \, dv_g(z)&= \int_\C (\caC^p m^1_{l_1}(f)\caB m^2_{l_2}(f)\dots\caB m^k_{l_k}(f))(z) \, dv_g(z)\,, \\
\caT_{p,\bl}(x_i,f,\dots,f)&=(\caC^p m^1_{l_1}(f)\caB m^2_{l_2}(f)\dots\caB m^k_{l_k}(f))(x_i)\,,
\end{align}
\end{subequations}
with $p=0,1$ and $\bl=(l_1,\hdots,l_k)$. The fact that the above expression can be written in terms of an $l$-linear functional $\caT_{p,\bl}$ follows from the definition of $m^i_{l_i}$. The derivatives containing the integral over $z$ arise from taking derivatives of $A(\varphi,\psi^* \hat g)$ in \eqref{fundaid} and the terms with the $x_i$'s come from the derivatives of $\varphi(x_i)$ and $\langle \prod_{i=1}^N V_{\alpha_i, \hat g} (x_i) \rangle_{\hat g}$. Then the expressions \eqref{Tpl} come from the observation above that everything reduces to derivatives of $u$ and $\zeta$ and using \eqref{zeta} and \eqref{u_k_derivative}. The maps $f\mapsto m^i_{l_i}(f)$ and $(f,g)\mapsto f\caB g$ are continuous maps $C_0^\infty\to C_0^\infty$ and  $C_0^\infty\times C_0^\infty\to C_0^\infty$, respectively (in the Fr\'{e}chet topology of $C_0^\infty$), and  $f\mapsto (\caC f)(z)$ is continuous $C_0^\infty\to \C$. We conclude that $\caT_{p,\bl}$ is continuous in its arguments and by the nuclear theorem defines a distribution in $\caD'(\C_\bx^l)$.

The fact that the derivative \eqref{actiooo} can be expressed in terms of an $n$-linear functional $\caT_n$ then comes from the fact that we take $n$-derivatives, so that the result is a sum of terms of the form
\begin{align*}
\prod_j \int \mathcal{T}_{p_j,\bl_j}(z_j,f\hdots,f) dv_g(z_j) \prod_k \mathcal{T}_{p,\bl_k}(x_{i_k},f,\hdots,f)\,,
\end{align*}
with $\sum_j |\bl_j| + \sum_k |\bl_k| =n$, where $|\bl|=\sum_i l_i$. From the definitions of $\caT_{p,\bl}$ and $m^i_{l_i}$ it follows that such integrals can be written in terms of an $n$-linear functional on $C^\infty_0(\C_\bx,M_2)$ and in the end we define  $\caT_n$ to be the resulting sum of such functionals.


For the third claim let $f_i\in C_0^\infty(\C_\bx,M_2)$, $i=1,\dots,n$ have disjoint supports and consider the perturbed metric
\begin{align*}
g^{\mu\nu}=\hat g^{\mu\nu}+\sum_{i=1}^n\epsilon_{i}f_{i}^{\mu\nu}.
\end{align*}
Next we explain that from the definition of $\mathcal{T}_n $ and the fact that the $f_i$'s have disjoint support we get
\begin{align}\label{Tneps}
\caT_{n}(f_{1},\dots,f_{n})
=(4\pi)^{n}\prod_{j=1}^{n} \partial_{\epsilon_j}|_{\epsilon_j=0} \langle\prod_{i=1}^{N}V_{\alpha_{i},g}(x_{i})\rangle_{g}.
\end{align}
To see this note that in the previous computation with \eqref{metricvar} for example the term $\partial^2_\epsilon \mu|_{\epsilon=0}(z)$ was a homogeneous polynomial of degree $2$ in the variables $\{f^{\alpha \beta} (z)\}$. 
If we were to do the same computation with the perturbation $\epsilon_1 f_1 + \epsilon_2 f_2$ of the metric and then compute $\partial_{\epsilon_1}|_0 \partial_{\epsilon_2}|_0 \mu(z)$ the result vanishes if $f_1$ and $f_2$ have disjoint support. On the other hand, the term $(\partial_\epsilon \mu|_0) (\partial_\epsilon \mu|_0)$ (meaning we hit different terms with the derivatives) is a $2$-linear functional $\mathcal{S}$ on $C^\infty_0(\C_\bx,M_2)$ so can be written as $\mathcal{S}(f,f)$ and in this case we have $\mathcal{S}(f_1,f_2)=(\partial_{\epsilon_1}\mu|_0) (\partial_{\epsilon_2}\mu|_0)$. Applying the same logic for all the terms appearing in the computation of the derivatives we end up with \eqref{Tneps}.

From the assumption that the $f_i$'s have disjoint supports it follows (as in \eqref{zeta}) that
\begin{align}\label{zeta_n}
e^\sigma \zeta &= \hat g \sum_{k=1}^\infty  \sum_{i=1}^n  (-\epsilon_i \hat g f_i^{-1})^k.
\end{align}
In particular, this implies $\partial_{\epsilon_i}\partial_{\epsilon_j}\mu(z)|_{(\epsilon_1,\hdots,\epsilon_n)=0}=0$ for $i\neq j$ so that $l_i=1$ in \eqref{Tpl} for all $i$. Furthermore 
\begin{align*}
\partial_{\epsilon_i}\mu |_{(\epsilon_1,\hdots,\epsilon_n)=0} = \partial_{\epsilon_i}\gamma_{\bar z\bar z}|_{(\epsilon_1,\hdots,\epsilon_n)=0}= \partial_{\epsilon_i}\zeta_{\bar z\bar z}|_{(\epsilon_1,\hdots,\epsilon_n)=0}=-\tfrac{1}{4}e^\sigma f_i^{zz}.
\end{align*}
Hence 
\begin{align*}
\prod_{i=1}^{k}\partial_{\epsilon_i} u_{k-1}(z) |_{(\epsilon_1,\hdots,\epsilon_n)=0}& = \left(- \tfrac{1}{4}\right)^{k }\sum_{\pi\in S_{k}} \mathcal{C}e^\sigma f^{zz}_{{\pi(1)}}\mathcal{B} e^\sigma f^{zz}_{{\pi(2)}} \mathcal{B}\dots \mathcal{B}e^\sigma f^{zz}_{{\pi(k)}}(z)\\&=\int u_k(\by,z)\prod_{j=1}^k f^{zz}_{j} (y_j) \, dv_g(y_j) 
\end{align*}
where
\begin{align*}
u_k(\by,z)=\tfrac{1}{\pi}\left(\tfrac{1}{4\pi}\right)^{k} \sum_{\pi \in S_{k}}\frac{1}{z-y_{\pi(1)}}\frac{1}{(y_{\pi(1)}-y_{\pi(2)})^2}\dots\frac{1}{(y_{\pi(k-1)}-y_{\pi(k)})^2}.
\end{align*}
From \eqref{varphi} we also get terms which contain the derivatives of $\zeta$
\begin{align*}
\partial_{\epsilon_i} |_{\epsilon_i=0} \hf\ln((1+4\zeta_{z\bar z})^{2}-4\zeta_{z z}\zeta_{\bar z\bar z}) &= \tfrac{1}{4} e^\sigma f_i^{z \bar z}\,.
\end{align*}
Thus the derivatives of $\zeta$ in the anomaly term in \eqref{fundaid} give contributions of the form.
\begin{align*}
\tfrac{c/2}{96\pi} \int R_{\hat g}(z) f_i(z)^{z \bar z} dv_{\hat g(z)}\,.
\end{align*}
The term $e^{-\sum_i \Delta_{\alpha_i} \varphi (x_i)}$ in \eqref{fundaid} contributes only derivatives of $u$ since $\zeta(x_i)=0$ by the assumption on the supports of the $f_i$'s.

We conclude that the functions $F^{\hat g}_{\mu_1 \nu_1 \hdots \mu_n \nu_n}$ in \eqref{tandf} 
are polynomials in the variables
\begin{align*}
(z_i-z_j)^{-1}, \quad (z_i-x_j)^{-1}, \quad R_{\hat g}(z_i)\,,
\end{align*}
and their complex conjugates. The complex conjugates come from the derivatives of $\ln(|1+\partial_{z}u|^{2}-|\partial_{\bar z}u|^{2})$ in \eqref{varphi} since $|1+\partial_z u|^2 = (1+\partial_z u)(1+\partial_{\bar z} \bar u)$ and the derivatives of $A(\varphi,\psi^* \hat g)$ since $(\psi^* \hat g)^{\alpha \beta}$ contains elements of the derivative matrix
\begin{align*}
D \psi &= \begin{pmatrix}
1+\partial_z u & \partial_{\bar z} u \\ \partial_z \bar u & 1 + \partial_{\bar z} \bar u
\end{pmatrix}\,.
\end{align*}
Thus the functions $F^{\hat g}_{\mu_1 \nu_1 \hdots \mu_n \nu_n}$ are smooth on the claimed region.
\end{proof}

To avoid confusion with the notations in the following computations, we now carefully explain what the previous result says about the functional derivatives of the LCFT correlation function. Recall that we are denoting $g^{\mu \nu} = \hat g^{\mu \nu} + \sum_{i=1}^n \eps_i f_i^{\mu \nu}$ and $F(\hat g,\eps_1,\hdots,\eps_n) = \langle\prod_{i=1}^{N}V_{\alpha_{i}, g}(x_{i})\rangle_{ g}$. First of all, we want to interpret \eqref{tandf} using the functional derivative from Definition \ref{functional_derivative}. Note that we can think of $\hat g \mapsto F(\hat g,0)$ as being a function of the four smooth functions given by the components of the inverse metric $\hat g^{\mu \nu}$. Denote $g_{\eps_1}^{\mu \nu} = \hat g^{\mu \nu} +  \eps_1 f_k^{\mu \nu}$. Now, provided that the functional derivatives $\frac{\delta F(\cdot,0)}{\delta \hat g^{\mu \nu}}$ exist, we have
\begin{align*}
\partial_{\eps_1}|_0 F(\hat g,\eps_1) &= \sum_{\mu, \nu} \int  \frac{\delta F(\hat g ,0)}{\delta \hat g^{\mu \nu}(z)} f_1^{\mu \nu} (z) d^2z\,,
\end{align*}
The right-hand side of the above equation has four terms, which arise from the fact that on the left-hand side $F(\hat g,\eps_1)$ is a function of all the four components $g_{\eps_1}^{\mu \nu}$, and the derivative $\partial_{\eps_1}$ has to operate on all these four arguments. By taking more derivatives we get
\begin{align*}
\prod_{k=1}^n \partial_{\eps_k}|_{\eps_k=0}F(\hat g, \eps_1,\hdots,\eps_n) &=   \int  \frac{\delta^n F(\hat g,0) }{ \delta \hat g^{\mu_1 \nu_1} (z_1) \hdots  \delta \hat g^{\mu_n \nu_n}(z_n)} \prod_{k=1}^n  f_k^{\mu_k \nu_k}(z_k) d^2z_k \,.
\end{align*}
Now comparing with \eqref{tandf} and noting that
$$
\prod_{k=1}^n \partial_{\eps_k} F(\hat g, \eps_1, \hdots, \eps_n) |_{(\eps_1,\hdots,\eps_n)=0} = \prod_{k=1}^n \partial_{\eps_k}|_{\eps_k=0}F(\hat g, \eps_1,\hdots,\eps_n)
$$
we get
\begin{align}
 e^{\sum_{k=1}^n \sigma(z_k)} F^{\hat g}_{\mu_{1} \nu_1 \dots \mu_n \nu_{n}}(z_{1},\dots,z_{n})   &=   \frac{\delta^n F(\hat g, 0) }{ \delta \hat g^{\mu_1 \nu_1} (z_1) \hdots  \delta \hat g^{\mu_n \nu_n}(z_n)},
\end{align}
where the factor $e^{\sum_k \sigma(z_k)}$ comes from the fact that in \eqref{tandf} the right-hand side contains the volume forms $dv_{\hat g}(z_k) = e^{\sigma(z_k)} d^2 z_k$. This shows that Proposition \ref{smoothness} gives the existence of the functional derivatives of the LCFT correlation functions with respect to the metric and that these functional derivatives are given in terms of the functions $F^{\hat g}_{\mu_1 \nu_1 \hdots \mu_n \nu_n}$. Thus, we introduce the notation
\begin{align}\label{correfuns}
\langle\prod_{k=1}^{n} T_{\mu_{k}\nu_{k}}(z_{k})\prod_{i=1}^N V_{\alpha_{i}, \hat g}(x_{i})\rangle_{\hat g} &:= (4 \pi)^{n} F^{\hat g}_{\mu_{1} \nu_1 \hdots \mu_n \nu_{n}}(z_{1},\dots,z_{n})\,.
\end{align}
We add the factor of $(4\pi)^{n}$ to match conventions of physics literature. This means that the left-hand side on the above expression is not the Liouville expectation of some function, but rather it is just a function of the $z_k$'s and $x_i$'s.

At this point we remark that Proposition \ref{smoothness} applies to an arbitrary metric $\tilde g$ on the Riemann sphere $\rs$, although in the proof we considered a diagonal metric $\hat g = e^\sigma \delta$. Indeed, there exists a smooth diffemorphism $\tilde \psi $ of $\rs$ such that $\tilde g = \tilde \psi^*( e^{\tilde \sigma} \delta)$ where $\tilde \sigma: \rs \to \R$ is a smooth function. Now a perturbed metric $\tilde g_\eps^{\mu \nu} = \tilde g^{\mu \nu} + \eps f^{\mu \nu}$ can be written in the form
\begin{align*}
\tilde g^{\mu \nu} + \eps f^{\mu \nu} &= ((\tilde \psi)^{-1})^* (e^{-\tilde \sigma} \delta + \eps \tilde \psi^* f)^{\mu \nu}\,.
\end{align*}
After using the diffeomorphism transformation law from Proposition \ref{correlationf} the more general result follows. Then, using \eqref{tandf} for the metric $g_{\eps_1}^{\mu \nu} = \hat g^{\mu \nu} + \eps_1 f_1^{\mu \nu}$ instead of $\hat g$ we get
\begin{align}
\prod_{j=1}^n \partial_{\epsilon_j}|_{\eps_j=0}  F(\hat g, \eps_1,\hdots,\eps_n)  &=  \partial_{\eps_1}|_{\eps_1=0} \int F^{g_{\eps_1}}_{\mu_2 \nu_2 \hdots \mu_n \nu_n}(z_2,\hdots,z_n) \prod_{j=2}^n f_j^{\mu_j \nu_j}(z_j) dv_{g_{\eps_1}}(z_j)\,.
\end{align}
Now Definition \ref{functional_derivative} applied to the derivative $\partial_{\eps_1}|_{\eps_1=0}$ on the right-hand side implies
\begin{align}
&\prod_{j=1}^n \partial_{\epsilon_j}  F(\hat g, \eps_1,\hdots,\eps_n) |_{(\epsilon_1,\hdots,\epsilon_n)=0} \nonumber \\
&= \int f_1^{\mu_1 \nu_1}(z_1) \frac{\delta}{\delta \hat g^{\mu_1 \nu_1}(z_1)} F^{\hat g}_{\mu_2 \nu_2 \hdots \mu_n \nu_n}(z_2,\hdots,z_n) \prod_{j=2}^n f_j^{\mu_j \nu_j}(z_j) dv_{\hat g}(z_j) d^2 z_1\,,
\end{align}
where we sum over repeated indices. Comparing this with \eqref{tandf} gives
\begin{align}\label{fditer}
e^{\sigma(z_1)} F^{\hat g}_{\mu_{1} \nu_1\dots \mu_n \nu_{n}}(z_{1},\dots,z_{n}) &=  \frac{\delta}{\delta \hat g^{\mu_1 \nu_1}(z_1)} F^{\hat g}_{\mu_2 \nu_2 \hdots \mu_n \nu_n}(z_2,\hdots,z_n)\,.
\end{align}
We will use this equation later.
\begin{definition}\label{def_notation}
It is natural to define the notations
\begin{align}\label{t_lin}
\langle (a T_{\mu \nu}(z) + f(z))  \prod_{i=1}^n T_{\mu_i \nu_i}(z_i) \prod_{j=1}^N V_{\alpha_j,g}(x_j) \rangle_g &:= a \langle T_{\mu \nu}(z)  \prod_{i=1}^n T_{\mu_i \nu_i}(z_i) \prod_{j=1}^N  V_{\alpha_j,g}(x_j) \rangle_g \\
& \quad + f(z) \langle  \prod_{i=1}^n T_{\mu_i \nu_i}(z_i)  \prod_{j=1}^N V_{\alpha_j,g}(x_j) \rangle_g\,, \nonumber
\end{align}
where $a \in \C$ and $f: \C \to \C$ is a smooth function. 
We will also denote
\begin{align}\label{t_notation}
\langle \partial_z T_{\mu \nu}(z) \prod_{i=1}^n T_{\mu_i \nu_i}(z_i) \prod_{j=1}^N V_{\alpha_j,g}(x_j) \rangle_g &:= \partial_z F^{g}_{\mu \nu \hdots \mu_n \nu_n}(z,z_1,\hdots,z_n)\,,
\end{align}
and
\begin{align}\label{tpsi_notation}
&\langle \sum_{\alpha, \beta } (D\psi^{T})_{\mu \alpha}(z)T_{\alpha \beta}(\psi(z)) (D\psi)_{\beta \nu}(z) \prod_{i=1}^n T_{\mu_i \nu_i}(z_i)  \prod_{j=1}^N V_{\alpha_j,g}(x_j) \rangle_g \\
&:= \sum_{\alpha, \beta} (D \psi^T)_{\mu \alpha} (z) (D \psi)_{\beta \nu} (z) F^{g}_{\alpha \beta \mu_1 \nu_1 \hdots \mu_n \nu_n}(\psi(z),z_1,\hdots,z_n) \,, \nonumber
\end{align}
where $\psi: \rs \to \rs$ is a smooth diffeomorphism. Denote
\begin{align*}
\tilde T_{\mu \nu}(z) = \sum_{\alpha,\beta} (D \psi)^T_{\mu \alpha} T_{\alpha \beta} (\psi(z)) (D \psi)_{\beta \nu}(z).
\end{align*}
The notation \eqref{tpsi_notation} has an obvious generalisation for multiple $\tilde T_{\mu_i \nu_i} + f_i$-insertions, where $f_i: \C \to \C$ is a smooth function, given recursively by
\begin{align}\label{tt_notation}
&\langle \prod_{k=1}^m \big( \tilde T_{\alpha_k \beta_k}(z_k) + f_k(z_k) \big) \prod_{i=1}^n T_{\mu_i \nu_i}(w_i) \prod_{j=1}^N V_{\alpha_j,g}(x_j) \rangle_g \\
&:= \langle \prod_{k=2}^m   \big( \tilde T_{\alpha_k \beta_k}(z_k) + f_k(z_k) \big) \tilde T_{\alpha_1 \beta_1}(z_1) \prod_{i=1}^n T_{\mu_i \nu_i}(w_i)  \prod_{j=1}^N V_{\alpha_j,g}(x_j) \rangle_g  \nonumber \\
& \quad + f_1(z_1) \langle \prod_{k=2}^m   \big( \tilde T_{\alpha_k \beta_k}(z_k) + f_k(z_k) \big) \prod_{i=1}^n T_{\mu_i \nu_i}(w_i) \prod_{j=1}^N V_{\alpha_j,g}(x_j) \rangle_g. \nonumber
\end{align}
\end{definition}
For the time being let us consider the variation of these $T$-correlation functions under diffeomorphisms and Weyl transformations. 

Let us define  
\begin{align}\label{aab_def}
a_{\alpha\beta}(\varphi,g,z):=4\pi c \frac{\delta}{\delta g(z)^{\alpha\beta}}A(\varphi,g).
\end{align}
In conformal coordinates $\hat g_{\alpha\beta}=e^\sigma \delta_{\alpha\beta}$ we have from a computation similar to Lemma \ref{variationlemma}
\begin{align}\label{aab}
a_{\alpha\beta}(\varphi,\hat g,z)=\tfrac{c}{24\pi}(\partial_\alpha\varphi(z) \partial_\beta\varphi(z)-\hf \delta_{\alpha\beta}(\partial\varphi(z))^2-\partial_\alpha\partial_\beta\varphi(z)+\hf(\partial_\alpha\sigma(z)\partial_\beta\varphi(z)+\partial_\beta\sigma(z)\partial_\alpha\varphi(z))).
\end{align}
Furthermore we have the locality property for  $z\neq z'$:
\begin{align}
\frac{\delta^{2}}{\delta \hat g^{\mu \nu}(z)\delta \hat g^{\mu' \nu'}(z')}A(\varphi,\hat g)=0\label{locality}.
\end{align}
Indeed by the definition of the functional derivative in Definition \ref{functional_derivative}
\begin{align*}
\frac{\delta^{2}}{\delta \hat g^{\mu \nu}(z)\delta \hat g^{\mu' \nu'}(z')}A(\varphi,\hat g) = \frac{1}{4 \pi c} \frac{\delta}{\delta g^{\mu'\nu'}(z')}a_{\mu \nu}(\varphi,\hat g, z) &= 0
\end{align*}
for $z' \neq z$.

Then

\begin{proposition} \label{tinsert}
Let $(x_1,\hdots,x_N) \in \C^N$ and $(z_1,\hdots,z_n) \in \C_\bx^n$ be tuples of disjoint points. Then the stress-energy tensor correlation functions satisfy
\begin{align}
\langle\prod_{i=1}^n T_{\mu_{i}\nu_{i}}(z_{i})\prod_{j=1}^N V_{\alpha_{j},\psi^* g}(x_{j})\rangle_{\psi^{\ast}g}=\langle\prod_{i=1}^n \tilde T_{\mu_{i}\nu_{i}}(z_{i})\prod_{j=1}^N V_{\alpha_{j},g}(\psi(x_{j}))\rangle_{g}\label{emdiff}
\end{align}
where $\tilde T_{\mu \nu} =  \sum_{\alpha,\beta} (D\psi^{T})_{\mu \alpha}(T_{\alpha \beta}\circ\psi) (D\psi)_{\beta \nu}$ and the right-hand side above is defined in \eqref{tt_notation}. Also,
\begin{align}
\langle\prod_{i=1}^n T_{\mu_{i}\nu_{i}}(z_{i})\prod_{j=1}^N V_{\alpha_{j},e^\varphi g}(x_{j})\rangle_{e^{\varphi}g}=e^{cA(\varphi,g)}\prod _{k=1}^N e^{-\Delta_{\alpha_k}\varphi(x_{k})}\langle\prod_{i=1}^n (T_{\mu_{i}\nu_{i}}(z_{i})+a_{\mu_{i}\nu_{i}}(\varphi,g,z_{i}))\prod_{j=1}^N V_{\alpha_{j},g}(x_{j})\rangle_{g}\label{emweyl}
\end{align}
where the right-hand side above is defined in \eqref{tt_notation} by setting $f_k(z)=a_{\mu_k \nu_k}(\varphi,g,z_k)$ and $\psi(z)=z$.

\end{proposition}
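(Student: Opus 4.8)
The plan is to prove the two identities by induction on the number $n$ of $T$-insertions, using the recursion \eqref{fditer} as the one-step engine. Written in the notation \eqref{correfuns} and with $\sqrt{\det g(z_1)}\,d^2z_1=dv_g(z_1)$, that recursion reads (for an arbitrary metric $g$ on $\rs$, by the remark following Proposition \ref{smoothness})
\[
\langle T_{\mu_1\nu_1}(z_1)\prod_{i=2}^n T_{\mu_i\nu_i}(z_i)\prod_{j=1}^N V_{\alpha_j,g}(x_j)\rangle_g=\frac{4\pi}{\sqrt{\det g(z_1)}}\,\frac{\delta}{\delta g^{\mu_1\nu_1}(z_1)}\langle \prod_{i=2}^n T_{\mu_i\nu_i}(z_i)\prod_{j=1}^N V_{\alpha_j,g}(x_j)\rangle_g,
\]
the functional derivative being that of Definition \ref{functional_derivative}. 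The case $n=0$ of \eqref{emdiff} and of \eqref{emweyl} is exactly the diffeomorphism resp.\ Weyl covariance of the vertex correlation functions (Proposition \ref{correlationf}), the empty products of $\tilde T$'s and of $(T+a)$'s being $1$. For the inductive step one applies $\tfrac{4\pi}{\sqrt{\det(\,\cdot\,)(z_1)}}\tfrac{\delta}{\delta(\,\cdot\,)^{\mu_1\nu_1}(z_1)}$, for the appropriate metric, to both sides of the $(n-1)$-fold identity: by the displayed recursion this turns the left-hand side into the left-hand side of the $n$-fold identity, and it remains to see that the right-hand side transforms correctly, which rests on a chain rule for the metric functional derivative. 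Throughout, the disjointness of $z_1,\dots,z_n,x_1,\dots,x_N$ keeps every evaluation of the smooth functions $F^g$ (and of the local functional $A$ in \eqref{locality}) in the region where Proposition \ref{smoothness} and \cite{Oik} supply all the derivatives used.

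For \eqref{emdiff} the relevant chain rule is the behaviour under pullback. Tracking how a delta-source perturbation of $(\psi^*g)^{\mu_1\nu_1}$ at $z_1$ corresponds, via the pullback formula \eqref{pullback} and the change-of-variables Jacobian for delta functions, to a perturbation of $g^{\alpha\beta}$ at $\psi(z_1)$, one gets for a functional $S$ of $g$
\[
\frac{\delta S}{\delta(\psi^*g)^{\mu_1\nu_1}(z_1)}=|\det D\psi(z_1)|\sum_{\alpha,\beta}(D\psi^T)_{\mu_1\alpha}(z_1)\,(D\psi)_{\beta\nu_1}(z_1)\,\frac{\delta S}{\delta g^{\alpha\beta}(\psi(z_1))}.
\]
Since $\sqrt{\det(\psi^*g)(z_1)}=|\det D\psi(z_1)|\sqrt{\det g(\psi(z_1))}$, the Jacobians cancel; applying the operator to the $(n-1)$-fold \eqref{emdiff}, and noting that the $D\psi$-factors attached to the $\tilde T_{\mu_i\nu_i}(z_i)$, $i\ge 2$, and to the vertex operators are $g$-independent (so they pass through the derivative, which then acts on the underlying $F^g$-functions by the recursion), one obtains $\sum_{\alpha,\beta}(D\psi^T)_{\mu_1\alpha}(z_1)(D\psi)_{\beta\nu_1}(z_1)\langle T_{\alpha\beta}(\psi(z_1))\prod_{i=2}^n\tilde T_{\mu_i\nu_i}(z_i)\prod_j V_{\alpha_j,g}(\psi(x_j))\rangle_g$, which is the $n$-fold right-hand side of \eqref{emdiff} by the definitions \eqref{tpsi_notation} and \eqref{tt_notation}.

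For \eqref{emweyl}, set $g'=e^\varphi g$; then $(g')^{\mu\nu}=e^{-\varphi}g^{\mu\nu}$ and $\sqrt{\det g'(z_1)}=e^{\varphi(z_1)}\sqrt{\det g(z_1)}$, so on functionals of $g$ the conformal factors cancel and $\tfrac{4\pi}{\sqrt{\det g'(z_1)}}\tfrac{\delta}{\delta(g')^{\mu_1\nu_1}(z_1)}=\tfrac{4\pi}{\sqrt{\det g(z_1)}}\tfrac{\delta}{\delta g^{\mu_1\nu_1}(z_1)}$. Applying this to the $(n-1)$-fold \eqref{emweyl} and using the product rule: differentiating $e^{cA(\varphi,g)}$ produces, by \eqref{aab_def}, the factor $a_{\mu_1\nu_1}(\varphi,g,z_1)\,e^{cA(\varphi,g)}$; the locality \eqref{locality} of the anomaly makes the insertions $a_{\mu_i\nu_i}(\varphi,g,z_i)$, $i\ge 2$, inert for $z_1\neq z_i$; and differentiating the $(n-1)$-fold $(T+a)$-correlation inserts a $T_{\mu_1\nu_1}(z_1)$ by the recursion. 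Recombining the two contributions through the linearity convention \eqref{t_lin} gives exactly the $n$-fold right-hand side of \eqref{emweyl}; the same computation works for a general (non-conformal) $g$ since $A$ is a local functional, or one may reduce to the conformal case using \eqref{diffeo} and the now-proven \eqref{emdiff}, as in the remark after Proposition \ref{smoothness}.

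The step I expect to be the main obstacle is making the pullback chain rule for the metric functional derivative rigorous: pinning down the single Jacobian factor $|\det D\psi(z_1)|$ and checking that the distributional identities of Definition \ref{functional_derivative} genuinely survive the change of variables. Once that is in place, the rest of the induction is bookkeeping with the pullback and determinant formulas of the appendix together with the recursion \eqref{fditer}.
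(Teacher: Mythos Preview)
Your inductive proof is correct and essentially equivalent in content to the paper's argument, but the paper organizes it differently and thereby avoids precisely the point you flag as the main obstacle. Rather than inducting on $n$ via \eqref{fditer} and invoking a pointwise chain rule for $\tfrac{\delta}{\delta(\psi^*g)^{\mu\nu}(z)}$, the paper perturbs the metric by all $n$ test functions simultaneously: it sets $h^{\alpha\beta}=(\psi^*g)^{\alpha\beta}+\sum_i\epsilon_i f_i^{\alpha\beta}$, observes that $h=\psi^*\tilde g$ with $\tilde g^{\alpha\beta}=g^{\alpha\beta}+\sum_i\epsilon_i\tilde f_i^{\alpha\beta}+\caO(\epsilon_i\epsilon_j)$ for the transformed test functions $\tilde f_i=(D\psi^{-1})^T(f_i\circ\psi^{-1})(D\psi^{-1})$, applies the $n=0$ diffeomorphism covariance of Proposition~\ref{correlationf} to $F(h)=F(\psi^*\tilde g)$, takes $\prod_i\partial_{\epsilon_i}|_0$, and reads off the identity by comparing the resulting integrals after the change of variables $z_i\to\psi(z_i)$. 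The Weyl case is handled the same way, with the locality \eqref{locality} used exactly as you do to factor the anomaly through the product of derivatives.

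The gain of the paper's route is that the transformation law for the test functions $f_i\mapsto\tilde f_i$ is an honest operation on smooth compactly supported functions, so everything stays at the level of \eqref{tandf} paired with test functions and no distributional chain rule needs to be justified; your Jacobian identity is then a consequence rather than an input. The gain of your route is a cleaner recursive structure that makes the role of \eqref{fditer} explicit, at the cost of having to establish the pullback chain rule directly---which, as you note, is the only nontrivial step and can be done by unwinding Definition~\ref{functional_derivative} against a test function and changing variables.
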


\proof

For the first claim  let  
\begin{align*}
h^{\alpha\beta} &=(\psi^\ast g)^{\alpha\beta}+\sum_{i=1}^n \epsilon_i f_i^{\alpha\beta},
\end{align*}
where the $f_i$'s have disjoint supports. Then
\begin{align*}
h=\psi^\ast \tilde g
\end{align*}
where
\begin{align*}
\tilde g^{\alpha\beta}= g^{\alpha\beta}+\sum_{i=1}^n \epsilon_i \tilde f_i^{\alpha\beta}+\gamma^{\alpha\beta}
\end{align*}
with $\tilde f=(D\psi^{-1})^T (f\circ\psi^{-1}) (D\psi^{-1})$ and $\gamma = \mathcal{O}(\eps_i \eps_j)$ meaning $\partial_{\eps_i} \gamma|_{(\eps_1,\hdots,\eps_n)=0} =0$. We denote $ F(g,x_1,\hdots,x_N)=\langle \prod_{i=1}^N V_{\alpha_i,g}(x_i)\rangle_g$. Thus by \eqref{tandf}, \eqref{correfuns} and the diffeomorphism covariance in Proposition \ref{correlationf} we get (we sum over repeated indices)
\begin{align*}
\int \langle\prod_{i=1}^n T_{\mu_{i}\nu_{i}}(z_{i}) &\prod_{j=1}^N V_{\alpha_{j},g}(x_{j})\rangle_{\psi^{\ast}g}\prod_{i=1}^n f^{\mu_{i}\nu_{i}}(z_i)dv_{\psi^\ast g}(z_i) \\
&=(4\pi)^{n}\prod_{i=1}^{n}\partial_{\epsilon_i} |_{0} F(\psi^\ast \tilde g,x_1,\hdots,x_N)\\
&=(4\pi)^{n}\prod_{i=1}^{n}\partial_{\epsilon_i} |_{0} F( \tilde g,\psi(x_1),\hdots,\psi(x_N))\\
&=\int \langle\prod_{i=1}^n T_{\mu_{i}\nu_{i}}(z_{i})\prod_{j=1}^N V_{\alpha_{j},g}(\psi(x_{j}))\rangle_{g}\prod_{i=1}^n \tilde f^{\mu_{i}\nu_{i}}(z_i)dv_g(z_i)\\
&=\int \langle\prod_{i=1}^n \tilde T_{\mu_{i}\nu_{i}}(z_{i})\prod_{j=1}^N V_{\alpha_{j},g}(\psi(x_{j}))\rangle_{g}\prod_{i=1}^n f^{\mu_{i}\nu_{i}}(z_i)dv_{\psi^\ast g}(z_i).
\end{align*}
In the last equality we did a change of variables $z_i \to \psi(z_i)$ and used 
\begin{align*}
\sum_{\mu,\nu} (T_{\mu \nu} \circ \psi)  (\tilde f^{\mu \nu} \circ \psi)  &= \sum_{\mu,\nu} \sum_{\alpha,\beta} (D \psi^T)_{\beta \nu } (T_{\mu \nu} \circ \psi) (D \psi)_{\mu \alpha} f^{\alpha \beta} =\sum_{\alpha,\beta } \tilde T_{\beta \alpha} f^{\beta \alpha}\,,
\end{align*}
where we used $D \psi^{-1}(\psi(z)) = (D \psi^T)(z)$ and the symmetry of $T$ and $f$. Now the first claim is proven.

The second claim follows from the Weyl transformation law in Proposition \ref{correlationf} in the following way.  Let 
$$h^{\alpha\beta}:=e^{-\varphi}g^{\alpha\beta}+\sum_{i=1}^n \eps_i f_i^{\alpha\beta}$$
and
$$\tilde h^{\alpha\beta}:=g^{\alpha\beta}+\sum_{i=1}^n\eps_ie^\varphi f_i^{\alpha\beta}.$$
Now $h = e^{\varphi} \tilde h$. Denote $F(g) = \langle \prod_{j=1}^N V_{\alpha_j, g}(x_j) \rangle_{g}$. Then
\begin{align}\label{tweyl1}
\int\langle\prod_{i=1}^n T_{\mu_{i}\nu_{i}}(z_{i})\prod_{j=1}^N V_{\alpha_{j},e^\varphi g}(x_{j})\rangle_{e^{\varphi}g}\prod_{i=1}^n f^{\mu_{i}\nu_{i}}(z_i)dv_{e^{\varphi}g} &=(4\pi)^{n}\prod_{i=1}^{n} \partial_{\epsilon_i} |_{\epsilon_i=0} F(h)\\
&=(4\pi)^{n}\prod_{j=1}^N e^{-\Delta_{\alpha_j}\varphi(x_{j})}\prod_{i=1}^{n} \partial_{\epsilon_i} |_{0}\big( e^{cA(\varphi,\tilde h)} F(\tilde h)\big), \nonumber
\end{align}
where the first equality comes from \eqref{tandf} and \eqref{correfuns}, and the second equality uses the Weyl transformation law from Proposition \ref{correlationf}. To compute the derivative on the last line we apply the locality \eqref{locality}. Indeed, it implies
\begin{align}\label{Aloc}
\partial_{\epsilon_1} \partial_{\epsilon_2} A(\varphi,\tilde h) |_{(\epsilon_1,\epsilon_2)=0} &= \int f^{\mu \nu}(z) f^{\mu' \nu'}(z') \frac{\delta^2 A(\varphi,\tilde h)}{\delta g^{\mu \nu}(z) \delta g^{\mu' \nu'}(z')} \, d^2z\, d^2z'  = 0\,,
\end{align}
where we sum over repeated indices. Now differentiating the product and using \eqref{Aloc} gives
\begin{align*}
\prod_{i=1}^{n} \partial_{\epsilon_i} |_{0}\big( e^{cA(\varphi,\tilde h)} F(\tilde h)\big) &= \prod_{i=2}^n \partial_{\eps_i}|_0 \big( \partial_{\eps_1} (e^{A(\varphi, \tilde h)}) F(\tilde h) + e^{A(\varphi, \tilde h)} \partial_{\eps_1} F(\tilde h) \big)|_{\eps_1=0} \\
&= \prod_{i=2}^{n} \partial_{\eps_i}|_0 \big(  e^{c A(\varphi,\tilde h)}(   \partial_{\eps_1} + c \partial_{\eps_1} A(\varphi, \tilde h) )  F(\tilde h)  \big)|_{\eps_1=0} \\
&= e^{cA(\varphi,g)} \prod_{i=1}^n   ( \partial_{\eps_i} + c \partial_{\eps_i} A(\varphi, \tilde h)) F(\tilde h) |_{\eps_i=0} \,.
\end{align*}
Using \eqref{tandf} and \eqref{correfuns} again gives
\begin{align*}
\prod_{i=1}^n   (\partial_{\eps_i} + c \partial_{\eps_i} A(\varphi, \tilde h))|_{\eps_i=0} F(\tilde h) &= \tfrac{1}{(4\pi)^n} \int \prod_{i=1}^n e^{\varphi(z_i)} f_i^{\mu_i \nu_i}(z_i) \langle (T_{\mu_i \nu_i}(z_i) + a_{\mu_i \nu_i}(\varphi,g,z_i)) \prod_{j=1}^N V_{\alpha_j, g}(x_j) \rangle_g.
\end{align*}

Plugging the result back into \eqref{tweyl1} yields
\begin{align*}
&\int\langle\prod_{i=1}^n T_{\mu_{i}\nu_{i}}(z_{i})\prod_{j=1}^N V_{\alpha_{j},e^\varphi g}(x_{j})\rangle_{e^{\varphi}g}\prod_{i=1}^n f^{\mu_{i}\nu_{i}}(z_i)dv_{e^{\varphi}g}\\
&= \prod_{j=1}^N e^{-\Delta_{\alpha_j}\varphi(x_{j})}e^{cA(\varphi,g)}\int \langle\prod_{i=1}^n(T_{\mu_{i}\nu_{i}}(z_{i})+a_{\mu_{i}\nu_{i}}(\varphi,g,z_i))\prod_{j=1}^N V_{\alpha_{j},g}(x_{j})\rangle_{g}\prod_{i=1}^{n}f^{\mu_{i}\nu_{i}}(z_i) e^{\varphi(z_i)} dv_g(z_i).
\end{align*} 
The claim now follows from $e^\varphi v_g=v_{e^\varphi g}$. \qed

\vskip 2mm

\subsection{Ward identities}\label{1st_ward_section}

In this section we will demonstrate that Propositions \ref{smoothness} and \ref{tinsert} allow us to compute $T$-correlations  inductively.  By Proposition \ref{tinsert} it suffices to do this in the conformal coordinates. Furthermore, only $T_{zz}$ and  $T_{\bar z\bar z}$ correlations are non-trivial and they may be computed separately (see Remark \ref{Tbarz} below). 
We have then
\begin{proposition} The Ward identity \eqref{wardid} holds whenever $z_i\neq z_j$ for $i \neq j$, and $z_i \neq x_j$ for all $i$ and $j$. \label{ward_prop}
\end{proposition}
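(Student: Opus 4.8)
\textit{Proof proposal.} The plan is to derive \eqref{wardid} for each $n\ge 1$ by evaluating the functional derivative in \eqref{fditer} that links the $n$-insertion $T$-correlation to the $(n-1)$-insertion one. As already observed, Proposition \ref{tinsert} lets us work in the conformal gauge $\hat g=e^{\sigma}\delta$, and it suffices to treat the $T_{zz}$-sector (Remark \ref{Tbarz}). Writing $T(z_1)=T_{zz}(z_1)+\frac{c}{12}t(z_1)$ with $t$ as in \eqref{tcorre}, the term $\frac{c}{12}t(z_1)\langle\prod_{k=2}^nT(z_k)\prod_jV_{\alpha_j,\hat g}(x_j)\rangle_{\hat g}$ splits off by \eqref{t_lin}. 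For the $T_{zz}(z_1)$-part I would take a perturbation $g_\epsilon^{\mu\nu}=\hat g^{\mu\nu}+\epsilon f_1^{\mu\nu}$ with $f_1\in C_0^\infty$ supported in a small ball $B$ around $z_1$ disjoint from $z_2,\dots,z_n,x_1,\dots,x_N$; since $g_\epsilon=e^{\sigma}\delta$ outside $B$, the factors $t_{g_\epsilon}(z_k)=t_{\hat g}(z_k)$ for $k\ge 2$ are inert, and \eqref{fditer} gives
\begin{equation*}
\langle T_{zz}(z_1)\prod_{k=2}^nT(z_k)\prod_jV_{\alpha_j,\hat g}(x_j)\rangle_{\hat g}=4\pi\, e^{-\sigma(z_1)}\,\partial_\epsilon|_{\epsilon=0}\langle\prod_{k=2}^nT(z_k)\prod_jV_{\alpha_j,g_\epsilon}(x_j)\rangle_{g_\epsilon},
\end{equation*}
to be read off as the coefficient of $f_1^{zz}(z_1)$ (the remaining components of $f_1$ feed the decoupled $T_{\bar z\bar z}$- and $T_{z\bar z}$-sectors).

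To compute the right-hand side I would solve the Beltrami equation as in Proposition \ref{beltrami_solu}, writing $g_\epsilon=e^{\varphi}\psi^{\ast}\hat g=\psi^{\ast}(e^{\varphi'}\hat g)$ with $\psi=\mathrm{id}+u$; since the Beltrami coefficient $\mu$ vanishes outside $B$, the map $\psi$ is holomorphic near each $z_k$ and $x_j$, and from $\mu=-\tfrac{\epsilon}{4}e^{\sigma}f_1^{zz}+\caO(\epsilon^2)$ (proof of Proposition \ref{smoothness}) and $u=\caC\mu+\caO(\epsilon^2)$ one gets $\partial_\epsilon u(w)|_{\epsilon=0}=-\frac{1}{4\pi}\int f_1^{zz}(z_1)(w-z_1)^{-1}\,dv_{\hat g}(z_1)$, hence $\partial_\epsilon\partial_w^{m}u(w)|_{\epsilon=0}=-\frac{1}{4\pi}\int f_1^{zz}(z_1)\,\partial_w^{m}(w-z_1)^{-1}\,dv_{\hat g}(z_1)$. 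Passing $g_\epsilon=\psi^{\ast}(e^{\varphi'}\hat g)$ through \eqref{emdiff} and then \eqref{emweyl}, the $(n-1)$-point correlation in $g_\epsilon$ becomes $e^{cA(\varphi',\hat g)}\prod_j e^{-\Delta_{\alpha_j}\varphi'(\psi(x_j))}$ times the $\hat g$-correlation in which $T_{zz}(z_k)\mapsto(\psi'(z_k))^2\big(T_{zz}(\psi(z_k))+a_{zz}(\varphi',\hat g,\psi(z_k))\big)$, the vertices sit at $\psi(x_j)$, and the $\epsilon$-inert factors $\frac{c}{12}t_{\hat g}(z_k)$ are carried along; at $\epsilon=0$ everything collapses to the $(n-1)$-point correlation, so $\partial_\epsilon|_{\epsilon=0}$ distributes over the factors.

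Carrying out the differentiation produces four families of terms. (i) The shift of the vertex positions $\psi(x_j)$ contributes $\partial_\epsilon u(x_j)|_{\epsilon=0}\,\partial_{x_j}$ applied to the $(n-1)$-point correlation, giving the $\frac{1}{z_1-x_j}\partial_{x_j}$ term. (ii) The weights, using $\partial_\epsilon\varphi'(x_j)|_{\epsilon=0}=-\partial_z\sigma(x_j)\,\partial_\epsilon u(x_j)|_{\epsilon=0}-\partial_{x_j}\partial_\epsilon u(x_j)|_{\epsilon=0}$, give the $\frac{\Delta_{\alpha_j}}{(z_1-x_j)^2}+\frac{\Delta_{\alpha_j}\partial_z\sigma(x_j)}{z_1-x_j}$ terms. (iii) The transformed insertions: $\partial_\epsilon\big[(\psi'(z_k))^2 T_{zz}(\psi(z_k))\big]|_{\epsilon=0}=2\partial_{z_k}(\partial_\epsilon u|_{\epsilon=0})(z_k)\,T_{zz}(z_k)+(\partial_\epsilon u|_{\epsilon=0})(z_k)\,\partial_{z_k}T_{zz}(z_k)$ supplies the $\frac{2}{(z_1-z_k)^2}$ and $\frac{\partial_{z_k}}{z_1-z_k}$ structure, while $\partial_\epsilon a_{zz}(\varphi',\hat g,z_k)|_{\epsilon=0}$ — obtained by substituting the expansion of $\partial_\epsilon\varphi'|_{\epsilon=0}$ into \eqref{aab} and regrouping the $\partial_z\sigma$-dependent pieces into combinations of $t_{\hat g}$ and $\partial_z t_{\hat g}$ — produces the fourth-order pole on the $(n-2)$-point correlation and, through its $t_{\hat g}(z_k)$-proportional remainder, exactly furnishes the $\frac{c}{12}t(z_k)$ needed to promote each $T_{zz}(z_k)$ to $T(z_k)$, so that together they yield $\frac{c}{2}(z_1-z_k)^{-4}$ on $\langle\prod_{l\ne 1,k}T(z_l)\prod V\rangle$ and $\big(\frac{2}{(z_1-z_k)^2}+\frac{\partial_{z_k}}{z_1-z_k}\big)$ acting on the full $(n-1)$-point $T$-correlation. (iv) The anomaly, $c\,\partial_\epsilon A(\varphi',\hat g)|_{\epsilon=0}=\frac{c}{48\pi}\int R_{\hat g}(w)\,\partial_\epsilon\varphi'(w)|_{\epsilon=0}\,dv_{\hat g}(w)$; using $R_{\hat g}\,dv_{\hat g}=-\Delta\sigma\,d^2w=-4\partial_w\partial_{\bar w}\sigma\,d^2w$, integrating by parts in $\bar w$ twice and using $\partial_{\bar w}(w-z_1)^{-1}=\pi\,\delta(w-z_1)$, this collapses to $-\frac{c}{12}t_{\hat g}(z_1)$ times the $(n-1)$-point correlation, cancelling the term split off at the start. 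Adding (i)--(iv) gives \eqref{wardid}; the case $n=1$ is the same computation with family (iii) empty, and it reproduces the one-insertion identity of \cite{ward}.

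The main obstacle will be the bookkeeping in families (iii) and (iv): keeping track of all numerical constants — the powers of $4\pi$ from \eqref{Tcorre}, the $\frac{1}{24\pi}$ in \eqref{aab}, the $\frac{1}{\pi}$ in the Cauchy transform, and the volume factors $e^{\sigma}$ — so that the central-charge term emerges with coefficient exactly $c/2$ and every $T_{zz}$ is correctly promoted to $T$; and verifying that the non-local anomaly integral in (iv) really reduces to the purely local $-\frac{c}{12}t_{\hat g}(z_1)$, which is precisely what the definition \eqref{tcorre} of $t$ is designed to make happen. The integration by parts in (iv) needs a little care near $\infty$, where $\sigma(z)=-4\ln|z|+\caO(1)$ so $\partial_{\bar z}\sigma=\caO(|z|^{-1})$ and boundary contributions vanish; and, although $\frac{\delta}{\delta\hat g^{zz}(z_1)}$ is a priori only a distribution, Proposition \ref{smoothness} ensures that away from the other insertion points it is represented by a genuine smooth (in fact meromorphic) function, which legitimizes the pointwise manipulations above. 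Finally one invokes the decoupling of the $T_{z\bar z}$- and $T_{\bar z\bar z}$-sectors (Remark \ref{Tbarz}) to conclude that extracting the $f_1^{zz}(z_1)$-coefficient yields exactly \eqref{wardid}.
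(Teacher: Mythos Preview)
Your proposal is correct and follows essentially the same route as the paper's proof: perturb the conformal metric $\hat g=e^\sigma\delta$ in the $zz$-component, solve the Beltrami equation to first order (obtaining exactly your formula for $\partial_\epsilon u|_{\epsilon=0}$), pass the perturbed correlation through the diffeomorphism and Weyl laws of Proposition~\ref{tinsert}, and differentiate at $\epsilon=0$ to produce the four families (i)--(iv) you list.

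Two small organisational differences are worth noting. First, the paper writes $g=e^{\varphi}\psi^{\ast}\hat g$ and applies Weyl \emph{before} diffeomorphism, whereas you write $g_\epsilon=\psi^{\ast}(e^{\varphi'}\hat g)$ and apply them in the opposite order; since $\varphi'=\varphi\circ\psi^{-1}=\varphi+\caO(\epsilon^2)$ the first-order computations coincide. Second, the paper works with $T_{zz}$-insertions throughout and converts to $T=T_{zz}+\tfrac{c}{12}t$ only at the end, while you carry the $T(z_k)$ for $k\ge2$ from the start and treat $t_{\hat g}(z_k)$ as $\epsilon$-inert constants; this is legitimate by the linearity in Definition~\ref{def_notation}, but be careful that the $t$ in the $(n-1)$-point correlation on the right is always $t_{\hat g}$, not $t_{g_\epsilon}$ (which is not even defined since $g_\epsilon$ is not conformal in the $z$-coordinate). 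For the $a_{zz}$-variation in family~(iii) the paper does not plug into \eqref{aab} directly but goes back to the definition \eqref{aab_def} and invokes Lemma~\ref{variationlemma}; this is because in the paper's ordering one meets $a_{zz}(\varphi,\psi^{\ast}\hat g,z_k)$ with a non-conformal second argument, but at first order the outcome is the same as yours. Finally, the paper isolates the $f^{zz}$-component cleanly by taking $\epsilon$ complex with $g^{zz}=\epsilon f$, $g^{\bar z\bar z}=\bar\epsilon f$ and using the holomorphic derivative $\partial_\epsilon$, rather than perturbing by a full symmetric $f_1^{\mu\nu}$ and extracting a coefficient afterwards.
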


\proof
Let $\hat g=e^\sigma \delta$, where $\delta$ denotes the Euclidean metric, and consider the perturbed metric $g$ with the inverse metric having the components
$$g^{z\bar z}=\hat g^{z\bar z}= 2 e^{-\sigma},\ \ \  g^{zz}=\epsilon f,$$
where  $f$ is real and has compact support outside the points $\{z_i\}_{i=2}^n$ and $\{x_j\}_{j=1}^N$, and  $\epsilon\in\C$ (so that  $g^{\bar z\bar z}= \bar \epsilon f$). We will denote $\eps=\eps_1+i\eps_2$, $\eps_1,\eps_2 \in \R$ and $\partial_\eps = \tfrac{1}{2}(\partial_{\eps_1}-i \partial_{\eps_2})$ so that $\partial_\eps \bar \eps = 0$. By \eqref{correfuns} and \eqref{fditer}
\begin{align*}
\int  f(z_{1})\langle T_{zz}(z_{1})\prod_{i=2}^{n} T_{zz}(z_{i})\prod_{j=1}^N V_{\alpha_{j}, \hat g}(x_{j})\rangle_{\hat g}dv_{\hat g}(z_{1}) &= (4\pi)^n \int f(z_1) F^{\hat g}_{zz \hdots zz}(z_1,\hdots,z_n) dv_{\hat g}(z_1) \\
&= (4\pi)^n \int f(z_1)  \frac{\delta}{\delta \hat g^{zz}(z_1)} F^{\hat g}_{zz \hdots zz}(z_2,\hdots,z_n)  d^2 z_1\,.
\end{align*}
Next we apply Definition \ref{functional_derivative} and \eqref{correfuns}
\begin{align*}
(4 \pi)^n \int f(z_1)  \frac{\delta}{\delta \hat g^{zz}(z_1)} F^{\hat g}_{zz \hdots zz}(z_2,\hdots,z_n)  d^2 z_1&= (4 \pi)^n \partial_\eps|_0 F^{g}_{zz \hdots zz}(z_2,\hdots,z_n)   \\
&= 4 \pi \partial_\eps|_0 \langle \prod_{i=2}^n T_{zz}(z_i) \prod_{j=1}^N V_{\alpha_j, g}(x_j) \rangle_g\,.
\end{align*}
To compute the $\eps$-derivative we want to utilize the Weyl and diffeomorphism transformation laws from Proposition \ref{tinsert}. To this end we write (see Section \ref{beltrami_section})
\begin{align*}
g=e^{\varphi}\psi^{\ast}\hat g.
\end{align*}
Note that the dependence on $\eps$ is in $\varphi$ and $\psi$.
Using the Weyl transformation law from Proposition \ref{tinsert} we get
\begin{align*}
&\partial_{\epsilon}\big|_{0}\langle\prod_{i=2}^{n} T_{zz}(z_{i})\prod_{j=1}^N V_{\alpha_{j},g}(x_{j})\rangle_{g} \\
&= \partial_{\epsilon}\big|_{0}\langle\prod_{i=2}^{n} T_{zz}(z_{i})\prod_{j=1}^N V_{\alpha_{j},e^{\varphi} \psi^* \hat g}(x_{j})\rangle_{e^\varphi \psi^* \hat g} \\
&= \partial_{\eps}\big|_0 \Big( e^{c A(\varphi, \psi^* \hat g)} \prod_{k=2}^N e^{- \Delta_{\alpha_k} \varphi(x_k)} \langle \prod_{i=2}^n \big( T_{\mu_i \nu_i}(z_i) + a_{\mu_i \nu_i}(\varphi, \psi^* \hat g, z_i) \big) \prod_{j=1}^N V_{\alpha_j, \psi^* \hat g}(x_j) \rangle_{\psi^* \hat g} \Big)\,.
\end{align*}
Note that the last line uses the notation introduced in Definition \ref{def_notation}. We can further simplify this by using the diffeomorphism transformation law from Proposition \ref{tinsert} (and again using notations from Definition \ref{def_notation})
\begin{align*}
&\langle \prod_{i=2}^n \big( T_{\mu_i \nu_i}(z_i) + a_{\mu_i \nu_i}(\varphi, \psi^* \hat g, z_i) \big) \prod_{j=1}^N V_{\alpha_j, \psi^* \hat g}(x_j) \rangle_{\psi^* \hat g} \\
&= \langle \prod_{i=2}^n \big( \tilde T_{\mu_i \nu_i}(z_i) + a_{\mu_i \nu_i}(\varphi, \psi^* \hat g, z_i) \big) \prod_{j=1}^N V_{\alpha_j, \hat g}(\psi(x_j)) \rangle_{\hat g}\,.
\end{align*}
We arrive at
\begin{align}\label{com_step0}
&\partial_{\epsilon}\big|_{0}\langle\prod_{i=2}^{n} T_{zz}(z_{i})\prod_{j=1}^N V_{\alpha_{j},g}(x_{j})\rangle_{g} \\
&= \partial_{\epsilon}\big|_{0}\Big( e^{c A(\varphi, \psi^* \hat g)} \prod_{k=1}^N e^{- \Delta_{\alpha_k} \varphi(x_k)} \langle \prod_{i=2}^n \big( \tilde T_{zz}(z_i) + a_{zz}(\varphi, \psi^* \hat g, z_i) \big) \prod_{j=1}^N V_{\alpha_j, \hat g}(\psi(x_j)) \rangle_{\hat g} \Big). \nonumber
\end{align}
To proceed with the computation we now compute $\psi$ and $\varphi$ to first order in $\eps$. First, recall equations \eqref{geepsi} and \eqref{muepsi}. To first order in $\epsilon$ the metric is $g_{\bar z\bar z}=-\frac{\epsilon}{4} e^{2\sigma}f$. Hence 
\begin{align*}
\zeta_{\bar z\bar z}=-\tfrac{\epsilon}{4}e^{\sigma}f+\caO(\eps^2)
\end{align*}
 and $\zeta_{ z\bar z}=0$. Then, for the $\gamma$ appearing in \eqref{belcoef} we have  $\gamma_{\bar z\bar z}=-\frac{\epsilon}{4}e^{\sigma}f + \mathcal{O}(\eps^2)$ so that
 \begin{align*}
 \mu=-\frac{\epsilon}{4}e^{\sigma}f + \mathcal{O}(\eps^2).
\end{align*}
Now recall from Section \ref{beltrami_section}
\begin{align*}
\psi(z) &= z + \sum_{n=0}^\infty (\mu \mathcal{B})^n \mu\,.
\end{align*}
Thus
\begin{align}\label{uuepsil}
\partial_{\epsilon}\big|_{0}\psi(z)=-\tfrac{1}{4}\caC(e^{\sigma}f)(z)=-\tfrac{1}{4\pi}\int\frac{1}{z-x}f(x)dv_{\hat g}(x)=:u(z).
\end{align}
From \eqref{varphi} we infer
\begin{align}\label{varhiepsil}
\partial_{\epsilon}|_{0}\varphi=
-u\partial_{z}\sigma-\partial_{z}u.
\end{align}
Based on these formulas we remark that 
\begin{align*}
\psi(z)|_{\eps=0} = z\,, \quad \varphi(z)|_{\eps=0} = 0\,.
\end{align*}
The latter equation together with \eqref{aab} further imply $a_{zz}(\varphi,\psi^* \hat g,z)|_{\eps=0}=0$. Thus, continuing from \eqref{com_step0} we get
\begin{align}\label{com_step1}
& \partial_{\epsilon}\big|_{0}\Big( e^{c A(\varphi, \psi^* \hat g)} \prod_{k=1}^N e^{- \Delta_{\alpha_k} \varphi(x_k)} \langle \prod_{i=2}^n \big( \tilde T_{zz}(z_i) + a_{zz}(\varphi, \psi^* \hat g, z_i) \big) \prod_{j=1}^N V_{\alpha_j, \hat g}(\psi(x_j)) \rangle_{\hat g} \Big) \\
&= c \partial_\eps|_0 A(\varphi, \psi^* \hat g) \langle\prod_{i=2}^{n} T_{zz}(z_{i})\prod_{j=1}^N V_{\alpha_{j},\hat g}(x_{j})\rangle_{\hat g} - \sum_{k=2}^n \Delta_{\alpha_k} \partial_{\eps}|_0 \varphi(x_k) \langle\prod_{i=2}^{n} T_{zz}(z_{i})\prod_{j=1}^N V_{\alpha_{j},\hat g}(x_{j})\rangle_{\hat g} \nonumber \\
&+ \partial_{\epsilon}\big|_{0}  \langle \prod_{i=2}^n \big( \tilde T_{zz}(z_i) + a_{zz}(\varphi, \psi^* \hat g, z_i) \big) \prod_{j=1}^N V_{\alpha_j, \hat g}(x_j) \rangle_{\hat g} +\partial_\epsilon\big|_0  \langle \prod_{i=2}^n  T_{zz}(z_i)  \prod_{j=1}^N V_{\alpha_j, \hat g}(\psi(x_j)) \rangle_{\hat g}. \nonumber
\end{align}
Next we evaluate all the derivatives appearing above. Since $\varphi=(-u \partial_z \sigma - \partial_z u) \eps + \mathcal{O}(\eps^2)$ and $ e^{\sigma}R_{ e^{\sigma} \delta}=-4\partial_{z}\partial_{\bar z}\sigma$ we get  
\begin{align}\label{Beq}
c \partial_{\epsilon}\big|_{0} A(\varphi,\psi^{\ast} \hat g) 
&=\tfrac{c}{48\pi}\int  R_{ e^{\sigma}\delta}\partial_{\epsilon}\big|_{0}\varphi dv_{ e^{\sigma}\delta}(z)
=\tfrac{c}{12\pi}\int \partial_{z}\partial_{\bar z}\sigma(u\partial_{z}\sigma+\partial_{z}u)d^{2}z\\
&=\tfrac{c}{12\pi}\int u(\partial_{\bar z}\partial_{z }\sigma\partial_{z}\sigma- \partial_{\bar z}\partial_{z }^{2}\sigma)d^{2}z=\tfrac{c}{12\pi}\int \partial_{\bar z}u(\partial_{z }^2\sigma-\hf (\partial_{z }\sigma)^{2})d^{2}z \nonumber \\&=-\tfrac{c}{48\pi}\int f( \partial_{z }^2\sigma - \hf (\partial_{z }\sigma)^{2})dv_{\hat g}=-\tfrac{c}{48\pi}\int ftdv_{\hat g} \nonumber
\end{align}
where $t$ is defined in \eqref{tcorre}.

The second term on the right-hand side of \eqref{com_step1} is readily evaluated by using \eqref{varhiepsil}
\begin{align*}
- \sum_{k=2}^n \Delta_{\alpha_k} \partial_\eps|_0 \varphi(x_k) &= \sum_{k=2}^n \Delta_{\alpha_k} (u(z_k) \partial_{z_k} \sigma(z_k) + \partial_{z_k} u(z_k) ) \\
&= \tfrac{1}{4\pi} \sum_{k=2}^n \Delta_{\alpha_k} \int  f(z) \Big(  \frac{\partial_{z_k} \sigma(z_k)}{z-z_k} + \frac{1}{(z-z_k)^2} \Big) dv_{\hat g}(z).
\end{align*}

For the third term on the right-hand side of \eqref{com_step1} we have
\begin{align*}
&\partial_{\epsilon}\big|_{0}  \langle \prod_{i=2}^n \big( \tilde T_{zz}(z_i) + a_{zz}(\varphi, \psi^* \hat g, z_i) \big) \prod_{j=1}^N V_{\alpha_j, \hat g}(x_j) \rangle_{\hat g} \\
&= \partial_\eps \big|_{\eps=0}  \langle \prod_{i=2}^n  \tilde T_{zz}(z_i)  \prod_{j=1}^N V_{\alpha_j, \hat g}(x_j) \rangle_{\hat g} \\
& \quad + \sum_{k=2}^n \partial_\eps |_0 a_{zz}(\varphi,\psi^* \hat g, z_k) \langle \prod_{i \neq 1,k}^n  T_{zz}(z_i) \prod_{j=1}^N V_{\alpha_j, \hat g}(x_j) \rangle_{\hat g},
\end{align*}
where we again used $a_{zz}(\varphi, \psi^* \hat g, z_i)|_{\eps=0} = 0$. Denote $b(z_{j},f)=\partial_{\epsilon}\big|_{0}a_{zz}(\varphi,\psi^{\ast} \hat g,z_{j})$. Using $\varphi=\caO(\epsilon)$, \eqref{aab_def} and \eqref{varhiepsil} we get
\begin{align*}
b(x,f)&=-\tfrac{c}{12}\frac{\delta}{\delta g^{zz}(x)}\int  (u(z)\partial_{z}\sigma(z)+\partial_{z}u(z))R_{g}(z)dv_{g}(z)|_{g=e^{\sigma}\delta}.
\end{align*}
The derivative is calculated in  Lemma \ref{variationlemma}:
\begin{align*}
b(x,f)&=-\tfrac{c}{12} \big(-\partial_x^2+\partial_x\sigma(x) \partial_x \big)  \big( u(x)\partial_{x}\sigma(x)+\partial_{x}u(x) \big).
\end{align*}
Some algebra then gives 
\begin{align*}
b(x,f)&=\tfrac{c}{12} \big(\partial_x^3u(x)-2t(x)\partial_x u(x)-u(x)\partial_x t(x) \big)\\
&=\tfrac{c}{48\pi}\int  f(z) \Big(\frac{6}{(z-x)^4}+\frac{2t(x)}{(z-x)^2}+\frac{\partial_x t(x)}{z-x} \Big)dv_{\hat g}(z).
\end{align*}

Next we compute (recalling Definition \ref{def_notation})
\begin{align*}
&\partial_\eps \big|_{\eps=0}  \langle \prod_{i=2}^n  \tilde T_{zz}(z_i)  \prod_{j=1}^N V_{\alpha_j, \hat g}(x_j) \rangle_{\hat g} \\
&= \sum_{k=2}^n \partial_\eps|_0 \Big( \sum_{\mu,\nu} (D\psi)^T_{z \mu}(z_k) (D\psi)_{\nu z}(z_k) \langle T_{\mu \nu}(\psi(z_k)) \prod_{i\neq 1,k}^n T_{zz}(z_i) \prod_{j=1}^N V_{\alpha_{j}, \hat g}(x_{j})\rangle_{\hat g} \Big),
\end{align*}
where we used $\psi(z)|_{\eps=0} = z$. We have
\begin{align*}
D \psi &= \begin{pmatrix}
\partial_z \psi & \partial_{\bar z} \psi \\ \partial_z \bar \psi & \partial_{\bar z} \bar \psi
\end{pmatrix} = \begin{pmatrix}
1 + \partial_z u & \partial_{\bar z} u \\ \partial_z \bar u & 1 + \partial_{\bar z} \bar u
\end{pmatrix}\,.
\end{align*}
Thus
\begin{align*}
\partial_\eps|_0 (D\psi)_{zz}(z_k) &=  \partial_{z_k} u(z_k) \,, \\
\partial_\eps|_0 (D\psi)_{z \bar z}(z_k) &=  \partial_{\bar z_k} u(z_k) = - \tfrac{1}{4} e^{\sigma(z_k)} f(z_k) = 0 \,, \\
\partial_\eps|_0 (D\psi)_{\bar z \bar z}(z_k) &= 0 \,, \\
\partial_\eps|_0 (D\psi)_{\bar z z }(z_k) &= 0\,,
\end{align*}
where in the second equality we used $\partial_{\bar z} \mathcal{C}(e^\sigma f) = e^\sigma f$ and the assumption that the support of $f$ does not intersect $\{z_2,\hdots,z_k\}$ and in the third and the fourth equalities we used $\partial_\eps \bar \eps = 0$. We get
\begin{align*}
& \sum_{k=2}^n \partial_\eps|_0 \Big( \sum_{\mu,\nu} (D\psi)^T_{z \mu}(z_k) (D\psi)_{\nu z}(z_k) \langle T_{\mu \nu}(\psi(z_k)) \prod_{i\neq 1,k}^n T_{zz}(z_i) \prod_{j=1}^N V_{\alpha_{j}, \hat g}(x_{j})\rangle_{\hat g} \Big) \\
&= \sum_{k=2}^n  \Big(\partial_\eps|_0 (D\psi)^T_{zz}(z_k) + \partial_\eps|_0 (D\psi)_{zz}(z_k) \Big) \langle \prod_{i=2}^n T_{zz}(z_i) \prod_{j=1}^N V_{\alpha_j,\hat g}(x_j) \rangle_{\hat g}  \\
& \quad + \sum_{k=2}^n \partial_\eps|_0 \psi(z_k) \partial_{z_k} \langle \prod_{i=2}^n T_{zz}(z_i) \prod_{j=1}^N V_{\alpha_j,\hat g}(x_j) \rangle_{\hat g} \\
&= \sum_{k=2}^n \big( 2 \partial_{z_k} u(z_k) + u(z_k) \partial_{z_k} \big) \langle \prod_{i=2}^n T_{zz}(z_i) \prod_{j=1}^N V_{\alpha_j,\hat g}(x_j) \rangle_{\hat g}  \\
&=\tfrac{1}{4\pi} \sum_{k=2}^n \int f(z) \Big( \frac{2}{(z-z_k)^2} + \frac{\partial_{z_k}}{z-z_k} \Big)  \langle  \prod_{i=2}^{n} T_{zz}(z_{i})\prod_j V_{\alpha_{j}, \hat g}(x_{j}) \rangle_{\hat g} dv_{\hat g}(z).
\end{align*}

By the chain rule the last term on the right-hand side of \eqref{com_step1} takes the form
\begin{align*}
\partial_\epsilon\big|_0  \langle \prod_{i=2}^n  T_{zz}(z_i)  \prod_{j=1}^N V_{\alpha_j, \hat g}(\psi(x_j)) \rangle_{\hat g} &= \sum_{l=1}^N \partial_\eps|_0 \psi(x_l) \partial_{x_l} \langle \prod_{i=2}^n  T_{zz}(z_i)  \prod_{j=1}^N V_{\alpha_j, \hat g}(x_j) \rangle_{\hat g} \\
&= \sum_{l=1}^N u(x_l) \partial_{x_l} \langle \prod_{i=2}^n  T_{zz}(z_i)  \prod_{j=1}^N V_{\alpha_j, \hat g}(x_j) \rangle_{\hat g} \\
&= \tfrac{1}{4\pi} \sum_{l=1}^N \int f(z) \frac{1}{z-x_l} \partial_{x_l} \langle \prod_{i=2}^n  T_{zz}(z_i)  \prod_{j=1}^N V_{\alpha_j, \hat g}(x_j) \rangle_{\hat g} dv_{\hat g}(z).
\end{align*}


Collecting all the derivatives we computed together we get
\begin{align*}
&\partial_{\epsilon}\big|_{0}\langle\prod_{i=2}^{n} T_{zz}(z_{i})\prod_{j=1}^N V_{\alpha_{j},g}(x_{j})\rangle_{g} \\
&= -\tfrac{c}{48\pi}\int f(z) t(z) dv_{\hat g}(z)  \langle\prod_{i=2}^{n} T_{zz}(z_{i})\prod_{j=1}^N V_{\alpha_{j},\hat g}(x_{j})\rangle_{\hat g} \\
& \quad + \tfrac{1}{4\pi} \sum_{k=2}^n \Delta_{\alpha_k} \int  f(z) \Big(  \frac{\partial_{z_k} \sigma(z_k)}{z-z_k} + \frac{1}{(z-z_k)^2} \Big) dv_{\hat g}(z) \langle\prod_{i=2}^{n} T_{zz}(z_{i})\prod_{j=1}^N V_{\alpha_{j},\hat g}(x_{j})\rangle_{\hat g} \\
& \quad + \tfrac{c}{48\pi} \sum_{k=2}^n \int \Big(\frac{6}{(z-z_k)^4}+\frac{2t(z_k)}{(z-z_k)^2}+\frac{\partial_z t(z_k)}{z-z_k} \Big)f(z)dv_{\hat g}(z) \langle\prod_{i \neq 1,k}^{n} T_{zz}(z_{i})\prod_{j=1}^N V_{\alpha_{j},\hat g}(x_{j})\rangle_{\hat g} \\
& \quad + \tfrac{1}{4\pi} \sum_{k=2}^n \int f(z) \Big( \frac{2}{(z-z_k)^2} + \frac{\partial_{z_k}}{z-z_k} \Big)  \langle  \prod_{i=2}^{n} T_{zz}(z_{i})\prod_j V_{\alpha_{j}, \hat g}(x_{j}) \rangle_{\hat g} dv_{\hat g}(z) \\
& \quad + \tfrac{1}{4\pi} \sum_{l=1}^N \int f(z) \frac{1}{z-x_l} \partial_{x_l} \langle \prod_{i=2}^n  T_{zz}(z_i)  \prod_{j=1}^N V_{\alpha_j, \hat g}(x_j) \rangle_{\hat g} dv_{\hat g}(z),
\end{align*}
and stripping out the arbitrary test function $f$ we conclude
\begin{align*}
\langle T_{zz}(z_{1})\prod_{i=2}^{n} T_{zz}(z_{i}) \prod_j &V_{\alpha_{j}, \hat g}(x_{j}) \rangle_{\hat g} =-\tfrac{c}{12}t(z_1)\langle\prod_{i=2}^{n} T_{zz}(z_{i})\prod_j V_{\alpha_{j}, \hat g}(x_{j})\rangle_{\hat g}\\&+\tfrac{c}{12}\sum_{k=2}^{n} \left(\frac{6}{(z_1-z_k)^4}+\frac{2t(z_k)}{(z_1-z_k)^2}+\frac{\partial_zt(z_k)}{z_1-z_k}\right)\langle\prod_{i\neq 1,k}^{n} T_{zz}(z_{i})\prod_j V_{\alpha_{j}, \hat g}(x_{j})\rangle_{\hat g}\\&
+\sum_{k=2}^{n}  \left(\frac{2}{(z_1-z_k)^2}+\frac{1}{z_1-z_k}\partial_{z_k}\right)\langle \prod_{i=2}^{n} T_{zz}(z_{i})\prod_j V_{\alpha_{j}, \hat g}(x_{j})\rangle_{\hat g}\\
&+\sum_{l} \left(  \frac{\Delta_{\alpha_l}}{(z_1-x_l)^2}+ \frac{\Delta_{\alpha_l}\partial_z\sigma(x_l)}{z_1-x_l}+\frac{1}{z_1-x_l}\partial_{x_l}\right)
\langle\prod_{i=2}^{n} T_{zz}(z_{i})\prod_{j} V_{\alpha_{j}, \hat g}(x_{j})\rangle_{\hat g}.
\end{align*}
This is equivalent with  \eqref{wardid} once one uses the definition \eqref{newT}. \qed

\begin{remark}\label{Tbarz}
Using the computations from the proofs of Propositions \ref{smoothness} and \ref{ward_prop} it is simple to check that
\begin{align}\label{Tbz}
\langle T_{\bar z \bar z}(z_1) \prod_{k=2}^n T_{z z}(z_k) \prod_{j=n+1}^m T_{\bar z \bar z}(z_j) \prod_{i=1}^N V_{\alpha_i,\hat g}(x_i) \rangle_{\hat g} &= - \tfrac{c}{12}  \bar t(z_1) \langle \prod_{k=2}^n T_{z z}(z_k) \prod_{j=n+1}^m T_{\bar z \bar z}(z_j) \prod_{i=1}^N V_{\alpha_i,\hat g}(x_i) \rangle_{\hat g}, \nonumber \\
\langle \prod_{k=1}^n T_{\bar z \bar z} (z_k) \prod_{i=1}^N V_{\alpha_i,\hat g}(x_i) \rangle_{\hat g} &= \overline{\langle \prod_{k=1}^n T_{ z  z} (z_k) \prod_{i=1}^N V_{\alpha_i,\hat g}(x_i) \rangle_{\hat g}}, 
\end{align}
whenever $z_i\neq z_j$ for $i \neq j$ and $z_i \neq x_j$ for all $i$ and $j$, where $t$ is the function from \eqref{tcorre}.

In similar fashion it is also possible to compute that
\begin{align}\label{Tzbz}
\langle T_{z \bar z}(z_1) \prod_{k=2}^n T_{\mu_k \nu_k}(z_k)\prod_{i=1}^N V_{\alpha_i,\hat g}(x_i) \rangle_{\hat g} &= \tfrac{c}{48}  R_{\hat g}(z_1) \langle \prod_{k=2}^n T_{\mu \nu}(z_k)\prod_{i=1}^N V_{\alpha_i,\hat g}(x_i) \rangle_{\hat g}
\end{align}
whenever $z_i\neq z_j$ for $i \neq j$ and $z_i \neq x_j$ for all $i$ and $j$, but we skip this computation since the identity is not relevant from the point of view of the Virasoro algebra discussed in the next section.

These formulae motivate the focus on the $T_{zz}$ correlations: the $T_{\bar z \bar z}$ correlations are obtained from $T_{zz}$ correlations by complex conjugation, and the other correlations are rather trivial. 

\end{remark}

We finish this section with a formulation of the Ward identity for the case where the vertex operators are replaced by a smooth version as follows.
\begin{definition}\label{Fdef}
Let $F: H^{-s}(\rs) \to \C$ be such that $\langle |F| \rangle_g < \infty$.
\begin{enumerate}
\item We call the support of $F$, denoted by $\opn{supp} F$, the smallest closed set $K$ satisfying $F(X)=0$ whenever $\opn{supp}X \subset K^c$. Here $\opn{supp} X$ denotes the support of $X$, which is the usual notion of the support of a distribution.
\item We say that $F$ is smooth if the  derivatives
\begin{align*}
\tfrac{d}{d \epsilon} F(X+ \epsilon f)|_{\epsilon=0}
\end{align*}
exist for all $X \in H^{-s}(\rs)$ and all $f \in C^\infty_0(\rs)$ and they can be written as
\begin{align*}
\tfrac{d}{d \epsilon} F(X+ \epsilon f)|_{\epsilon=0} = \int_\C   h(z) f(z) \, d^2z\,,
\end{align*}
for all $f$ where $h:\rs \to \C$ is continuous. If $F$ is smooth, we denote $\frac{\delta F}{\delta X(z)} := h(z)$.
\end{enumerate}
\end{definition}
Let $F: H^{-s}(\rs) \to \C$ be smooth and $g$ and $\hat g$ be as in the beginning of proof of Proposition \ref{ward_prop}, except now we assume that the support of $f$ does not intersect the support of $F$ (in the previous setting this assumption corresponded to assuming that support of $f$ does not overlap with the points $z_i$ and $x_i$). By Proposition \ref{weyl_anomaly} we have
\begin{align*}
\langle F(X) \rangle_{g}&=\langle F(X) \rangle_{e^{\varphi}\psi^* \hat g} 
=e^{cA(\varphi,\psi^* \hat g)} \langle F((X- \tfrac{Q}{2} \varphi)\circ\psi) \rangle_{\hat g}
\end{align*}
so that  recalling \eqref{uuepsil} and \eqref{varhiepsil} we get
\begin{align*}
\partial_\epsilon|_0\langle F(X) \rangle_{g}
=\int \langle (- \partial_x u(x) X(x)+\tfrac{Q}{2}(u(x)\partial_x\sigma(x)+\partial_x u(x))F_x(X) \rangle_{\hat g}d^2x + c \partial_\epsilon|_0 A(\varphi,\psi^* \hat g) \langle F (X) \rangle_{\hat g},
\end{align*}
where the last term was computed in \eqref{Beq} and we denoted 
\begin{align}\label{funcder}
F_x=\frac{\delta F}{\delta X(x)}.
\end{align}
Integrating by parts and using \eqref{uuepsil} we get
\begin{align*}
\int \langle - \partial_x u(x) X(x) F_x(X) \rangle_{\hat g} d^2x &= - \tfrac{1}{4\pi} \int \frac{f(z)}{x-z} \langle \partial_x X(x) F_x(X) \rangle_{\hat g} dv_{\hat g}(z) d^2 x\,, \\
\tfrac{Q}{2} \int (u(x) \partial_x \sigma(x) + \partial_x u(x)) \langle F_x(X) \rangle_{\hat g} d^2x &=  \tfrac{Q/2}{4\pi} \int \Big(- \frac{f(z)}{x-z} \partial_x \sigma(x) + \frac{f(z)}{(x-z)^2} \Big) \langle F_x(X) \rangle dv_{\hat g}(z) d^2x.
\end{align*}
We conclude using $\partial_\epsilon|_0 \langle F\rangle_g=:\frac{1}{4\pi} \int f(z) \langle T_{zz}(z) F \rangle_{\hat g} dv_{\hat g}(z)$ that
\begin{align}\label{wardnew}
\langle T(z)F\rangle_{\hat g}=\int\frac{1}{z-x}\langle (\partial_xX+\tfrac{Q}{2}\partial_x\sigma
) F_x\rangle_{\hat g}d^2x+\tfrac{Q}{2}\int\frac{1}{(z-x)^2}\langle F_x\rangle_{\hat g}d^2x,
\end{align}
where $T$ is given by \eqref{newT}. Note that all the terms are well-defined since $F$ is smooth and if $x$ is outside the support of $F$, then $F_x=0$, which especially means that in the above integrals $F_x$ vanishes in a neighbourhood of $z$.

In what follows we are interested in $F$ of the form 
\begin{align}\label{domain}
F(X)=e^{X(h_0)}\prod_{j=1}^N X(h_j)
\end{align}
where $N\geq 0$ and the functions $h_j$ are complex valued and smooth with compact support and $X(h_i)=\int h_i(z)X(z)d^2z$. Furthermore we require $\Re\int_\C h_0>2Q$, where $\Re z$ denotes the real part of $z$, so that $\langle|F|\rangle_{\hat g}<\infty$, as was discussed in the proof of Proposition \ref{correlationf}.  Let $\caV$ denote the linear span of such $F$. Then by the definition of the functional derivative Definition \ref{Fdef} we get $(e^{X(h_0)})_z = h_0(z) e^{X(h_0)}$ and $X(h_0)_z=h_0(z)$. We conclude
\begin{align}\label{F_x}
F_x=(h_0(x)+\sum_{j=1}^N h_j(x)X(h_j)^{-1})F.
\end{align}
Note that $\opn{supp}F$ is the union of the supports of the $h_j$'s. Then, if $z\in (\opn{supp}F)^c$
\begin{align}\label{wardnew1}
\langle T(z)F\rangle_{\hat g}=\langle \caT_zF\rangle_{\hat g}
\end{align}
where $\caT_z F$ is determined by plugging \eqref{F_x} into \eqref{wardnew} and integrating by parts. Explicitly
\begin{align}\label{wardnew2}
(\caT_zF)(X) &:= \Big(X(\tau_z h_0)+\rho_z h_0+\sum_{j=1}^N(X(\tau_z h_j)+\rho_zh_j)X(h_j)^{-1} \Big)F(X) \,, \\
(\tau_z h_j)(x)&:=-\partial_x \Big(h_j(x)\frac{1}{z-x} \Big) \,, \nonumber \\
\rho_z h_j&:= \tfrac{Q}{2}\int \Big(\frac{1}{z-x}\partial_x\sigma(x)+\frac{1}{(z-x)^2} \Big)h_j(x)d^2x\,. \nonumber
\end{align}
It is simple to check that $\caT_zF\in\caV$ and $\opn{supp}(\caT_zF)\subset \opn{supp}F$.

Now to get the Ward identities for the observables \eqref{domain} we proceed as in the proof of Proposition \ref{ward_prop} where the diffeomorphism and Weyl transformation laws from Proposition \ref{tinsert} were used. A direct consequence of Proposition \ref{weyl_anomaly} and the computations in the proof of Proposition \ref{tinsert} is that the transformation laws \eqref{emdiff} and \eqref{emweyl} in this case take the form
\begin{align*}
\langle\prod_{i=1}^n T_{\mu_{i}\nu_{i}}(z_{i}) F\rangle_{\psi^{\ast}g}&=\langle\prod_{i=1}^n \tilde T_{\mu_{i}\nu_{i}}(z_{i}) \psi_* F\rangle_{g} \,,\\
\langle\prod_{i=1}^n T_{\mu_{i}\nu_{i}}(z_{i}) F\rangle_{e^\varphi g} &= e^{cA(\varphi,g)} \langle\prod_{i=1}^n (T_{\mu_{i}\nu_{i}}(z_{i})+a_{\mu_{i}\nu_{i}}(\varphi,g,z_{i})) F(\cdot - \tfrac{Q}{2} \varphi )\rangle_{g}\,.
\end{align*}
Now we get the Ward identity almost identically as in the proof of Proposition \ref{ward_prop} by using the above transformation laws and the computations leading to \eqref{wardnew}. The result is
\begin{align}\nonumber
\langle T(z_{1})\prod_{i=2}^{n} T(z_{i})F\rangle_{g}&=
\sum_{k=2}^{n} \frac{c/2}{(z_1-z_k)^4} \langle\prod_{i\neq 1,k}^{n} T(z_{i})F)\rangle_{g} 
+\sum_{k=2}^{n}  \left(\frac{2}{(z_1-z_k)^2}+\frac{1}{z_1-z_k}\partial_{z_k}\right)\langle \prod_{i=2}^{n} T(z_{i})F\rangle_{g}\nonumber\\
 &+ \langle \prod_{i=2}^n T(z_i) \caT_{z_1} F \rangle_g,
\label{newwward}
\end{align}
whenever $F$ is as in \eqref{domain}, $z_i \neq z_j$ for $i \neq j$ and $z_i \notin \opn{supp} F$ for all $i$.

\section{Prospects: Representation theory}\label{prospects}

The Ward identities have well known algebraic consequences. To formulate these note that by iterating \eqref{newwward}  for $F\in \caV$ we get
\begin{align*}
\langle \prod_{i=1}^{n} T(z_{i})F\rangle_{g}=\langle G\rangle_{g}
\end{align*}
with $G\in \caV$. This can be viewed as an action $T(z):\caV\to\caV$. This action  gives rise to a representation of the {\it Virasoro algebra} on the physical Hilbert space $\caH$ which is canonically related to LCFT. To describe the latter (see \cite{Kup} for details) it is convenient to choose the metric $g=e^\sigma |dz|^2$  with
\begin{align}
\sigma=-2\ln (\bar zz) \mathbf{1}_{|z|\geq 1}
\label{oursigma}
\end{align}
i.e. the metric is Euclidean $ |dz|^2$ on the unit disc $\D$ and $|z|^{-4} |dz|^2$ on  $\D^c$. The curvature of $g$ is concentrated on the equator: $R_g(z) = 4 \delta(|z|-1)$. We denote the  GFF $X_g$ simply by $X$. It has the covariance
\begin{equation}\label{hatGformula11}
\E X(z)X(z')
=\ln\frac{1}{|z-z'|}+
(\ln|z|)\vee 0+(\ln|z'|)\vee 0.
\end{equation}
The function \eqref{rhogamma} becomes in this metric
$$\rho_{\gamma, g}=e^{\sigma}$$
and thus the  chaos measure is 
\begin{align}
dM_{\gamma}(z)= e^{\gamma X(z)-\frac{\gamma^2}{2}\E X(z)^2}dv(z)
\label{oursigma}
\end{align}
where the volume is $dv(z)=e^\sigma d^2z$. 

The Liouville expectation is then given by
\begin{align}\label{FL1} 
 \langle F \rangle
 = \int dc \, e^{ -2Qc}\E [F(X)
  e^{ -\mu e^{\gamma c} 
 M_{\gamma}(\C)}].
\end{align}

Let $\caF_\D$ consist of functions \eqref{domain} with ${\rm supp}\,F\subset\D$ (see Definition \ref{Fdef} for the definition of the support) and $\Re\int_\D f>Q$, where $\Re$ denotes the real part. Let $\theta: \rs \to \rs$ be the reflection $\theta(z) = 1/\bar z$. Given a $F\in \caF_\D$ define \begin{align*}
(\Theta F)(X) &= \overline{F(X \circ \theta)}
\end{align*}
where $X\circ\theta$ refers to the distribution $f\to X(f_\theta)$ where $f_\theta=|z^{-4}|(f\circ \theta)$. Obviously  $\opn{supp}(\Theta F)\subset\D^c$.

We define a sesquilinear form $(\cdot, \cdot): \mathcal{F}_\D \times \mathcal{F}_\D \to \C$ by 
\begin{align}\label{form}
(F,G) &= \langle \overline{(\Theta F)} G \rangle\,.
\end{align}
Since $\Re(\int_\C(f+\bar f_\theta))=2\Re\int_\D f>2Q$ this is well defined. Reflection positivity is the following statement:
\begin{proposition}\label{OS} The form  \eqref{form} is  positive semidefinite:
\begin{align}\label{scalar1}
(F,F)\geq 0.
\end{align}
\end{proposition}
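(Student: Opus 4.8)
\noindent\emph{Proof plan.} The statement \eqref{scalar1} is the Osterwalder--Schrader (reflection) positivity of the Liouville measure \eqref{FL1} for the equatorial reflection $\theta(z)=1/\bar z$, and the plan is to prove it by the classical route: a Markov decomposition of the GFF across the equator $\partial\D$, the product structure of the interaction, and the $\theta$--symmetry of all ingredients. The key preliminary observation, a one--line computation from \eqref{hatGformula11} using $|\theta(z)-\theta(z')|=|z-z'|/(|z||z'|)$ together with $\ln|z|+((-\ln|z|)\vee 0)=(\ln|z|)\vee 0$, is that the covariance of $X$ is $\theta$--invariant, so $X\circ\theta\eqd X$. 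Moreover $\theta$ is an isometry of the background of this section that fixes $\partial\D$ pointwise; hence it preserves the volume form $dv$, and by naturality of the chaos measure under isometries $\theta_\ast M_\gamma\eqd M_\gamma$ jointly with $X$.

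First I would record the Markov decomposition (the one non-elementary external input; cf. \cite{Dub} for the GFF and \cite{Ber} for the corresponding locality of Gaussian multiplicative chaos). On $\D$, where $\sigma\equiv 0$, the kernel \eqref{hatGformula11} is the whole--plane kernel $\ln|z-z'|^{-1}$, which splits as the Dirichlet Green function of $\D$ plus a harmonic remainder; by the $\theta$--invariance above the same holds on $\D^c$. Hence $X=X^0_++X^0_-+H$, with $X^0_+$ a Dirichlet GFF on $\D$, $X^0_-$ a Dirichlet GFF on $\D^c$, $H$ the harmonic extension into $\D$ and into $\D^c$ of a Gaussian boundary field $\xi$ on $\partial\D$, and $X^0_+,X^0_-,\xi$ mutually independent. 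Since the circle-average chaos is a local functional of the field and $\E M_\gamma(\partial\D)=v(\partial\D)=0$ (so $M_\gamma(\partial\D)=0$ a.s.), we get $M_\gamma(\C)=M_\gamma(\D)+M_\gamma(\D^c)$, with $M_\gamma(\D)$ and $F(X)$ (the latter because $\opn{supp}F$ is a compact subset of $\D$) measurable for $\sigma(X^0_+,\xi)$, and $M_\gamma(\D^c)$ measurable for $\sigma(X^0_-,\xi)$; likewise $\Theta F(X)=\overline{F(X\circ\theta)}$ depends only on $X|_{\D^c}$, because $\opn{supp}(f_\theta)\subset\D^c$ whenever $\opn{supp}f\subset\D$.

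Next I would condition \eqref{FL1} on $(\xi,c)$. The hypothesis $\Re\int_\D f>Q$ gives $\langle|\,\overline{F(X\circ\theta)}\,F(X)|\rangle<\infty$, so Fubini applies; unwinding the definition and using the reality of $X$, and then factoring $e^{-\mu e^{\gamma c}M_\gamma(\C)}=e^{-\mu e^{\gamma c}M_\gamma(\D)}\,e^{-\mu e^{\gamma c}M_\gamma(\D^c)}$ together with the conditional independence of $X^0_+$ and $X^0_-$ given $\xi$, one obtains, with $\Psi_+(\xi,c):=\E[\,F(X)\,e^{-\mu e^{\gamma c}M_\gamma(\D)}\mid\xi\,]$ and $\Psi_-(\xi,c):=\E[\,\overline{F(X\circ\theta)}\,e^{-\mu e^{\gamma c}M_\gamma(\D^c)}\mid\xi\,]$,
\begin{align*}
(F,F)=\int_\R dc\, e^{-2Qc}\,\E_\xi\big[\,\Psi_-(\xi,c)\,\Psi_+(\xi,c)\,\big]\,.
\end{align*}
Inside $\Psi_-$ I would change variables by the reflection: conditionally on $\xi$ (which $\theta$ fixes), the pair $(X|_{\D^c},M_\gamma|_{\D^c})$ has the law of the $\theta$--push-forward of $(X|_{\D},M_\gamma|_{\D})$. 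Since $\theta$ is an involution on distributions in the convention $f\mapsto f_\theta$, this gives $(X\circ\theta)|_\D=Y$ and $M_\gamma(\D^c)=M_\gamma^{(Y)}(\D)$, where $Y$ is distributed as $X|_\D$ given $\xi$; pulling the conjugate out of the expectation (legitimate since $\mu,c\in\R$ and $e^{-\mu e^{\gamma c}M_\gamma^{(Y)}(\D)}>0$) yields $\Psi_-(\xi,c)=\overline{\Psi_+(\xi,c)}$. Hence
\begin{align*}
(F,F)=\int_\R dc\, e^{-2Qc}\,\E_\xi\big[\,|\Psi_+(\xi,c)|^2\,\big]\ \geq\ 0\,,
\end{align*}
because $e^{-2Qc}>0$, which is \eqref{scalar1}.

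The only point beyond bookkeeping is making the second paragraph fully rigorous: the Markov property of the sphere GFF across the circle with the three summands independent, and, more delicately, the locality of the Gaussian multiplicative chaos --- that the measure obtained by circle-average renormalisation from the global field $X$ and then restricted to $\D$ (resp.\ $\D^c$) agrees a.s.\ with the chaos built intrinsically from $X|_\D=X^0_++H|_\D$ (resp.\ $X|_{\D^c}$). This must be invoked with care because $M_\gamma$ is only defined through a limit, but it is standard; once it is in place, all the measurability and reflection-covariance statements used above are immediate and the positivity follows as displayed.
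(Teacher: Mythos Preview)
The paper does not actually prove this proposition: its entire proof is the sentence ``For proof see \cite{Kup}.'' So there is no argument here to compare against; you have supplied what the paper outsources.

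Your sketch is the standard Osterwalder--Schrader argument for reflection positivity of a Gibbs-type measure with a local interaction, and it is essentially correct. The verification that the covariance \eqref{hatGformula11} is $\theta$-invariant is right (your identity $\ln|z|+((-\ln|z|)\vee 0)=(\ln|z|)\vee 0$ is exactly what makes it work), the check that $\theta$ is an isometry of the background metric is straightforward, and the Markov decomposition $X=X^0_++X^0_-+H$ with the three pieces independent is the right structural input. The factorisation of the Gibbs weight across $\partial\D$, conditional independence given the boundary field, and the identification $\Psi_-=\overline{\Psi_+}$ via the reflection are the heart of the matter, and you have them. You also correctly flag the one genuinely non-trivial technical point: that the chaos measure restricted to $\D$ is a measurable functional of $X|_\D$ alone (locality of GMC), which is what allows the conditional factorisation to go through for the interacting measure and not just the free one. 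This is indeed the content of the argument in \cite{Kup}, so your route and the one the paper defers to coincide.
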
 
For proof see  \cite{Kup}. It is simple to check that $(F,G) = \overline{(G,F)}$ by using a decomposition of the GFF described in Proposition 2.2 of \cite{Kup}, but we skip the computation to keep this discussion short.

We define the Hilbert space $\mathcal{H}$ of LCFT as the completion of $\mathcal{F}_\D/\mathcal{N}$, where
\begin{align*}
\mathcal{N} &= \{ F \in \mathcal{F}_\D \mid (F,F) = 0 \}\,.
\end{align*} 
Let $F$ and $G$ have supports in the disc $\D_r$ with $r<1$ and let $C_i$ be circles of radii $1>r_1>r_2\dots >r_k>r$. Define the objects
\begin{align}\label{lnprod}
(F,L_{n_1}\dots L_{n_k}G):= ( \tfrac{1}{2\pi i})^k\oint_{C_1}dz_1 \, z_1^{n_1+1}\dots \oint_{C_k} dz_k\, z_k^{n_k+1}\langle T(z_1)\dots T(z_k) (\Theta F)G \rangle.
\end{align}
Note that the left-hand side is just a notation and as such does not define the operators $L_{n_1} \hdots L_{n_k}: \mathcal{F}_\D \to \mathcal{F}_\D$. Even if we can guess what the definition of the $L_{n_i}$'s is supposed to be (which is readable from the above notation), showing that $L_{n_1} \hdots L_{n_k}$ maps $\mathcal{F}_\D$ into $\mathcal{F}_\D$ seems difficult.

As a consequence of iterating the Ward identities, the integrand on the right-hand side of \eqref{lnprod} is analytic in $z_i\in\D\setminus\D_r$, $z_i\neq z_j$, the right-hand side depends on the contours $C_i$ only through their order, meaning that the value of the integral possibly changes if one swaps $C_i$ and $C_j$ for $i \neq j$. The Ward identities then imply (see \cite{Gaw}) that the commutator $[L_n, L_m] := L_n L_m - L_m L_n$ is given by
\begin{align}\label{virasoroa}
[L_n,L_m]=(n-m)L_{n+m}+\frac{c}{12}(n^3-n)\delta_{n,-m}\,,
\end{align}
where $\delta_{i,j}$ is the Kronecker delta, in the obvious sense as a relation among the objects \eqref{lnprod}. However we would like to realize the $L_n$'s as operators acting on a suitable dense domain in $\caH$ with the adjoints satisfying $L_n^*=L_{-n}$ and construct a representation of the Virasoro algebra \eqref{virasoroa}. This will be the subject of the forthcoming publication \cite{KO}.

\section{Appendix}


We collect here some notations from Riemannian geometry, see for example \cite{Jost}. Let $(\Sigma,g)$ be a smooth compact two-dimensional Riemannian manifold. In this section we use the Einstein summation convention. Given a local coordinate $x=(x^1,x^2)$ we denote $\partial_\alpha=\frac{\partial}{\partial x^\alpha}$. Hence vectors are given as $u=u^\alpha\partial_\alpha$ and covectors as $\lambda=\lambda_\alpha dx^\alpha$. The space of all diffeomorphisms $\psi: \Sigma \to \Sigma$ (smooth maps with smooth inverse) is denoted by $\rm{Diff}(\Sigma)$.

The Riemannian metric $g$ is given by  
\begin{align*}
g(x)=g_{\alpha\beta}(x)dx^\alpha\otimes dx^\beta
\end{align*}
where  $g_{\alpha\beta}(x)$ is a smooth function taking values in positive matrices. The metric $g$ determines a volume measure $v_g$ on $\Sigma$ given in local coordinates by
\begin{align*}
dv_g(x)=\sqrt{\det g(x)}d^2x
\end{align*}
where $d^2x$ is the Lebesgue measure on $\R^2$.
We denote the scalar product by
\begin{align*}
(f,h)_g &= \int \overline{f(x)} h(x) \, dv_g(x),
\end{align*}
and then $L^2(\Sigma,g) := \{ f: \Sigma \to \C \mid (f,f)_g < \infty\}$.

The group of smooth diffeomorphisms ${\rm Diff}(\Sigma)$ acts on the space of smooth metrics by $g\to\psi^\ast g$ where the pullback metric is given in coordinates as
\begin{align}\label{pullback}
(\psi^{\ast}g)(x)=D\psi(x)^{T}g(\psi(x))D\psi(x).
\end{align}
Then we have the change of variables formula
\begin{align*}
(f,h)_g=(f\circ\psi,h\circ\psi)_{\psi^\ast g}.
\end{align*}

We say that two Riemannian metrics $g'$ and $g$ belong to the same conformal class if there exists $\varphi \in C^\infty(\Sigma)$ and $\psi \in \opn{Diff}(\Sigma)$ such that $g'=e^\varphi \psi^* g$. We say that $g'$ and $g$ are conformally equivalent if there exists $\varphi \in C^\infty(\Sigma)$ such that $g' = e^\varphi g$.

Let us denote the inverse of the matrix $g(x)$ by $g^{\alpha\beta}(x)$. The reason for the upper indices is that the tensor field
$ g^{\alpha\beta} \partial_\alpha \otimes \partial_\beta
$ is invariantly defined. 
It allows us to define 
the Dirichlet form
\begin{align}\label{dirichlet_form}
\mathcal{D}_g(f,h) &:=  \int_\Sigma g^{\alpha \beta} \overline{\partial_\alpha f} \partial_\beta h \, dv_g
\end{align}
and 
the Sobolev space $H^1(\Sigma,g)=\{f\in L^2(\Sigma,g):\mathcal{D}_g(f,f)<\infty\}$. The Dirichlet form gives rise to a positive self-adjoint operator $-\Delta_g$   by
\begin{align*}
\mathcal{D}_g(f,h) &= -(f,\Delta_g h)_g\,.
\end{align*}
On smooth functions $f$  by integration by parts one gets the formula for the Laplace--Beltrami operator as
\begin{align*}
\Delta_g f&= \frac{1}{\sqrt{\det g}}  \partial_\alpha( \sqrt{\det g} g^{\alpha \beta}  \partial_\beta f).
\end{align*}
The Dirichlet form satisfies the diffeomorphism invariance
\begin{align*}
\mathcal{D}_g(f,h)=\mathcal{D}_{\psi^*g}(f\circ\psi, h\circ\psi) \,,
\end{align*}
which implies
\begin{align}\label{lb_covariance}
(\Delta_g f)\circ\psi &= \Delta_{\psi^*g} (f\circ\psi)\,.
\end{align}
The Laplace--Beltrami operator has a discrete spectrum $(\lambda_{g,n})_{n=0}^\infty$, $0=\lambda_0 < \lambda_1 \leq \lambda_2 \dots$, and a complete (in $L^2(\Sigma,g)$) set of smooth eigenfunctions $(e_{g,n})_{n=0}^\infty$ with $e_{g,0}$ the constant function. 
The property \eqref{lb_covariance} implies
\begin{align}\label{eigen_covariance}
e_{g,n} \circ\psi&= e_{\psi^* g,n}\,, \nonumber \\
\lambda_{g,n} &= \lambda_{\psi^*g, n}\,.
\end{align}
The zero-mean Green's function is defined by the formula
\begin{align*}
G_g(x,y) &= \sum_{n=1}^\infty \frac{e_{g,n}(x) e_{g,n}(y)}{\lambda_{g,n}}\,.
\end{align*}
Then \eqref{eigen_covariance} implies
\begin{align}\label{green_covariance}
G_{\psi*g}(x,y) &= G_g(\psi(x),\psi(y))\,.
\end{align}

We can view the diffeomorphisms also passively as changes of coordinates. Then locally we can find a coordinate so that $g_{\alpha\beta}=e^\sigma\delta_{\alpha\beta}$ with a smooth $\sigma$ (the proof is a small variation of Proposition \ref{beltrami_solu}). An atlas of such coordinates defines a {\it complex structure} on $\Sigma$ since the transition functions are easily seen to be analytic. Indeed, if on $\R^2$ we have $g=\psi^\ast h$ with $g$ and $h$ diagonal matrices then $\psi(x^1,x^2)=(u(x^1,x^2),v(x^1,x^2))$ where $u,v$ satisfy the Cauchy-Riemann equations. We can then introduce  complex coordinates $z = x^1 + i x^2\,,
\bar z = x^1 - i x^2\,$ and write tensors using them. E.g. $T=T_{\alpha\beta}dx^\alpha \otimes dx^\beta$ becomes
\begin{align}\label{ccoordinates}
T &= T_{zz} dz \otimes dz + T_{\bar z \bar z} d \bar z \otimes d \bar z  + T_{z \bar z} dz \otimes  d \bar z + T_{\bar z z} d\bar z \otimes dz\,,
\end{align}
where
\begin{align*}
T_{zz} &= \tfrac{1}{4} (T_{11} - T_{22} - 2i T_{12}), \\
T_{z \bar z} &=T_{\bar z  z}= \tfrac{1}{4}(T_{11} + T_{22}),
\end{align*}
and $T_{\bar z \bar z} = \overline{T_{zz}}$.
Furthermore, for a $2 \times 2$ symmetric matrix $f$ the formulae for the determinant and the trace in complex coordinates are
\begin{align}\label{det}
\det f &= 4 f_{z \bar z}^2 - 4 f_{zz} f_{\bar z \bar z}\,, \\
\opn{tr} f &= 4 f_{z \bar z}\,.
\end{align}
See for example Section 2.9.1 in \cite{astala} for some other basic properties of the complex coordinates.

We denote the scalar curvature of $g$ by $R_g$. It is defined by contracting the Ricci tensor $R_{\mu \nu}$
\begin{align}\label{scalar curvature}
R_g &:= g^{\mu \nu} R_{\mu \nu}\,,
\end{align}
where the Ricci tensor comes from contracting the Riemann tensor
\begin{align*}
R_{\mu \nu} &= R^{\alpha}_{\mu \alpha \nu}\,.
\end{align*}
Finally, the Riemann tensor is defined by the formula
\begin{align*}
R^{\alpha}_{\beta \gamma \delta} &= \partial_\gamma \Gamma^\alpha_{\beta \delta} - \partial_\delta \Gamma ^\alpha_{\beta \gamma} + \Gamma^\alpha_{\lambda \gamma} \Gamma^{\lambda}_{\beta \delta} - \Gamma^\alpha_{\lambda \delta} \Gamma^{\lambda}_{\beta \gamma}\,,
\end{align*}
where the $\Gamma$'s are the Christoffel symbols
\begin{align*}
\Gamma^\alpha_{\beta \gamma} &= \hf g^{\alpha \delta}(\partial_\beta g_{\delta \gamma} + \partial_\gamma g_{\delta \beta} - \partial_\delta g_{\beta \gamma})\,.
\end{align*}
As $R_g$ is defined by contractions of the metric and its derivatives, under diffeomorphisms it transforms as
\begin{align}\label{curvature_covariance}
\psi^* R_g &= R_{\psi^*g}\,.
\end{align}

In coordinates where $g_{\alpha\beta}= e^\sigma \delta_{\alpha\beta}$
 we have the formula
\begin{align*}
R_g &=  -4e^{-\sigma} \partial_z \partial_{\bar z} \sigma\,.
\end{align*}
From this  together with 
\eqref{curvature_covariance} and the existence of conformal coordinates it is easy to infer that in general
\begin{align}\label{curvature_weyl}
R_{e^\varphi g} &= e^{-\varphi}(R_g - \Delta_g \varphi)\,.
\end{align}

As an application of these definitions we have the Lemma used in the text:

\begin{lemma}\label{variationlemma}
Let 
 \begin{align*}
 F(g)=\int h R_gdv_g.
\end{align*}
Then
 \begin{align*}
 \caF(z):=\frac{\delta}{\delta g^{zz}(z)}F(g)|_{g=e^{\sigma} \delta}=-\partial_z^2h+\partial_z\sigma \partial_zh
\end{align*}
in coordinates where $g_{\alpha\beta}= e^\sigma \delta_{\alpha\beta}$.
\end{lemma}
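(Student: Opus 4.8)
The plan is to compute the variation directly from the classical two–dimensional identity for the weighted Einstein–Hilbert functional. Work in the conformal coordinates where $\hat g_{\alpha\beta}=e^{\sigma}\delta_{\alpha\beta}$, so that in complex coordinates $\hat g^{z\bar z}=2e^{-\sigma}$ and $\hat g^{zz}=\hat g^{\bar z\bar z}=0$; following Definition \ref{functional_derivative} I would perturb the single component $g^{zz}=\hat g^{zz}+\epsilon f$ with $f\in C^\infty_0(\C)$ while keeping $g^{z\bar z}$, $g^{\bar z\bar z}$ fixed, i.e. perturb the inverse metric by the symmetric tensor $f^{\mu\nu}$ whose only non-vanishing complex component is $f^{zz}=f$ (the would-be $g^{\bar z\bar z}$-variation is invisible to the holomorphic $\partial_\epsilon$). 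Since $\hat g_{zz}=0$ this perturbation is trace free, $\hat g_{\mu\nu}f^{\mu\nu}=0$. By the Palatini identity together with $\partial_\epsilon\sqrt{\det g}=-\tfrac12\sqrt{\det g}\,g_{\mu\nu}\,\partial_\epsilon g^{\mu\nu}$ one has
\begin{align*}
\partial_\epsilon\big|_{0}\big(R_g\,dv_g\big)=\big(\hat\nabla_\lambda V^\lambda\big)\,dv_{\hat g}\,,\qquad V^\lambda=-\hat\nabla_\mu f^{\mu\lambda}+\hat\nabla^\lambda\big(\hat g_{\mu\nu}f^{\mu\nu}\big)\,,
\end{align*}
where $\hat\nabla$ is the Levi–Civita connection of $\hat g$; the right-hand side is a total divergence precisely because in two dimensions the Einstein tensor $R_{\mu\nu}-\tfrac12 g_{\mu\nu}R$ vanishes identically (the infinitesimal Gauss–Bonnet statement, consistent with $\int R_g\,dv_g$ being independent of $g$), and the sign and normalisation can be checked against \eqref{curvature_weyl} on a pure trace (Weyl) perturbation. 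By trace-freeness the second term in $V^\lambda$ drops, and integrating by parts twice against $h$ (no boundary terms, $f$ being compactly supported) gives
\begin{align*}
\partial_\epsilon\big|_{0}F(g)=\int_\C h\,\hat\nabla_\lambda V^\lambda\,dv_{\hat g}=-\int_\C \big(\hat\nabla_\mu\hat\nabla_\lambda h\big)\,f^{\mu\lambda}\,dv_{\hat g}\,.
\end{align*}

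Next I would evaluate the contraction in complex coordinates. Since the only non-zero component of $f$ is $f^{zz}=f$, the last integral collapses to $-\int_\C (\hat\nabla_z\hat\nabla_z h)\,f\,dv_{\hat g}$; and for $\hat g=e^{\sigma}\delta$ the only non-vanishing Christoffel symbols are $\Gamma^z_{zz}=\partial_z\sigma$, $\Gamma^{\bar z}_{\bar z\bar z}=\partial_{\bar z}\sigma$, so that $\hat\nabla_z\hat\nabla_z h=\partial_z^2 h-\Gamma^z_{zz}\partial_z h=\partial_z^2 h-\partial_z\sigma\,\partial_z h$. Hence $\partial_\epsilon\big|_{0}F(g)=-\int_\C f(z)\big(\partial_z^2 h-\partial_z\sigma\,\partial_z h\big)\,dv_{\hat g}(z)$, and reading off the functional derivative with respect to $g^{zz}$ (in the volume-form normalisation for metric variations, as used when Lemma \ref{variationlemma} is invoked in the proof of Proposition \ref{ward_prop} and consistently with \eqref{aab_def}) gives $\caF(z)=-\partial_z^2 h+\partial_z\sigma\,\partial_z h$, as claimed.

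The computation has no conceptual content; the only delicate point — and where I would be most careful — is the bookkeeping when translating between the real-coordinate variational identity, the complex-coordinate perturbation of the single component $g^{zz}$, and the normalisation of the functional derivative (the factor $e^{\sigma}$ relating $dv_{\hat g}$ to $d^2z$, and the use of the holomorphic $\partial_\epsilon$). A slightly longer but entirely self-contained alternative, avoiding the Palatini formula, is to write $g=e^{\varphi}\psi^\ast\hat g$ as in Section \ref{beltrami_section}, use \eqref{curvature_weyl} and \eqref{lb_covariance} to obtain $R_g\,dv_g=\psi^\ast\!\big((R_{\hat g}-\Delta_{\hat g}(\varphi\circ\psi^{-1}))\,dv_{\hat g}\big)$, and then differentiate in $\epsilon$ using the first-order expansions of $\psi$ and $\varphi$ recorded in \eqref{uuepsil} and \eqref{varhiepsil}; after two integrations by parts the term involving $R_{\hat g}=-4e^{-\sigma}\partial_z\partial_{\bar z}\sigma$ cancels a matching term and one is again left with $-\partial_z^2 h+\partial_z\sigma\,\partial_z h$.
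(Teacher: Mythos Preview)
Your proof is correct and takes a genuinely different route from the paper's. The paper argues by Weyl-rescaling the perturbed metric to a perturbation of the \emph{flat} metric: setting $\tilde g_\epsilon = e^{-\sigma}g_\epsilon$ one has $R_{g_\epsilon}\,dv_{g_\epsilon}=(R_{\tilde g_\epsilon}-\Delta_{\tilde g_\epsilon}\sigma)\,dv_{\tilde g_\epsilon}$ by \eqref{curvature_weyl}, and then the three pieces $\partial_\epsilon|_0 R_{\tilde g_\epsilon}$, $\partial_\epsilon|_0 v_{\tilde g_\epsilon}$, $\partial_\epsilon|_0 \Delta_{\tilde g_\epsilon}\sigma$ are computed by hand at the flat background and integrated by parts. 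You instead invoke the general Palatini variation of $R_g\,dv_g$ together with the identical vanishing of the two-dimensional Einstein tensor, reducing the computation to a single covariant Hessian $\hat\nabla_z\hat\nabla_z h$ which you evaluate via $\Gamma^z_{zz}=\partial_z\sigma$. Your argument is more covariant and makes the two-dimensional topological nature of $\int R_g\,dv_g$ manifest; the paper's is more self-contained (no Palatini formula needed) and stays entirely within the conformal-coordinate machinery already set up. Your closing alternative via $g=e^{\varphi}\psi^*\hat g$ is yet a third route, distinct from both; the paper does not use the Beltrami decomposition here.
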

\proof
By definition
\begin{align*}
\partial_{\epsilon}\big|_{0} F(g_\eps)=\int \caF(z)\phi(z) dv_g(z)
\end{align*}
where $g_\eps^{z\bar z}=2e^{-\sigma}$ and  $g_\eps^{z z}=\eps \phi$. Let $\tilde g_\eps=e^{-\sigma}g_\eps$. Then
$$
R_{ g_\eps}dv_{g_\eps}=(R_{ \tilde g_\eps}-\Delta_{\tilde g_\eps}\sigma)dv_{\tilde g_\eps}
$$
and  $\tilde g_\eps^{z\bar z}=2$ and  $\tilde g_\eps^{z z}=\eps e^\sigma \phi$.  
We have
\begin{align}
\partial_{\epsilon}\big|_{0} R_{\tilde g_\eps}&=-\partial_z^2 (e^\sigma \phi) 
\label{riccci1111}
\end{align}
$\partial_{\epsilon}\big|_{0}v_{\tilde g_\eps}=0$ and 
$$\partial_{\epsilon}\big|_{0}\Delta_{\tilde g_\eps}\sigma=\partial_z(e^\sigma \phi\partial_z\sigma).$$
Hence
\begin{align*}
\partial_{\epsilon}\big|_{0} F(g_\eps)&=\int h (-\partial_z^2 (e^\sigma \phi)-\partial_z(e^\sigma \phi\partial_z\sigma))d^2z\\&=\int (-\partial_z^2h+\partial_z\sigma \partial_zh)\phi dv_g
\end{align*}
 which yields the claim. \qed

\end{document}